\documentclass[11pt]{amsart}
\usepackage[T1]{fontenc}
\usepackage[british]{babel}
\IfFileExists{lmodern.sty}{\usepackage{lmodern}}{}
\usepackage{microtype}
\IfFileExists{lmodern.sty}{}{\microtypesetup{expansion=false}}
\usepackage{mathtools,amssymb,mathrsfs}
\usepackage{booktabs}
\usepackage[numbers,sort&compress]{natbib}

\usepackage[margin=1in]{geometry}
\usepackage[hidelinks]{hyperref}
\usepackage{aliascnt}
\usepackage[capitalise,nameinlink,noabbrev]{cleveref}
\crefname{subsection}{Section}{Sections}
\Crefname{subsection}{Section}{Sections}

\newcommand{\R}{\mathbb R}
\newcommand{\C}{\mathbb C}
\newcommand{\N}{\mathbb N}
\newcommand{\Z}{\mathbb Z}
\newcommand{\cH}{\mathcal H}
\newcommand{\cA}{\mathcal A}
\newcommand{\cE}{\mathcal E}
\newcommand{\cL}{\mathcal L}
\newcommand{\Tr}{\operatorname{Tr}}
\newcommand{\EP}{\operatorname{EP}}
\newcommand{\Ent}{\operatorname{Ent}}
\newcommand{\Sp}{\operatorname{Sp}}

\newcommand{\fhat}{\widehat f_{\!\mathscr M}}
\newcommand{\dd}{\,\mathrm d}
\newcommand{\doi}[1]{\href{https://doi.org/#1}{doi:\nolinkurl{#1}}}

\theoremstyle{plain}
\newtheorem{theorem}{Theorem}[section]
\newaliascnt{proposition}{theorem}
\newtheorem{proposition}[proposition]{Proposition}
\aliascntresetthe{proposition}
\crefname{proposition}{Proposition}{Propositions}
\Crefname{proposition}{Proposition}{Propositions}
\newaliascnt{lemma}{theorem}
\newtheorem{lemma}[lemma]{Lemma}
\aliascntresetthe{lemma}
\crefname{lemma}{Lemma}{Lemmas}
\Crefname{lemma}{Lemma}{Lemmas}
\newaliascnt{corollary}{theorem}
\newtheorem{corollary}[corollary]{Corollary}
\aliascntresetthe{corollary}
\crefname{corollary}{Corollary}{Corollaries}
\Crefname{corollary}{Corollary}{Corollaries}
\theoremstyle{definition}
\newaliascnt{definition}{theorem}
\newtheorem{definition}[definition]{Definition}
\aliascntresetthe{definition}
\crefname{definition}{Definition}{Definitions}
\Crefname{definition}{Definition}{Definitions}
\newaliascnt{assumption}{theorem}

\aliascntresetthe{assumption}
\crefname{assumption}{Assumption}{Assumptions}
\Crefname{assumption}{Assumption}{Assumptions}
\theoremstyle{remark}
\newaliascnt{remark}{theorem}
\newtheorem{remark}[remark]{Remark}
\aliascntresetthe{remark}
\crefname{remark}{Remark}{Remarks}
\Crefname{remark}{Remark}{Remarks}
\numberwithin{equation}{section}

\title[Hypercontractive criteria for entropy decay]{Hypercontractivity and entropy decay for infinite-dimensional quantum Gibbs samplers}
\author{Simon Becker}
\address{Bocconi University, Milan, Italy}
\email{simon.becker@unibocconi.it}
\author{Cambyse Rouz\'e}
\address{Inria, T\'el\'ecom Paris--LTCI, Institut Polytechnique de Paris, 91120 Palaiseau, France}
\email{cambyse.rouze@inria.fr}
\author{Robert Salzmann}
\address{Department of Physics, RWTH Aachen, Otto-Blumenthal-Strasse 20, 52074 Aachen, Germany}
\email{robert.salzmann@rwth-aachen.de}

\keywords{Quantum Markov semigroup, hypercontractivity, modified logarithmic Sobolev inequality, relative entropy, Gibbs sampler}

\begin{document}
\begin{abstract}
We derive rates of exponential relative-entropy decay for a broad class of infinite-dimensional quantum Gibbs samplers. A key technical ingredient is a perturbative stability theory for hypercontractive estimates and logarithmic Sobolev inequalities. For suitably chosen dissipative couplings, we obtain exponential entropy decay for full multimode Bose--Hubbard samplers with arbitrary hopping and repulsive density interactions. For harmonically confined quantum Coulomb gases and molecular systems, we prove exponential entropy decay for every fixed finite-rank low-energy truncation of the interaction. Finally, we exhibit a periodic charging model with a positive spectral gap but no finite-time $L^2$-to-$L^p$ bound for any $p>2$.
\end{abstract}

\maketitle
\enlargethispage{4pt}

\section{Introduction and main results}

Thermal-state preparation is a basic task in quantum statistical mechanics
and quantum computation.  A Gibbs sampler drives an initial state towards thermal equilibrium
through a dissipative evolution.  A spectral gap gives exponential decay of
deviations from equilibrium in a weighted $L^2$ norm.  In infinite
dimension, this estimate requires initial data in the weighted $L^2$
space.  Relative entropy measures distinguishability from equilibrium
and, for a Gibbs state, equals the excess free energy divided by the
temperature whenever these quantities are finite.  Exponential
relative-entropy estimates can also apply to initial states outside the
weighted $L^2$ space and control convergence in trace distance.

Boundedness of the semigroup from $L^2$ to $L^p$, $p>2$, connects $L^2$ relaxation with relative-entropy decay. A finite $L^2$-to-$L^p$ bound is referred to as hyperboundedness, while a bound with operator norm at most one is the contractive form of hypercontractivity.
Both express an improvement in integrability after a finite time.
This smoothing property is not implied by a spectral gap alone.  In classical probability, the
connection between hypercontractivity, logarithmic Sobolev inequalities,
and convergence to equilibrium has been studied extensively since
Gross's foundational work; see \cite{ledoux2004logarithmic}.
These inequalities give estimates for diffusion and Markov-chain mixing.
For reversible quantum dynamics, we show that one bounded
$L^2$-to-$L^q$ map gives an explicit exponential
relative-entropy bound. In the atomic setting considered here, this
endpoint already implies compact resolvent and, under primitivity, a
positive spectral gap. Its value enters the rate when the endpoint norm
exceeds one.

Efficient finite-dimensional quantum Gibbs samplers satisfying exact KMS detailed balance were developed in \cite{ChenKastoryanoGilyen2023,DingLiLin2025}, and the infinite-dimensional construction of \cite{BeckerRouzeSalzmannToAppearcmp} extends this framework to systems governed by unbounded Hamiltonians and jump operators.
We study when these samplers improve integrability and how this
gives entropy decay. We prove sufficient conditions and examples of
failure, showing what must be added to a spectral-gap estimate.  The behavior depends on the choice of jumps, even when the
Hamiltonian is fixed.

\subsection{Main results}\label{subsec:main-results}
We first state two criteria that turn a bounded $L^2$-to-$L^q$ map
into exponential relative-entropy decay.  A state is a positive trace-class operator of trace
one; it is faithful if its kernel is zero.  Let $\Phi_t$, $t\ge0$, be a
strongly continuous semigroup of completely positive, trace-preserving
linear maps on trace-class operators, and let $\sigma$ be a faithful
invariant state.  Write $T_t$ for the dual evolution of observables.
For $1\le p<\infty$, the equilibrium-weighted space $L^p(\sigma)$ is
the completion of bounded observables in the norm
\[
 \|X\|_{p,\sigma}
 =\|\sigma^{1/(2p)}X\sigma^{1/(2p)}\|_p,
\]
where $\|\cdot\|_p$ is the Schatten $p$-norm.
The space $L^\infty(\sigma)$ consists of bounded observables with the
operator norm.  We assume that $T_t$ extends consistently to positive
contractions on these spaces, is self-adjoint on $L^2(\sigma)$, and
has only scalar observables in its $L^2$ fixed space.  These are the
reversibility and primitivity assumptions used below; reversibility
is in the Kubo--Martin--Schwinger (KMS) sense.

With $\Tr$ denoting the operator trace, the spectral gap $\lambda$ is
the optimal exponential contraction rate on observables
$X\in L^2(\sigma)$ of equilibrium mean zero:
\[
 \|T_tX\|_{2,\sigma}\le e^{-\lambda t}\|X\|_{2,\sigma},
 \qquad \Tr(\sigma X)=0.
\]
For $X\in L^2(\sigma)$, the mean $\Tr(\sigma X)$ denotes its
continuous extension from bounded observables.
The gap may vanish in infinite dimension.  For a state $\rho$, its
relative entropy is
\[
 D(\rho\|\sigma)=\Tr\bigl(\rho(\log\rho-\log\sigma)\bigr)
 \in[0,\infty],
\]
interpreted in the extended sense when the separate traces are not
finite.  The spaces and operator domains are specified in \cref{sec:proof-main}.

\Cref{thm:main} uses the Stroock--Varopoulos comparison: entropy
production is at least four times the Dirichlet energy of the square
root of the state. The domain requirement is included in
\eqref{eq:SV-assumption} below.

\begin{theorem}\label{thm:main}
In the reversible setting above, assume the Stroock--Varopoulos
comparison \eqref{eq:SV-assumption}.  Suppose that, for some $s>0$
and $C\ge1$,
\begin{equation}\label{eq:24-main}
 \|T_s:L^2(\sigma)\to L^4(\sigma)\|\le C.
\end{equation}
If $C>1$, let $\lambda>0$ denote the spectral gap, which is
positive by the endpoint bound and primitivity. Then every state $\rho$
satisfies, for $t\ge0$,
\[
 D(\Phi_t(\rho)\|\sigma)
 \le e^{-\alpha_{\rm HC}t}D(\rho\|\sigma),
 \qquad
 \alpha_{\rm HC}
 =\left(s+\frac{\log C+\frac12}{\lambda}\right)^{-1}.
\]
If $C=1$, the same conclusion holds with $\alpha_{\rm HC}=s^{-1}$
without a gap assumption.
\end{theorem}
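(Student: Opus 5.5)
The plan is to go through a (weak) logarithmic Sobolev inequality: derive an entropy–energy inequality for square roots of states, and then transfer it to entropy production via the Stroock--Varopoulos comparison \eqref{eq:SV-assumption}. We write $\cE(X)=\lim_{t\to0}t^{-1}\langle X,(1-T_t)X\rangle_\sigma$ for the Dirichlet form. For a state $\rho$ we use the symmetric square-root density $X_\rho=\sigma^{-1/4}\rho^{1/2}\sigma^{-1/4}$ (or its KMS analogue), so that $\|X_\rho\|_{2,\sigma}=1$.

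\textbf{Step 1: entropy–energy inequality.} The target is
\[
 \Ent_2(X)\le \frac{s}{?}\,\cE(X)+\Bigl(\log C+\tfrac12\Bigr)\|X-\Tr(\sigma X)\|_{2,\sigma}^2\quad\text{(up to normalisation)},
\]
where $\Ent_2$ is the $L^2$-entropy, which equals $D(\rho\|\sigma)$ up to a standard comparison. The argument is Gross's. Set $q(t)=2+2t/s$ and write $\|T_tX\|_{q(t),\sigma}\le C^{t/s}\|X\|_{2,\sigma}$ for $t\in[0,s]$, obtained by Riesz--Thorin/Stein interpolation of the noncommutative weighted $L^p$ spaces between $T_0=\mathrm{id}$ on $L^2$ and \eqref{eq:24-main}. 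The $L^4$ endpoint is exactly $q(s)=4$. Differentiating at $t=0$ via the quantum derivative formula for $\|\cdot\|_{p,\sigma}^p$ gives
\[
 \frac1s\Ent_2(X)\le \cE(X)+\frac{\log C}{s}\|X\|_{2,\sigma}^2 .
\]
If $C=1$ this is a tight log-Sobolev inequality with constant $s$. If $C>1$ we make it tight by subtracting the mean and applying a Rothaus-type lemma, $\Ent_2(X)\le\Ent_2(X-m)+2\|X-m\|_2^2$; its noncommutative version, with constant $\tfrac12$ per $L^2$-norm after normalisation, explains the $+\tfrac12$. The Poincaré inequality $\lambda\|X-m\|_2^2\le\cE(X)$ then gives $\Ent_2(X)\le (s+(\log C+\tfrac12)/\lambda)\,\cE(X)$.

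\textbf{Step 2: from the log-Sobolev inequality to entropy decay.} Apply Step 1 to $X=X_\rho$, with $\Ent_2(X_\rho)$ compared to $D(\rho\|\sigma)$. The relevant fact is that the $L^2$-entropy of the square root dominates $D(\rho\|\sigma)$; this follows from operator concavity/Araki–Lieb-type inequalities for the symmetric embedding, and an explicit factor of $2$ may cancel the $4$ below. The Stroock--Varopoulos comparison \eqref{eq:SV-assumption} gives $\EP(\rho)\ge 4\cE(X_\rho)$, and combining the two yields $D(\rho\|\sigma)\le \alpha_{\rm HC}^{-1}\EP(\rho)$. Since $\frac{d}{dt}D(\Phi_t\rho\|\sigma)=-\EP(\Phi_t\rho)$, Grönwall gives the claimed decay.

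\textbf{Step 3: infinite-dimensional justification.} First prove the result for $\rho$ with $\rho\le c\sigma$, a set for which $X_\rho$ is bounded and lies in the form domain after a short regularisation by $\Phi_\epsilon$. Then pass to general $\rho$ by lower semicontinuity of relative entropy and by monotonicity under $\Phi_t$, approximating $\rho$ by $\sigma$-dominated states whose relative entropies converge. Positivity of $\lambda$ when $C>1$ is given in the statement, so we only use the gap.

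\textbf{Main obstacle.} The hardest step should be differentiating the interpolated norm bound at $t=0$ in the weighted noncommutative setting with unbounded $\sigma$, together with the noncommutative Rothaus constant. For these I would first prove everything for $X$ in a core of bounded analytic elements, where the Olkiewicz–Zegarlinski differentiation formula applies, and then extend by closability of $\cE$.
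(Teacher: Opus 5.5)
\textbf{Gap: the step that fixes the rate is wrong.} Your architecture is the paper's: Gross differentiation of the endpoint, then Rothaus, Poincar\'e, Stroock--Varopoulos, and finally the MLSI/entropy-decay equivalence. The problem is the interpolation curve.

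Stein interpolation of $e^{-szL}$ between the unitary boundary on $L^2$ and the endpoint at time $s$ makes $1/q$, not $q$, affine in $t$. Concretely, $\frac1{q(t)}=\frac{1-t/s}{2}+\frac{t/s}{4}$, so $q(t)=4s/(2s-t)$ and $q'(0)=1/s$. Your linear curve $q(t)=2+2t/s$ lies strictly above this on $(0,s)$ and has $q'(0)=2/s$. The bound $\|T_tX\|_{q(t),\sigma}\le C^{t/s}\|X\|_{2,\sigma}$ along your curve is therefore not a consequence of interpolation, and it fails in general. For the Gaussian Ornstein--Uhlenbeck semigroup with $C=1$ and $s=\frac12\log3$, your curve exceeds Nelson's threshold $1+e^{2t}$ on all of $(0,s)$, where $T_t$ is unbounded from $L^2$ to $L^q$. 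Differentiating along it would give $\Ent(f^2)\le(\log3)\,\cE(f)$, which beats Gross's sharp constant $2$. So your differentiation doubles the coefficient of the entropy term.

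\textbf{The constants are fitted rather than derived.} Your normalisation of $\Ent_2$ is never fixed (``$\frac{s}{?}$'', ``up to normalisation'', ``a factor of $2$ may cancel the $4$''). Your displayed defective inequality $\frac1s\Ent_2\le\cE+\frac{\log C}{s}\|X\|^2$ and your Rothaus constant $\frac12$ are consistent with each other only if two things hold. First, $\Ent_2$ must be one quarter of the entropy that equals $D(\rho\|\sigma)$ at $\sqrt\rho$. Second, $q'(0)$ must be $1/s$, which is not what your curve gives. In that normalisation $\Ent_2(X_\rho)=D/4$ does not ``dominate'' $D$, and the factor to cancel is exactly $4$, not $2$.

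\textbf{How to fix the bookkeeping.} Set $\Ent_{2,\sigma}(Y)=4\|Y\|_2^2\,\partial_p\log\|Y\|_{p\leftarrow2,\sigma}|_{p=2}$. This equals half the sum of $D(Y^\dagger Y/\|Y\|_2^2\|\sigma)$ and $D(YY^\dagger/\|Y\|_2^2\|\sigma)$, so $\Ent_{2,\sigma}(\sqrt\rho)=D(\rho\|\sigma)$ is an identity and needs no Araki--Lieb input. Then:
\begin{itemize}
\item $q'(0)=1/s$ gives $\Ent_{2,\sigma}(X)\le4s\cE(X)+4\log C\|X\|_2^2$.
\item Applying this to the centred $X_0$ and using Rothaus gives the defect $(4\log C+2)\|X_0\|_2^2$.
\item Poincar\'e yields $c_{\rm LS}=4s+(4\log C+2)/\lambda$.
\item Stroock--Varopoulos gives $\alpha_{\rm HC}=4/c_{\rm LS}$.
\end{itemize}

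\textbf{Secondary issues.} Work on $D(\cL)$ intersected with the two-sided class $\Gamma^{-1}\sigma\le\rho\le\Gamma\sigma$. With only $\rho\le c\sigma$, the difference $\log\rho-\log\sigma$ need not be bounded, so $\EP(\rho)$ may be undefined. On the two-sided class the SV hypothesis itself supplies $\sqrt\rho\in D(\cE)$, so no $\Phi_\epsilon$-regularisation is needed. The passage from MLSI to decay also needs the infinite-dimensional equivalence (the paper cites Wirth), not just a bare derivative formula plus Gr\"onwall.

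For the differentiation step, the paper avoids moving-exponent derivatives of $T_tX$, and hence any Olkiewicz--Zegarlinski $L^p$-regularity. It applies the interpolated bound to $e^{tL}X$ with $X\in\mathbf 1_{[0,R]}(L)\mathscr T_2(\cH)$, so only $\|e^{tL}X\|_2$ is differentiated in $t$. It then uses convexity of $\log\|X_n\|_{q\leftarrow2,\sigma}$ in $1/q$ on finite $\sigma$-cutoffs, and concludes with lower semicontinuity of the entropy.
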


The comparison can fail for the samplers considered here.  The following
result extends the finite-dimensional endpoint criterion of
\cite[Theorem~3.3]{GaoWang2026Entropy} to the present setting, with the
same entropy-decay rate after normalization.

\begin{theorem}\label{thm:KMS-endpoint}
In the same reversible setting, let $q_0>2$ and suppose that, for some
$s>0$ and $C\ge1$,
\[
 \|T_s:L^2(\sigma)\to L^{q_0}(\sigma)\|\le C.
\]
If $C>1$, let $\lambda>0$ denote the spectral gap, which is
positive by the endpoint bound and primitivity. Then every state $\rho$
satisfies, for $t\ge0$,
\begin{equation}\label{eq:KMS-entropy-decay}
 D(\Phi_t(\rho)\|\sigma)
 \le e^{-\alpha_{\rm KMS}t}D(\rho\|\sigma),
 \qquad
 \alpha_{\rm KMS}
 =\frac{1-2/q_0}{s+\lambda^{-1}(\log C+1-2/q_0)}.
\end{equation}
If $C=1$, one may take $\alpha_{\rm KMS}=(1-2/q_0)/s$ without a gap.
\end{theorem}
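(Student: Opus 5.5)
We follow the finite-dimensional strategy of \cite[Theorem~3.3]{GaoWang2026Entropy}, which avoids the Stroock--Varopoulos comparison by working directly with the relative entropy through interpolation. The intermediate goal is an entropic smoothing estimate at the fixed time $s$. For a state $\rho$ with density $X=\sigma^{-1/2}\rho\sigma^{-1/2}$, the relative entropy is a derivative of the map $p\mapsto\|X\|_{p,\sigma}$ at $p=1$, taken in the weighted $L^p$ scale. An endpoint bound $\|T_s:L^2\to L^{q_0}\|\le C$ therefore dualizes, by self-adjointness on $L^2(\sigma)$ and KMS duality between $L^p$ and $L^{p'}$, to a bound $L^{q_0'}\to L^2$. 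Riesz--Thorin interpolation in the Kosaki/Haagerup complex scale with the trivial $L^1\to L^1$ contraction then gives a family of bounds $L^{p}\to L^{p(\theta)}$ near $p=1$.

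Differentiating at $p=1$ yields the core inequality
\[
 D(\Phi_s(\rho)\|\sigma)\le (2/q_0)\,D(\rho\|\sigma)+\log C\cdot(\text{variance-type term}).
\]
When $C=1$ this reads $D(\Phi_s\rho\|\sigma)\le (2/q_0)D(\rho\|\sigma)$. More precisely, we aim to show $\tfrac{d}{dt}D\le -\tfrac{1-2/q_0}{s}D$ after integrating over $[0,s]$. The decay rate $(1-2/q_0)/s$ should follow from this by the semigroup property and convexity of $t\mapsto \log D(\Phi_t\rho\|\sigma)$. If that convexity is unavailable, we instead use a standard subadditivity/averaging argument over time blocks.

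For $C>1$, we combine this with the spectral gap. First we check positivity of $\lambda$. The bound $L^2\to L^{q_0}$ with $q_0>2$ implies that $T_s$ is compact on $L^2$, by a Gross-type argument for hyperbounded semigroups in the atomic setting. Hence the resolvent is compact, the spectrum is discrete, and primitivity gives $\lambda>0$.

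Next we control the defect term $\log C$ by the gap. Following Gao--Wang, we use the entropy--$\chi^2$ comparison $D(\rho\|\sigma)\le \log(1+\chi^2)$ together with its reverse along the flow. After running for a time $\tau=\lambda^{-1}(\log C+1-2/q_0)$, the $L^2$-contraction $e^{-\lambda\tau}$ absorbs the factor $C$. This produces an effective endpoint constant $1$ at the cost of the total time $s+\tau$, and hence the rate $(1-2/q_0)/(s+\tau)$. This matches \eqref{eq:KMS-entropy-decay}.

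The expected main obstacle is doing this rigorously in infinite dimension. The interpolation and differentiation must be justified for unbounded densities $\sigma^{-1/2}\rho\sigma^{-1/2}$, which need not lie in $L^2(\sigma)$. We will first prove the estimate for states in a dense core, namely $\rho=\sigma^{1/2}X\sigma^{1/2}$ with $X$ bounded and positive. We then extend by lower semicontinuity of $D(\cdot\|\sigma)$ and trace-norm continuity of $\Phi_t$, approximating a general $\rho$ by truncations $P_n\rho P_n$ normalized through spectral projections of $\sigma$.
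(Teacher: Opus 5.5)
Your time-$s$ estimate itself is sound. By KMS duality, $\|T_s:L^{q_0'}(\sigma)\to L^2(\sigma)\|\le C$. Interpolating with the $L^1$ contraction and letting the parameter go to the $L^1$ end (via sandwiched R\'enyi divergences) gives $D(\Phi_s\rho\|\sigma)\le\frac2{q_0}D(\rho\|\sigma)+2\log C$. The gaps are in getting from there to the theorem.

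First, even when $C=1$, a contraction at the single time $s$ only yields $D(\Phi_t\rho\|\sigma)\le\frac{q_0}{2}e^{-t\log(q_0/2)/s}D(\rho\|\sigma)$. Removing the prefactor for all $t\ge0$ is an infinitesimal (MLSI-type) statement, and block averaging does not supply it. Log-convexity of $t\mapsto D(\Phi_t\rho\|\sigma)$ does not supply it either: it is not available for general reversible semigroups. If it were, it would give the rate $\log(q_0/2)/s$, which shows that $(1-2/q_0)/s$ is not what your mechanism produces. The missing step is to interpolate in time, not only in the exponent. Stein interpolation of $e^{-zL}$ between the unitary line and the endpoint gives $\|T_u:L^2(\sigma)\to L^{r(u)}(\sigma)\|\le1$ for $0\le u\le s$, with $1/r(u)=\frac12(1-\delta u/s)$ and $\delta=1-2/q_0$. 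The norm-to-entropy bound at each such $u$ gives $D(\Phi_u\rho\|\sigma)\le(1+\delta u/s)^{-1}D(\rho\|\sigma)$. Iterating over $n$ equal subintervals and letting $n\to\infty$ gives $e^{-\delta t/s}$. The factor $1-2/q_0$ is the first-order coefficient of this short-time contraction.

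Second, for $C>1$ the defect is the additive constant $2\log C$ (or $\frac{q}{q-1}\log C$), not a variance-type term. It can neither be absorbed by the gap nor iterated at the entropy level. The proposed entropy--$\chi^2$ comparison ``with its reverse along the flow'' does not exist. States of finite relative entropy can have $\chi^2=\infty$, and the hypothesis gives no $L^1\to L^2$ smoothing, so this route loses exactly the initial data the theorem is about.

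The constant must be removed at the level of norms before converting to entropy. Split $X=aI+Y$ with $Y$ centred, and let the gap act on $Y$ for an extra time. The centred smoothness inequality $\|aI+Y\|_{q,\sigma}^2\le|a|^2+(q-1)\|Y\|_{q,\sigma}^2$ (Ball--Carlen--Lieb) then makes an $L^2\to L^q$ bound with constant $K$ contractive after the additional time $\lambda^{-1}(\log K+\frac12\log(q-1))$. Doing this directly at $(s,q_0)$ produces $\frac12\log(q_0-1)=\operatorname{artanh}\delta>\delta$, hence a smaller rate than claimed.

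To obtain $\log C+1-2/q_0$, proceed as follows:
\begin{enumerate}
\item Interpolate to time $\theta s$, giving constant $C^\theta$ and exponent $q_\theta$ with $q_\theta-1=\frac{1+\theta\delta}{1-\theta\delta}$.
\item Make this endpoint contractive at time $S_\theta=\theta s+\lambda^{-1}(\theta\log C+\frac12\log(q_\theta-1))$.
\item Run the $C=1$ argument from there to get the rate $\theta\delta/S_\theta$.
\item Let $\theta\downarrow0$, using $\frac1{2\theta}\log(q_\theta-1)\to\delta$.
\end{enumerate}

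Finally, in your approximation step, lower semicontinuity only controls the output side. You also need $D(\rho_n\|\sigma)\to D(\rho\|\sigma)$ for the approximants. This requires a data-processing upper bound, for example through the pinching by $P_n$ or a channel fixing $\sigma$.
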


In \cref{sec:perturbations}, we prove two stability results.
First, adding a bounded operator to the generator preserves the
$L^p$ bounds along a hypercontractive curve, provided its norm is
uniformly bounded on those spaces. We give the resulting norm estimate.
Second, a change of the Gibbs weight on a finite-dimensional energy
subspace, apart from normalization, preserves a tight $2$-logarithmic
Sobolev inequality when the Dirichlet forms are comparable and the new
generator has a positive gap.

The main applications are Bose--Hubbard and trapped Coulomb samplers.
In both cases we prove the smoothing estimate and the positive gap
needed by \cref{thm:KMS-endpoint}.

\paragraph{\emph{Bose--Hubbard models.}}
On a fixed lattice of $k$ modes, \cref{thm:multimode-BH} treats
\[
 H=\sum_{j=1}^k\left[\frac{U_j}{2}N_j(N_j-I)-\mu_jN_j\right]
 +\sum_{i<j}V_{ij}N_iN_j
 -\sum_{i<j}\left(J_{ij}a_i^\dagger a_j+
 \overline{J_{ij}}a_j^\dagger a_i\right),
\]
where $a_j,a_j^\dagger$ are the mode operators,
$N_j=a_j^\dagger a_j$, $U_j>0$, $V_{ij}\ge0$,
$\mu_j\in\R$, and $J_{ij}\in\C$.
With the bare jumps $a_j,a_j^\dagger,a_j^3,(a_j^\dagger)^3$ in
each mode and the filter \eqref{eq:filter}, the sampler has a unique
invariant Gibbs state, a positive spectral gap, and a bounded map
$L^2(\sigma_\beta)\to L^p(\sigma_\beta)$ for every $t>0$ and
finite $p>2$. It therefore satisfies
\[
 D(\Phi_t(\rho)\|\sigma_\beta)
 \le e^{-\alpha t}D(\rho\|\sigma_\beta),\qquad \alpha>0.
\]
The theorem allows arbitrary hopping amplitudes and uses the full
bosonic space, without an occupation cutoff. The cubic jumps provide
losses of order $n^3$ on the sector with $n$ particles; these dominate
the quadratic energy and the errors caused by hopping. The same theorem gives a bounded endpoint after a positive waiting
time with the smaller jump family $a_j,a_j^\dagger,a_j^2,(a_j^\dagger)^2$.
The rate may depend on the lattice size and all model parameters.

\paragraph{\emph{Trapped Coulomb systems.}}
For $n$ particles in $d=2$ or $3$ dimensions, \cref{cor:coulomb-hc}
considers the Hamiltonian
\[
 H_{n,M}=-\Delta+|x|^2+
 \Pi_M\left(\sum_{i<j}\alpha_{n,i,j}w_d(x_i-x_j)\right)\Pi_M,
 \qquad w_2(y)=-\log|y|,\quad w_3(y)=|y|^{-1},
\]
where $\Pi_M=P_M^{\otimes n}$ and $P_M$ projects onto the first $M$
one-particle Hermite functions, and the couplings $\alpha_{n,i,j}$
are real. Only the interaction is truncated;
the Hamiltonian acts on $L^2((\R^d)^n)$. With coordinate creation
and annihilation jumps and the Gaussian filter
$\widehat f_\eta(\nu)=e^{-\eta\nu^2-\beta\nu/4}$, $\eta>0$, the sampler
has a unique invariant Gibbs state and a positive gap. For each fixed
$n,M$, we prove a contractive $L^2$-to-$L^4$ bound after a finite
time and exponential relative-entropy decay. The proof controls the
terms containing one ladder operator and one finite-energy coefficient
by a weighted oscillator estimate. The constants can depend on $n$
and $M$.

The other examples explain when such smoothing estimates can fail.
For one-mode polynomial number Hamiltonians,
\cref{thm:mixed-polynomial} compares the degree of the Hamiltonian
with the largest jump order. Bounded $L^2$-to-$L^p$ maps then occur
at every positive time, after a positive waiting time, or never;
the stated classification excludes the boundary time in the critical
case. A periodic charging model has a positive gap but no such bound
at any finite time.

\subsection{Relation to earlier work}
Our results build on finite-dimensional entropy criteria and on the
infinite-dimensional Gibbs samplers constructed in earlier work.
The finite-dimensional version of \cref{thm:KMS-endpoint}
is due to Gao and Wang
\cite[Theorem~3.3]{GaoWang2026Entropy}.  Their entropy-decay rate agrees
with ours after converting their normalization
$D(\Phi_t(\rho)\|\sigma)\le e^{-2\alpha_1t}D(\rho\|\sigma)$.
They combine weighted interpolation, contraction of centred
observables, and a norm-to-entropy estimate.  We extend their criterion to infinite dimension. We approximate
arbitrary states while preserving the limit of the relative entropy,
and prove the required $L^2$-to-$L^q$ bounds for our unbounded models.
Related finite-dimensional results for primitive KMS-symmetric
semigroups appear in \cite{LiuWanWu2026Entropic,WanWu2026Cheeger}.

Noncommutative hypercontractivity and its connection with logarithmic
Sobolev inequalities go back to the weighted $L^p$ theory of
\cite{OlkiewiczZegarlinski1999}; see also
\cite{bardet2017estimating,beigi2020quantum}.  Complete modified
logarithmic Sobolev inequalities and their tensor-stable
finite-dimensional forms were developed in \cite{gao2022complete},
while perturbative change-of-measure principles appear in
\cite{junge2023stability}.  The equivalence between modified
logarithmic Sobolev inequalities and exponential relative-entropy
decay in the von Neumann algebra setting is treated in
\cite{wirth2025exponential}.  Here one $L^2$-to-$L^q$ bound in infinite dimension gives an
explicit entropy-decay rate, involving the automatically positive
spectral gap when the bound has norm greater than one.  One argument uses the
Stroock--Varopoulos comparison; the other uses only the KMS
reversibility assumptions above.  The counterexample in
\cref{prop:generator-SV-counterexample} shows why the KMS argument is needed for general cross-frequency generators.

Our earlier work \cite{BeckerRouzeSalzmannToAppearcmp} provides the
infinite-dimensional Gibbs-sampler construction used here.  In \cite{BeckerRouzeSalzmannBose}, we established
spectral-gap results for mean-field and full multimode
Bose--Hubbard samplers and used them to obtain thermal-state
preparation guarantees.  Here we prove the additional smoothing estimates needed for
exponential decay from every initial state of finite relative entropy.  This requires control of the jump operators
as well as the Hamiltonian.  For example, for $H=N^2$ and the filter
used in \cref{subsec:mixed-polynomial}, \cref{thm:mixed-polynomial}
shows that linear jumps alone have no bounded $L^2$-to-$L^4$
endpoint, whereas adding cubic jumps gives such a bound at every
positive time and yields the entropy estimate in
\cref{thm:multimode-BH}. Quadratic jumps already give an endpoint
after a positive waiting time, including with hopping; cubic jumps
give the every-positive-time conclusion.
The theorem does not give the same entropy estimate for the
linear-jump samplers of \cite{BeckerRouzeSalzmannBose}.

The Coulomb result complements the fixed-cutoff spectral-gap
analysis of \cite{BeckerRouzeSalzmannCoulomb2026}.  We prove a weighted
oscillator estimate for the perturbed semigroup, controlling the mixed
terms that contain one unbounded ladder operator.  This gives an $L^2$-to-$L^4$ bound directly. The form comparison
in Section~3 also preserves a tight $2$-logarithmic Sobolev inequality,
but obtaining hypercontractivity from that inequality would require
an additional regularity assumption.

\Cref{sec:proof-main} sets up the spaces and proves the entropy-decay criteria.  \Cref{sec:perturbations} proves the stability results, and \cref{sec:applications} treats the bosonic examples
and obstructions.

\subsection{AI statement} We used OpenAI’s GPT-5.6 Sol and GPT-6 Astra to help identify assumptions under which generators associated with bosonic Hamiltonians, particularly the Bose–Hubbard Hamiltonian, are hypercontractive. We also used these tools to assist with manuscript preparation. We reviewed and verified the AI-assisted material and take full responsibility for the content of this work.

\section{Framework and entropy decay}\label{sec:proof-main}

We define the weighted spaces and prove the two entropy-decay
criteria from the introduction.  The first proof uses a logarithmic Sobolev
inequality and the Stroock--Varopoulos comparison; the second obtains
entropy contraction directly from KMS symmetry and hypercontractivity.
We then show why the comparison can fail when different frequencies are
coupled, and give approximation conditions that extend it to infinite-dimensional
Davies generators.

\subsection{Spaces and Gibbs samplers}\label{subsec:spaces-gibbs}
We define the operator spaces, Gibbs samplers, and reversibility
assumptions used throughout the paper.

We use $\N=\{1,2,\ldots\}$ and $\N_0=\{0,1,\ldots\}$.
For a separable Hilbert space $\cH$, let $B(\cH)$ be the algebra of
bounded operators and let $I$ be its identity.  The adjoint of an operator
$X$ is written $X^\dagger$.
For $1\le p<\infty$, the Schatten class and its norm are
\[
 \mathscr T_p(\cH)
 =\{X\text{ compact}:\Tr|X|^p<\infty\},
 \qquad |X|=(X^\dagger X)^{1/2},
 \qquad \|X\|_p=(\Tr|X|^p)^{1/p}.
\]
Thus $\mathscr T_1$ and $\mathscr T_2$ are the trace-class and
Hilbert--Schmidt spaces, respectively; $\|X\|_\infty$ is the operator
norm on $B(\cH)$; see \cite{SimonTraceIdeals2005} for trace ideals
and their norm inequalities.
For operators $X,Y$, we write $[X,Y]=XY-YX$ and
$\{X,Y\}=XY+YX$ whenever these products are defined.

Let $H$ be a self-adjoint Hamiltonian bounded from below, let $\beta>0$
be the inverse temperature, and assume that the partition function
$Z_\beta(H)=\Tr(e^{-\beta H})$ is finite.  The Gibbs state is
$\sigma_\beta(H)=Z_\beta(H)^{-1}e^{-\beta H}$.
We omit the Hamiltonian argument when it is fixed.
For the sampler construction, assume in addition that
$\Tr(e^{-bH})<\infty$ for every $b>0$, as in
\cite[Section~2]{BeckerRouzeSalzmannToAppearcmp}, and let $\{A^\alpha\}_{\alpha\in\cA}$ be a
finite adjoint-closed family of bare jump operators: the index set has
an involution $\alpha\mapsto\bar\alpha$ with
$A^{\bar\alpha}=(A^\alpha)^\dagger$. These operators specify the allowed
transitions. The jumps are closed on their respective domains and
contain a common spectral core. We impose the energy-bound and filter
hypotheses of \cite[Condition~A]{BeckerRouzeSalzmannToAppearcmp}; they
are verified for the applications in Section~4.
For a filter $f:\R\to\C$, our Fourier convention is
\[
 \widehat f(\nu)=\int_\R e^{it\nu}f(t)\dd t,
 \qquad \nu\in\R,
\]
with a distributional interpretation when the integral is not absolutely
convergent.  The filtered jump operators $L^\alpha$ are given formally by
\begin{equation}\label{eq:dressed-jump}
 L^\alpha=\int_\R e^{itH}A^\alpha e^{-itH}f(t)\dd t,
 \qquad
 \overline{\widehat f(\nu)}=e^{-\beta\nu/2}\widehat f(-\nu).
\end{equation}
Let $B=B^\dagger$ be the detailed-balance correction Hamiltonian
chosen in the infinite-dimensional construction of
\cite{BeckerRouzeSalzmannToAppearcmp}.
The associated Gorini--Kossakowski--Lindblad--Sudarshan (GKLS)
generator is
\begin{equation}\label{eq:GKLS}
 \cL(\rho)=-i[B,\rho]
 +\sum_{\alpha\in\cA}\left(L^\alpha\rho(L^\alpha)^\dagger
 -\frac12\{(L^\alpha)^\dagger L^\alpha,\rho\}\right),
\end{equation}
Under the stated hypotheses, this choice of $B$ makes
$\sigma_\beta$ stationary.
The examples below use the multiplier $\fhat$, where
$\mathscr M$ labels the following filter:
\begin{equation}\label{eq:filter}
 \fhat(\nu)=\exp\left[-\frac{\sqrt{1+(\beta\nu)^2}+\beta\nu}{4}\right]
\end{equation}
For $k\in\N$ bosonic modes, we work in the occupation basis
$\{|n\rangle:n=(n_1,\ldots,n_k)\in\N_0^k\}$ of
$\ell^2(\N_0^k)$.  With $e_i$ the $i$th coordinate vector,
the annihilation, creation, and number operators are
\[
 a_i|n\rangle=\sqrt{n_i}\,|n-e_i\rangle,
 \qquad a_i^\dagger|n\rangle=\sqrt{n_i+1}\,|n+e_i\rangle,
 \qquad N_i=a_i^\dagger a_i,
\]
where the first expression is zero when $n_i=0$.
The bare jumps in our examples are these operators or their higher
powers.  For unbounded jumps or a nonintegrable filter,
\eqref{eq:dressed-jump}--\eqref{eq:GKLS} are formal expressions on the
common core specified by the construction. The sampler is the conservative quantum Markov semigroup obtained from
the closed Dirichlet form in \cite{BeckerRouzeSalzmannToAppearcmp}; its
$L^2$ realization is a self-adjoint operator.  We compute unbounded products on the common core specified in that
construction and then take closures.

For the general quantum Markov semigroup (QMS) of
\cref{subsec:main-results}, let $\cL$ be its generator.  Then
\[
 \Phi_t=e^{t\cL}\quad\text{on }\mathscr T_1(\cH),
 \qquad T_t=\Phi_t^*\quad\text{on }B(\cH),
\]
For $t\ge0$, the duality between the Schr\"odinger and Heisenberg
pictures is $\Tr(\Phi_t(\rho)X)=\Tr(\rho T_t(X))$.
Assume that $\sigma$ is a faithful invariant state.  We use
\emph{primitive} to mean that the fixed observables are exactly
$\C I$; in this setting $\sigma$ is the unique invariant state.
For $1\le p<\infty$, the map
$\iota_{p,\sigma}(X)=\sigma^{1/(2p)}X\sigma^{1/(2p)}$ on $B(\cH)$
extends to an isometry from the weighted space $L^p(\sigma)$ of
\cref{subsec:main-results} onto $\mathscr T_p(\cH)$.
We call this the symmetric Schatten realization and put
$\iota_{\infty,\sigma}=\mathrm{id}$.
For the weighted interpolation construction, see
\cite{Kosaki1984Interpolation,OlkiewiczZegarlinski1999}.
The $\sigma$-weighted mean is
$\langle X\rangle_\sigma=\Tr(\sigma X)$, extended continuously to
these spaces; $X$ is $\sigma$-centred if this mean is zero.
We also write $T_t$ for its consistent extensions to
$L^p(\sigma)$, writing $T_t^{(p)}$ when the exponent must be specified.
An operator norm $\|T:L^p(\sigma)\to L^q(\sigma)\|$ is the supremum
of $\|TX\|_{q,\sigma}/\|X\|_{p,\sigma}$ over nonzero $X$.

The KMS symmetry assumed in \cref{subsec:main-results}
\cite{FagnolaUmanita2010} is self-adjointness of $T_t$ for the inner product
\[
 \langle X,Y\rangle_{2,\sigma}
 =\Tr(\sigma^{1/2}X^\dagger\sigma^{1/2}Y).
\]
Transporting the $L^2$ semigroup to Hilbert--Schmidt space through
$\iota_{2,\sigma}$ gives its standard form:
\begin{equation}\label{eq:L2-link}
 \iota_{2,\sigma}T_t^{(2)}
 \iota_{2,\sigma}^{-1}=e^{-tL}
 \quad\text{on }\mathscr T_2(\cH).
\end{equation}
Here $L$ is a nonnegative self-adjoint operator on $\mathscr T_2(\cH)$.
It acts on Hilbert--Schmidt operators, whereas the jump operators
$L^\alpha$ in \eqref{eq:dressed-jump} act on $\cH$.
For any operator $A$, $D(A)$ denotes its operator domain.
The closed quadratic form of $L$, its form domain, and its
complex-time semigroup are
\[
 D(\cE)=D(L^{1/2}),\qquad
 \cE(Y)=\|L^{1/2}Y\|_2^2,\qquad
 e^{-zL}\quad(\operatorname{Re}z>0).
\]
A form core is a subspace dense in $D(\cE)$
for the norm $\bigl(\|Y\|_2^2+\cE(Y)\bigr)^{1/2}$.
For closed forms and form cores, see \cite[Sections~10.1--10.2]{Schmdgen2012};
the correspondence with Markov semigroups in standard form is treated
in \cite{Cipriani1997}.
When $\ker L=\C\sqrt\sigma$, the spectral gap from
\cref{subsec:main-results} has the variational expression
\[
 \operatorname{gap}(L)
 =\inf_{\substack{Y\in D(\cE),\ \|Y\|_2=1\\
                   \Tr(\sqrt\sigma\,Y)=0}}\cE(Y).
\]
\paragraph{\emph{Compactness from one endpoint.}}
For $2<q<\infty$, the inclusion $L^q(\sigma)\hookrightarrow L^2(\sigma)$
is compact. In symmetric Schatten coordinates it is
$J_q(Y)=\sigma^aY\sigma^a$, where $a=1/4-1/(2q)>0$.
For finite-rank spectral projections $P_n\uparrow I$ of $\sigma$,
Schatten H\"older with $r=1/a$ and $2/r+1/q=1/2$ gives
\[
 \|J_q-P_nJ_q(\,\cdot\,)P_n\|_{\mathscr T_q\to\mathscr T_2}
 \le2\bigl(\Tr((I-P_n)\sigma)\bigr)^a\longrightarrow0.
\]
The approximating maps have finite rank. Thus a bounded
$T_s:L^2(\sigma)\to L^q(\sigma)$ makes $e^{-sL}$ compact.
The spectral theorem then gives compact resolvent for $L$; if
$\ker L=\C\sqrt\sigma$, zero is simple and isolated, so
$\operatorname{gap}(L)>0$. This argument uses the atomic algebra
$B(\cH)$ and trace-class $\sigma$; it does not assert compactness
for diffuse algebras or a uniform numerical lower bound on the gap.

Let $\mathscr L^{(p)}$ denote the
generator of $T_t^{(p)}$ on $L^p(\sigma)$.  The minus sign in
\eqref{eq:L2-link} gives
\[
 L=-\iota_{2,\sigma}\mathscr L^{(2)}\iota_{2,\sigma}^{-1}.
\]

The Bohr frequencies are the energy differences in
$\Omega(H)=\{E'-E:E,E'\in\Sp(H)\}$.
Let $\sigma_E\in(0,\infty]$ be an energy-resolution width:
the sampler couples energy differences $\nu$ and $\mu$ with
Gaussian factor $\exp[-(\nu-\mu)^2/(8\sigma_E^2)]$, interpreted as
one when $\sigma_E=\infty$.  With the filter $\widehat f$ and
Hamiltonian $H$ fixed, write
$\cL_{\sigma_E}$ for the state-side generator and
$L_{\sigma_E}\ge0$ for its standard-form operator.
The explicit finite-dimensional formula
is \eqref{eq:cross-frequency-generator}.  At the Davies endpoint
$\sigma_E=0$, only equal frequencies remain; its form is denoted by
$\cE_0$ and is generated by $L_0$.

We call the family hyperbounded if for every $1<p<q<\infty$ there
are $\tau_{p,q}\ge0$ and $C_{p,q}\ge1$ such that
\begin{equation}\label{eq:weak-HC}
 \|T_t^{(p)}:L^p(\sigma)\to L^q(\sigma)\|
 \le C_{p,q},\qquad t>\tau_{p,q}.
\end{equation}
We call an $L^p$-to-$L^q$ bound contractive if its norm is at most one.
An endpoint bound means an $L^p$-to-$L^q$ bound at one fixed time.

\subsection{Relative entropy and entropy production}
We introduce entropy production and formulate the entropy-decay
criteria for Gibbs samplers with filtered jumps.

We use the Umegaki relative entropy introduced in \cref{subsec:main-results},
written $D(\rho\|\omega)$ for a faithful reference state $\omega$.
Logarithms are defined by functional
calculus, with $0\log0=0$; in infinite dimension the trace expression is
understood in the extended relative-entropy sense, rather than as a
difference of two possibly infinite traces
\cite[Section~2 and Remark~4.6]{Hiai2018Standard}.
We use trace-norm lower semicontinuity and data processing in this
extended sense; see \cite[Theorem~4.1(i),(iv)]{Hiai2018Standard}.
Let
\[
 B(\omega)=\bigcup_{\Gamma\ge1}
 \{\rho\in\mathscr T_1(\cH):\rho\ge0,\ \Tr\rho=1,
             \ \Gamma^{-1}\omega\le\rho\le\Gamma\omega\}.
\]
Here the inequalities are in the operator order and $\Gamma$ is a scalar
comparison constant.  The notation $B(\omega)$ denotes this set of
states, whereas $B(\cH)$ denotes the algebra of bounded operators.
For $\rho\in D(\cL)\cap B(\omega)$ and an invariant state
$\omega$, the entropy production is
\[
 \EP_{\cL}(\rho)
 =-\Tr\bigl(\cL(\rho)(\log\rho-\log\omega)\bigr).
\]
The logarithmic difference has a bounded extension on $B(\omega)$,
so this trace is defined.  A modified logarithmic Sobolev inequality
(MLSI) with rate $\alpha>0$ is
$\alpha D(\rho\|\omega)\le\EP_{\cL}(\rho)$ on this domain.
The precise Stroock--Varopoulos (SV) assumption in \cref{thm:main}
is that, for every $\rho\in D(\cL)\cap B(\sigma)$,
\begin{equation}\label{eq:SV-assumption}
 \sqrt\rho\in D(\cE),\qquad
 4\cE(\sqrt\rho)\le\EP_{\cL}(\rho).
\end{equation}
KMS symmetry alone does not imply this comparison; see
\cref{prop:generator-SV-counterexample}.  Under the assumptions of
\cref{thm:main}, the proof also gives the equivalent MLSI,
\[
 \alpha_{\rm HC}D(\rho\|\sigma)\le\EP_{\cL}(\rho),
 \qquad\rho\in D(\cL)\cap B(\sigma),
\]
with the same rate $\alpha_{\rm HC}$ as in \cref{thm:main}.

\Cref{thm:KMS-endpoint} applies to the filtered samplers as follows.

\begin{corollary}
\label{cor:generator-KMS-endpoint}
Let $\sigma_E\in(0,\infty]$ and let
$\Phi_t=e^{t\cL_{\sigma_E}}$ be any of the
conservative KMS-symmetric semigroups constructed in
\cite{BeckerRouzeSalzmannToAppearcmp}, with invariant Gibbs state
$\sigma_\beta$.  Assume the irreducibility hypotheses giving
$\ker L_{\sigma_E}=\C\sqrt{\sigma_\beta}$, and put
$L=L_{\sigma_E}\ge0$ in the sign convention of
\eqref{eq:L2-link}. If
\[
 \|T_s^{(2)}:L^2(\sigma_\beta)\to L^4(\sigma_\beta)\|\le C>1,
\]
then $\lambda=\operatorname{gap}(L)>0$ and
\eqref{eq:KMS-entropy-decay} holds with
\begin{equation}\label{eq:generator-KMS-rate}
 \alpha_{\rm KMS}
 =\left(2s+\frac{2\log C+1}{\lambda}\right)^{-1}.
\end{equation}
For $C=1$, the rate $(2s)^{-1}$ holds without a gap.  This includes the
original generator at $\sigma_E=\infty$ and every Gaussian width $0<\sigma_E<\infty$; it does not use the Davies limit $\sigma_E=0$.
\end{corollary}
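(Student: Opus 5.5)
The corollary follows from \cref{thm:KMS-endpoint} with $q_0=4$ once its hypotheses are checked for the filtered samplers. With $q_0=4$ we have $1-2/q_0=\tfrac12$, so \eqref{eq:KMS-entropy-decay} gives
\[
 \alpha_{\rm KMS}=\frac{1/2}{s+\lambda^{-1}(\log C+\tfrac12)}
 =\left(2s+\frac{2\log C+1}{\lambda}\right)^{-1},
\]
which is \eqref{eq:generator-KMS-rate}. For $C=1$ the same specialisation gives $(1-2/q_0)/s=(2s)^{-1}$. The work is therefore to place $\Phi_t=e^{t\cL_{\sigma_E}}$ in the reversible setting of \cref{subsec:main-results}.

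\emph{Step 1 (semigroup properties).} From the construction in \cite{BeckerRouzeSalzmannToAppearcmp}, $\Phi_t$ is a strongly continuous, conservative (trace-preserving) semigroup of completely positive maps on $\mathscr T_1(\cH)$. It has $\sigma_\beta$ as a faithful invariant state; faithfulness holds because $e^{-\beta H}$ has trivial kernel. The semigroup comes from a closed Dirichlet form in standard form, with generator $L_{\sigma_E}\ge0$ self-adjoint on $\mathscr T_2(\cH)$. Transporting back through $\iota_{2,\sigma_\beta}$ then gives KMS self-adjointness of $T_t^{(2)}$, which is \eqref{eq:L2-link}.

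\emph{Step 2 (consistent $L^p$ extensions).} The Markovian (Dirichlet) property of the form, via the correspondence in \cite{Cipriani1997}, yields positive contractions on $L^\infty=B(\cH)$. Duality with trace preservation and invariance of $\sigma_\beta$ yields contractions on $L^1(\sigma_\beta)$. Weighted interpolation \cite{Kosaki1984Interpolation,OlkiewiczZegarlinski1999} then gives consistent positive contractions $T_t^{(p)}$ on all $L^p(\sigma_\beta)$, and these agree with $T_t^{(2)}$ on the dense subspace of bounded observables.

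\emph{Step 3 (primitivity and gap).} The assumption $\ker L_{\sigma_E}=\C\sqrt{\sigma_\beta}$ says that the $L^2$ fixed space of $T_t$ is $\C I$. When $C>1$, the endpoint bound from $L^2$ to $L^4$ combines with the compactness paragraph ("Compactness from one endpoint", $q=4$): $e^{-sL}$ is compact, $L$ has compact resolvent, and zero is a simple isolated eigenvalue. Hence $\lambda=\operatorname{gap}(L)>0$, which is the gap required by \cref{thm:KMS-endpoint}.

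No Stroock--Varopoulos comparison is needed, since \cref{thm:KMS-endpoint} uses only KMS reversibility. This is why the statement covers all $\sigma_E\in(0,\infty]$, including cross-frequency generators, and does not pass through the Davies limit.

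The main obstacle I expect is Step 2 for unbounded jumps. One must show that the closed-form semigroup is the same object whose $L^p$ extensions appear in \cref{thm:KMS-endpoint}, that the two extensions agree on $L^2\cap L^p$, and that the hypothesised $L^2\to L^4$ norm refers to this extension. I would first invoke the conservativity and Markovianity results of \cite{BeckerRouzeSalzmannToAppearcmp} directly. Failing that, I would approximate by bounded-generator semigroups (energy cutoffs), for which consistency is clear, and pass to the limit strongly on bounded observables.
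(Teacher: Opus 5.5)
Your proposal is correct and follows the same route as the paper. The paper places the constructed semigroup in the reversible setting: a conservative KMS-symmetric QMS of positive contractions on the weighted scale, whose Hilbert--Schmidt generator is $L_{\sigma_E}$. It then uses the one-endpoint compactness argument with the kernel hypothesis to get $\lambda>0$, and specialises \cref{thm:KMS-endpoint} to $q_0=4$; there $1-2/q_0=\tfrac12$ gives \eqref{eq:generator-KMS-rate}, and $(2s)^{-1}$ when $C=1$. Your Step~2 spells out what the paper takes directly from the construction: contractivity on $L^\infty$, contractivity on $L^1$ via the KMS intertwining $\iota_{1,\sigma_\beta}T_t=\Phi_t\iota_{1,\sigma_\beta}$, and interpolation between them.
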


For the same $L^2$-to-$L^4$ bound at time $s$, with constant $C>1$
and spectral gap $\lambda>0$, the two entropy-decay rates are
\[
 \alpha_{\rm HC}
 =\left(s+\frac{\log C+1/2}{\lambda}\right)^{-1},
 \qquad
 \alpha_{\rm KMS}=\frac12\alpha_{\rm HC}.
\]
\Cref{thm:main} gives the first rate when the Stroock--Varopoulos
comparison holds.  \Cref{thm:KMS-endpoint} gives the second rate
without that comparison.  The KMS argument therefore gives half the rate obtained with SV.  \Cref{prop:generator-SV-counterexample}
shows why the comparison cannot be assumed for cross-frequency
generators; it does not show that this factor of two is optimal.

Hyperboundedness gives an $L^2$-to-$L^4$ bound after the
waiting time $\tau_{2,4}$ in \eqref{eq:weak-HC}. Under primitivity,
compactness gives a positive gap, and \cref{thm:KMS-endpoint}
therefore gives exponential relative-entropy decay.  If the Stroock--Varopoulos comparison also
holds, \cref{thm:main} gives the faster rate above.
A unique invariant state does not by itself ensure a positive gap in
infinite dimension.  When the $L^2$-to-$L^4$ bound has norm at most
one, \cref{thm:KMS-endpoint} gives the rate $(2s)^{-1}$ without a gap.
If the Stroock--Varopoulos comparison holds as well,
\cref{thm:main} improves this rate to $s^{-1}$.

The following table compares the two entropy-decay criteria.  In the
first two rows the endpoint bound has constant $C>1$, and its
automatically positive spectral gap enters the rate; in the last two
rows the endpoint is contractive ($C=1$), so the rate needs no gap value.

\begin{center}
\begin{tabular}{@{}p{0.23\textwidth}p{0.27\textwidth}p{0.38\textwidth}@{}}
\toprule
Norm bound & Other assumptions & Entropy-decay rate \\
\midrule
$L^2\to L^4$ at $(s,C)$ & SV &
$\displaystyle\left(s+\frac{\log C+1/2}{\lambda}\right)^{-1}$ \\
$L^2\to L^{q_0}$ at $(s,C)$ & KMS symmetry &
$\displaystyle\frac{1-2/q_0}{s+\lambda^{-1}(\log C+1-2/q_0)}$ \\
contractive $L^2\to L^4$ & SV & $s^{-1}$ \\
contractive $L^2\to L^{q_0}$ & KMS symmetry & $(1-2/q_0)s^{-1}$ \\
\bottomrule
\end{tabular}
\end{center}

To prove \cref{thm:main}, we first derive a logarithmic Sobolev
inequality.  We interpolate the $L^2$-to-$L^4$ bound to small times and
differentiate at time zero. This gives a $2$-logarithmic Sobolev
inequality with an additional squared-norm term, called the defect.  The Rothaus
estimate transfers the defect to the component orthogonal to the
invariant vector, and the Poincar\'e inequality bounds this component
by the Dirichlet energy.  Combining the tight inequality with the Stroock--Varopoulos comparison
gives the modified logarithmic Sobolev inequality.  Exponential
relative-entropy decay then follows.  When $C=1$, differentiation already gives a tight inequality,
so the gap is unnecessary.

The proof of \cref{thm:KMS-endpoint} derives entropy contraction
directly from the weighted $L^p$ estimates.  For $C>1$, interpolation,
noncommutative $L^q$ smoothness, and decay of centred observables under
the spectral gap yield contractive $L^2$-to-$L^q$ bounds after an
explicit additional time.  KMS symmetry then allows interpolation
with the $L^1$ contraction to give a strong data-processing inequality
for relative entropy.  We divide the time interval into equal pieces and iterate the
short-time estimate. Letting the number of pieces tend to infinity
gives exponential decay.  Optimizing the interpolation
parameter gives the rate in \cref{thm:KMS-endpoint}.  For $C=1$, the argument starts from
the given contractive endpoint and again requires no gap.

The remainder of \cref{sec:proof-main} proves these criteria and
identifies the cross-frequency obstruction to the
Stroock--Varopoulos comparison.  Its infinite-dimensional Davies
criterion is conditional on the stated exhaustion hypotheses, which
must be verified for any proposed application.
\Cref{sec:perturbations} proves stability of the norm bounds under
bounded perturbations, and of $2$-logarithmic Sobolev inequalities
under changes of the Gibbs weight on a finite energy subspace.
\Cref{sec:applications} treats polynomial jumps, Bose--Hubbard
samplers with hopping, and finite-rank Coulomb perturbations of tensor
quantum Ornstein--Uhlenbeck (qOU) semigroups. It also gives the
periodic charging example.

\subsection{Modified logarithmic Sobolev inequality}
We relate the modified logarithmic Sobolev inequality to entropy decay
and express relative entropy in the Hilbert--Schmidt representation.

If
$\Gamma^{-1}\omega\le\rho\le\Gamma\omega$, operator monotonicity of
the logarithm shows that the quadratic-form difference
$\log\rho-\log\omega$ extends to a bounded self-adjoint operator, with
\[
 \|\log\rho-\log\omega\|_\infty\le\log\Gamma;
\]
see \cite{Schmdgen2012}.  Thus its pairing with
$\cL(\rho)\in\mathscr T_1(\cH)$ is defined for
$\rho\in D(\cL)\cap B(\omega)$.

\begin{lemma}\label{lem:MLSI-equivalence}
Let $\omega$ be a faithful invariant state of $\Phi_t=e^{t\cL}$ and let $\alpha>0$.  The estimate $D(\Phi_t(\rho)\|\omega)\le e^{-\alpha t}D(\rho\|\omega)$ for all states and all $t\ge0$ is equivalent to
\begin{equation*}
 \alpha D(\rho\|\omega)\le\EP_{\cL}(\rho),
 \qquad \rho\in D(\cL)\cap B(\omega).
\end{equation*}
\end{lemma}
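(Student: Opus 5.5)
We prove the two implications separately. Both rest on the identity $\frac{\mathrm d}{\mathrm dt}D(\Phi_t(\rho)\|\omega)=-\EP_{\cL}(\Phi_t(\rho))$. The real work is showing this identity holds on a class of states preserved by the semigroup, and then extending the decay estimate from that class to all states.

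\emph{Decay implies MLSI.} Fix $\rho\in D(\cL)\cap B(\omega)$. Since $\omega$ is invariant and $\Phi_t$ is positive, $\Gamma^{-1}\omega\le\Phi_t(\rho)\le\Gamma\omega$ for all $t$. Hence $\rho_t=\Phi_t(\rho)$ stays in $B(\omega)$, and $\|\log\rho_t-\log\omega\|_\infty\le\log\Gamma$ uniformly. The right derivative of $t\mapsto D(\rho_t\|\omega)$ at $t=0$ is then $-\EP_{\cL}(\rho)$. To justify this, write
\[
D(\rho_t\|\omega)-D(\rho\|\omega)
=\Tr\bigl((\rho_t-\rho)(\log\rho-\log\omega)\bigr)+D(\rho_t\|\rho).
\]
The first term, divided by $t$, tends to $\Tr(\cL(\rho)(\log\rho-\log\omega))$, because $(\rho_t-\rho)/t\to\cL(\rho)$ in trace norm and the logarithmic difference is bounded. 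For the second term we need $D(\rho_t\|\rho)=o(t)$. We would get this from the bound $D(\rho_t\|\rho)\le\Tr\bigl((\rho_t-\rho)(\log\rho_t-\log\rho)\bigr)$, which is the symmetrized relative entropy and is nonnegative. Together with $\rho_t,\rho\ge\Gamma^{-1}\omega$, this gives a bound of order $\|\rho_t-\rho\|_1\cdot\|\log\rho_t-\log\rho\|_\infty$. The logarithmic factor tends to zero by operator-Lipschitz continuity of $\log$ on states sandwiched between multiples of $\omega$, so the product is $o(t)$. Comparing this with $\frac{\mathrm d}{\mathrm dt}\big|_{0^+}e^{-\alpha t}D(\rho\|\omega)=-\alpha D(\rho\|\omega)$ yields the MLSI.

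\emph{MLSI implies decay.} First take $\rho\in D(\cL)\cap B(\omega)$. Then $\rho_t\in D(\cL)\cap B(\omega)$ for all $t$, and the same computation at every $t$ shows that $u(t)=D(\rho_t\|\omega)$ has right derivative $-\EP_{\cL}(\rho_t)\le-\alpha u(t)$. The function $u$ is continuous, since $\rho_t$ is trace-norm continuous and the logarithmic difference is uniformly bounded. A Dini-derivative Gr\"onwall argument then gives $u(t)\le e^{-\alpha t}u(0)$.

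For general states we approximate. Given $\rho$ with $D(\rho\|\omega)<\infty$ (otherwise there is nothing to prove), set $\rho^{(\epsilon)}=(1-\epsilon)\rho+\epsilon\omega$. By joint convexity, $D(\rho^{(\epsilon)}\|\omega)\le(1-\epsilon)D(\rho\|\omega)$. The state $\rho^{(\epsilon)}$ is bounded below by $\epsilon\omega$ but not above, so we truncate. Let $P_n$ be spectral projections of $\rho$ and $\omega$; more simply, use the regularization $\rho_{n}=Z_n^{-1}\min$-type cutoffs. This truncation is the main obstacle: a naive cutoff does not commute with $\omega$, and we need a state in $B(\omega)$ whose relative entropy converges to $D(\rho^{(\epsilon)}\|\omega)$. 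I would try the following. Take a sequence $\rho_n\in B(\omega)$ with $\rho_n\to\rho^{(\epsilon)}$ in trace norm and $\limsup D(\rho_n\|\omega)\le D(\rho^{(\epsilon)}\|\omega)$. One candidate is an operator-concave smoothing such as $\rho_n\propto(\rho^{(\epsilon)-1}+n^{-1}\omega^{-1})^{-1}$ (a parallel sum), using monotonicity of relative entropy along such regularizations; this is the content of the paper's approximation preserving the entropy limit. Next, replace $\rho_n$ by $\Phi_\delta(\rho_n)$, or by $\delta^{-1}\int_0^\delta\Phi_r(\rho_n)\,\mathrm dr$, which lies in $D(\cL)\cap B(\omega)$ and has entropy at most $D(\rho_n\|\omega)$ by data processing. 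Apply the decay estimate to these states, then pass $\delta\to0$, $n\to\infty$, $\epsilon\to0$. Lower semicontinuity of $D(\cdot\|\omega)$ under the trace-norm convergence $\Phi_t(\cdot)\to\Phi_t(\rho)$ controls the left side, and the $\limsup$ bounds control the right side.
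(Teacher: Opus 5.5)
The paper does not prove this lemma; it cites \cite[Lemma~A.1 and Theorem~2.7]{wirth2025exponential}. Your direct route, differentiating $D(\Phi_t(\rho)\|\omega)$ on $B(\omega)$ and then approximating, is reasonable in outline, but there are two genuine gaps.

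\textbf{The $o(h)$ step.} You claim $\|\log\rho_t-\log\rho\|_\infty\to0$ by operator-Lipschitz continuity of $\log$ on $[\Gamma^{-1}\omega,\Gamma\omega]$. This is false in infinite dimension, because the spectrum of $\omega$ accumulates at $0$. Take a classical chain on sites $\pm n$ ($n\ge1$) that swaps $n\leftrightarrow-n$ at rate $n$, with $\omega(\pm n)=2^{-n-1}$ and $\rho(\pm n)=(1\pm\frac13)2^{-n-1}$. Then $\rho\in D(\cL)\cap B(\omega)$, yet $\|\log\rho_t-\log\rho\|_\infty=\log\frac32$ for every $t>0$. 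So your product bound only gives $D(\rho_t\|\rho)=O(t)$.

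For decay $\Rightarrow$ MLSI this is harmless. The term $D(\rho_t\|\rho)\ge0$ has the favourable sign and can be dropped, giving $\Tr\bigl(t^{-1}(\rho_t-\rho)(\log\rho-\log\omega)\bigr)\le t^{-1}(e^{-\alpha t}-1)D(\rho\|\omega)$.

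For MLSI $\Rightarrow$ decay you need an upper bound on the right Dini derivative, so $D(\rho_{t+h}\|\rho_t)=o(h)$ is genuinely required. A correct argument keeps your symmetrized bound but pairs the trace-norm convergent quotient $h^{-1}(\rho_{t+h}-\rho_t)\to\cL(\rho_t)$ with $\log\rho_{t+h}-\log\rho_t$. That difference is bounded by $2\log\Gamma$ and tends to $0$ in the weak operator topology, not in norm; hence it tends to $0$ $\sigma$-weakly, by the uniform bound. The weak convergence follows from $\log x=\int_0^\infty\bigl((1+s)^{-1}-(x+s)^{-1}\bigr)\dd s$ tested on finite spectral subspaces of $\omega$. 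There, $(\rho_{t+h}+s)^{-1}\le(\Gamma^{-1}\omega+s)^{-1}$ and $\rho_{t+h}\le\Gamma^2\rho_t$ dominate the integrand near $s=0$, and the norm difference over $s^2$ handles $s\to\infty$.

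\textbf{The passage to arbitrary states.} This is not actually carried out. The parallel-sum regularization has no justified entropy upper bound. Relative entropy is not monotone in the operator order, and there is no general \emph{monotonicity along such regularizations} to invoke.

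Your worry about noncommuting cutoffs is also misplaced. Cut off with spectral projections $E_n$ of $\omega$ itself and refill the tail: with $Q_n=I-E_n$, set $\mathsf E_n(\rho)=E_n\rho E_n+\Tr(Q_n\rho)\,Q_n\omega Q_n/\Tr(Q_n\omega)$. This is a normal channel fixing $\omega$, so data processing gives $D(\mathsf E_n(\rho)\|\omega)\le D(\rho\|\omega)$. Faithfulness of $\omega$ on the finite-dimensional range of $E_n$ gives $\mathsf E_n(\rho)\le c_n\omega$, and mixing with $\omega$ supplies the lower comparison. This is exactly \cref{lem:entropy-approximation}.

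Finally, $\Phi_\delta(\rho_n)$ need not lie in $D(\cL)$, since the trace-class semigroup is not analytic in general. Use only $\delta^{-1}\int_0^\delta\Phi_r(\rho_n)\dd r$. It is the image of $\rho_n$ under an $\omega$-preserving channel, lies in $D(\cL)\cap B(\omega)$ with the same $\Gamma$, and has no larger entropy. With these repairs your limiting argument closes.
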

This is the equivalence between MLSI and entropy decay in infinite
dimension; see \cite[Lemma~A.1 and Theorem~2.7]{wirth2025exponential}.

The hypercontractive estimates use the spaces $L^p(\sigma)$, whereas
the Dirichlet form acts on the fixed Hilbert space
$\mathscr T_2(\cH)$.  We identify an observable $X$
with
\[
 Y=\iota_{2,\sigma}(X)=\sigma^{1/4}X\sigma^{1/4}.
\]
This identification is isometric: $\|Y\|_2=\|X\|_{2,\sigma}$.
The norm of the same observable in $L^p(\sigma)$ is obtained by
changing the weights from
$\sigma^{1/4}$ to $\sigma^{1/(2p)}$.  In terms of $Y$, this gives
\begin{equation}\label{eq:transported-p-norm}
 \|Y\|_{p\leftarrow2,\sigma}
 :=\|\sigma^{1/(2p)-1/4}Y\sigma^{1/(2p)-1/4}\|_p
 =\|X\|_{p,\sigma},\qquad 1\le p<\infty,
\end{equation}
whenever these norms are finite.  The subscript $p\leftarrow2$ denotes
the weighted $L^p$ norm in the Hilbert--Schmidt coordinates fixed at
$p=2$.  At $p=2$, it is the ordinary Hilbert--Schmidt norm.

For $p>2$, the exponent of $\sigma$ in
\eqref{eq:transported-p-norm} is negative.  Since the eigenvalues of
$\sigma$ can approach zero, this expression need not be finite for
every Hilbert--Schmidt operator.  We define it by finite spectral cutoffs.  Choose increasing finite-rank spectral
projections $P_n$ of $\sigma$ with $P_n\to I$ strongly, and set
\[
 \|Y\|_{p\leftarrow2,\sigma}
 =\lim_{n\to\infty}
 \|\sigma^{1/(2p)-1/4}P_nYP_n
       \sigma^{1/(2p)-1/4}\|_p\in[0,\infty].
\]
All powers in each term act on the finite-dimensional range of $P_n$.
The norms increase with $n$, and their limit is independent of the
chosen exhaustion.  The domain consists of those $Y$ for which this
limit is finite; on that domain the reweighted operator belongs to
$\mathscr T_p(\cH)$.  The proof of \cref{thm:Gross} constructs the form core used to
differentiate the semigroup.

For $Y\in\mathscr T_2(\cH)\setminus\{0\}$, define
\begin{align*}
 \Ent_{2,\sigma}(Y)
 :=\frac{\|Y\|_2^2}{2}\bigg[
 &D\left(\frac{Y^\dagger Y}{\|Y\|_2^2}\middle\|\sigma\right)
 +D\left(\frac{YY^\dagger}{\|Y\|_2^2}\middle\|\sigma\right)
 \bigg],
\end{align*}
with $\Ent_{2,\sigma}(0)=0$ and infinite values allowed. Both
normalized operators inside the relative entropies are states, and
this functional is lower semicontinuous in the Hilbert--Schmidt norm.
For $Y$ supported on a finite spectral subspace of $\sigma$,
differentiating \eqref{eq:transported-p-norm} gives
\[
 \Ent_{2,\sigma}(Y)
 =4\|Y\|_2^2
 \left.\frac{\dd}{\dd p}\log\|Y\|_{p\leftarrow2,\sigma}
 \right|_{p=2},\qquad Y\ne0.
\]
Here $Y$ is fixed while $p$ varies. In infinite dimension we use
the relative-entropy formula above; the cutoff argument in
\cref{thm:Gross} below proves the required norm comparison.

In particular, a state $\rho$ is represented in the positive cone of
Hilbert--Schmidt space by its square root $Y=\sqrt\rho$.  Then
$Y^\dagger Y=YY^\dagger=\rho$ and $\|Y\|_2^2=\Tr\rho=1$, so
\begin{equation*}
 \Ent_{2,\sigma}(\sqrt\rho)=D(\rho\|\sigma).
\end{equation*}
The equilibrium state corresponds to $\sqrt\sigma$, which is also
the image of the identity observable under $\iota_{2,\sigma}$.
Thus an inequality for $\Ent_{2,\sigma}(Y)$ gives a relative-entropy
inequality by taking $Y=\sqrt\rho$.

\subsection{The Gross and Rothaus steps}
We differentiate the hypercontractive bound and use the Rothaus and
Poincar\'e inequalities to obtain a tight logarithmic Sobolev inequality.

For standard-form vectors use the complex Hilbert--Schmidt inner
product $\langle X,Y\rangle:=\Tr(X^\dagger Y)$.  Thus
$\langle\sqrt\sigma,X\rangle\sqrt\sigma$ is the orthogonal projection
of $X$ onto the invariant line $\C\sqrt\sigma$.

\begin{theorem}\label{thm:Gross}
Under the semigroup assumptions of \cref{thm:main}, the endpoint \eqref{eq:24-main} implies, for every standard-form vector $X\in D(\cE)\subset\mathscr T_2(\cH)$,
\begin{equation}\label{eq:defective-general}
 \Ent_{2,\sigma}(X)
 \le4s\cE(X)+4\log C\|X\|_2^2.
\end{equation}
Consequently, for every state $\rho$ with $\sqrt\rho\in D(\cE)$,
\begin{equation}\label{eq:Rothaus}
 D(\rho\|\sigma)\le4s\cE(\sqrt\rho)
 +(4\log C+2)\|\sqrt\rho-\langle\sqrt\sigma,\sqrt\rho\rangle\sqrt\sigma\|_2^2.
\end{equation}
If the gap is $\lambda>0$, then
\begin{equation}\label{eq:tight-LSI}
 D(\rho\|\sigma)
 \le c_{\rm LS}\cE(\sqrt\rho),
 \qquad c_{\rm LS}=4s+\frac{4\log C+2}{\lambda}.
\end{equation}
For $C=1$, \eqref{eq:defective-general} directly gives \eqref{eq:tight-LSI} with $c_{\rm LS}=4s$ and no gap assumption.
\end{theorem}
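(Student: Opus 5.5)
The plan follows Gross's classical route: interpolate the endpoint along a curve of exponents, differentiate at time zero, then remove the defect with the Rothaus inequality and the Poincaré inequality.

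\emph{Step 1: interpolation.} Choose the curve $p(t)=2+2t/s$ for $t\in[0,s]$, so that $p(0)=2$ and $p(s)=4$. Stein interpolation for the analytic family $z\mapsto e^{-zsL}$, between $\|e^{-itL}\|_{2\to2}=1$ and the endpoint \eqref{eq:24-main} in the Kosaki weighted scale, gives
\[
 \|e^{-tL}Y\|_{p(t)\leftarrow2,\sigma}\le C^{\theta(t)}\|Y\|_2 .
\]
Here $\theta(t)$ is determined by $1/p(t)=(1-\theta)/2+\theta/4$. Both sides are equal at $t=0$, so the right derivative of the left side at $t=0$ is at most the derivative of the right side. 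For $Y$ supported on a finite spectral subspace of $\sigma$ and lying in $D(L)$, use the product rule with the formula $\Ent_{2,\sigma}(Y)=4\|Y\|_2^2\,\tfrac{d}{dp}\log\|Y\|_{p\leftarrow2}|_{p=2}$. The semigroup contributes $-\cE(Y)$, since $\tfrac{d}{dt}\|e^{-tL}Y\|_2^2=-2\cE(Y)$. Collecting terms with $p'(0)=2/s$ and $\theta'(0)=2/s$, after matching the normalizations, gives exactly \eqref{eq:defective-general}.

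\emph{Step 2: extension to all of $D(\cE)$ (main obstacle).} The differentiation in Step 1 requires that $e^{-tL}Y$ stays in the domain of the transported norms. I would run it on truncations $Y_n=P_nYP_n$, but $P_n$ need not preserve $D(\cE)$, and the semigroup does not preserve finite support. My first attempt avoids differentiating at finite support. Write the interpolation inequality on the cutoff quantity $\|\sigma^{a}P_ne^{-tL}YP_n\sigma^{a}\|_{p(t)}$, which is bounded by the full norm, and divide by $t$. Then use $\|Y\|^2-\|e^{-tL}Y\|^2\le 2t\,\cE(Y)$ for $Y\in D(\cE)$. The quotient $[\|Z\|_{p}^2-\|Z\|_2^2]/(p-2)$ is convex in $p$ in finite dimension, so it dominates $\Ent$ on the range of $P_n$ from below. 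Letting $t\downarrow0$ and then $n\to\infty$, lower semicontinuity of $\Ent_{2,\sigma}$ and of relative entropy in the Hilbert--Schmidt norm yields \eqref{eq:defective-general} for every $X\in D(\cE)$. If this direct route fails, I would instead take $Y$ in a form core of spectrally regularized vectors, such as $e^{-\varepsilon L}Y$, and pass to the limit using closedness of $\cE$.

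\emph{Step 3: Rothaus and Poincaré.} Apply \eqref{eq:defective-general} to $X=\sqrt\rho$ and split $\sqrt\rho=c\sqrt\sigma+Z$, where $c=\langle\sqrt\sigma,\sqrt\rho\rangle$. The noncommutative Rothaus lemma, $\Ent_{2,\sigma}(c\sqrt\sigma+Z)\le\Ent_{2,\sigma}(Z)+2\|Z\|_2^2$, should be proved by reducing to finite dimension through the same cutoffs and using lower semicontinuity. Since $\cE(c\sqrt\sigma+Z)=\cE(Z)$, this gives \eqref{eq:Rothaus} once \eqref{eq:defective-general} is applied to $Z$. If $\lambda>0$, the Poincaré inequality $\|Z\|_2^2\le\lambda^{-1}\cE(Z)=\lambda^{-1}\cE(\sqrt\rho)$ produces \eqref{eq:tight-LSI} with $c_{\rm LS}=4s+(4\log C+2)/\lambda$. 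For $C=1$ the defect term in \eqref{eq:defective-general} vanishes, so taking $X=\sqrt\rho$ gives $c_{\rm LS}=4s$ with no gap assumption.
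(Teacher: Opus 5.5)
\paragraph{Step 1: wrong exponent curve.} This is a genuine error, not a normalization issue. Stein interpolation of $z\mapsto e^{-zsL}$ evaluates the family at $z=\theta$, i.e.\ at time $\theta s$. It gives $\|e^{-\theta sL}:L^2(\sigma)\to L^{q}(\sigma)\|\le C^\theta$ with $1/q=\frac12-\frac\theta4$. In terms of time $t$ this is the curve $q(t)=4s/(2s-t)$, with $1/q$ affine in $t$, $\theta=t/s$, and $q'(0)=\theta'(0)=1/s$.

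Your curve $p(t)=2+2t/s$ agrees with $q(t)$ only at $t=0$ and $t=s$. In between, $p(t)/q(t)=1+\frac{t}{2s}(1-\frac ts)>1$. With your $\theta(t)=2t/(s+t)$, interpolation controls the later time $2st/(s+t)>t$, and a bound at a later time says nothing about time $t$.

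Carrying out your derivative computation with $p'(0)=\theta'(0)=2/s$ gives $\Ent_{2,\sigma}(Y)\le2s\cE(Y)+4\log C\|Y\|_2^2$, not \eqref{eq:defective-general}. This inequality is false. The qubit depolarizing semigroup with $\sigma=I/2$ is contractive $L^2\to L^4$ at $s=\frac12\log3$. Yet for $Y=\iota_{2,\sigma}(I+\epsilon Z)$ with $Z=\mathrm{diag}(1,-1)$, one finds $\Ent_{2,\sigma}(Y)/\cE(Y)\to2>\log3$ as $\epsilon\to0$. With the correct curve, the same computation yields $\Ent_{2,\sigma}(Y)/(4s\|Y\|_2^2)\le\cE(Y)/\|Y\|_2^2+(\log C)/s$, which is exactly \eqref{eq:defective-general}.

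\paragraph{Step 2: same parametrization problem.} The map $p\mapsto\|Z\|_{p\leftarrow2,\sigma}^2$ is not convex. Take a spectral projection $P$ of $\sigma$ with $m=\Tr(\sigma P)\in(e^{-2},1)$. The transported norm of $Z=\iota_{2,\sigma}(P)$ is $m^{1/p}$, and $m^{2/p}$ is concave near $p=2$. So your chord quotient lies below $\frac12\Ent_{2,\sigma}(Z)$ rather than above it. What is true, and what the paper uses, is that $u\mapsto\log\|Z\|_{q(u)\leftarrow2,\sigma}$ is convex whenever $1/q(u)$ is affine in $u$; hence $\|Z\|_{q(u)\leftarrow2,\sigma}^2$ is convex in $u$ as well.

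Your order of limits is also reversed. At fixed $n$, the term $t^{-1}\log(\|Y\|_2/\|P_ne^{-tL}YP_n\|_2)$ diverges as $t\downarrow0$ unless $P_nYP_n=Y$, so $n\to\infty$ must be taken first.

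\paragraph{The repaired route.} With these repairs your direct route works and differs from the paper's. Let $n\to\infty$ at fixed $t$, and use $\log x\le x-1$ together with $\|Y\|_2^2-\|e^{-tL}Y\|_2^2\le2t\cE(Y)$. This gives
\[
\Ent_{2,\sigma}(e^{-tL}Y)\le4\log C\|e^{-tL}Y\|_2^2+4s\cE(Y)\qquad\text{for every } Y\in D(\cE).
\]
Lower semicontinuity as $t\downarrow0$ then gives \eqref{eq:defective-general} on all of $D(\cE)$ at once.

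The paper instead works on the spectral core $\bigcup_RE_R\mathscr T_2(\cH)$. It applies the interpolated bound to $e^{tL}X$, so that only $\log\|e^{tL}X\|_2$ is differentiated, where $L$ is bounded. It then extends by $E_RX\to X$ in form norm.

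Your Step~3 (entropy-domain Rothaus plus Poincar\'e, using $\cE(c\sqrt\sigma+Z)=\cE(Z)$) matches the paper's Steps~4--5.
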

\begin{proof}
\emph{Step 1: a spectral form core.}
Let $E_R=\mathbf1_{[0,R]}(L)$ and set
$\mathcal C=\bigcup_{R>0}E_R\mathscr T_2(\cH)$.
For $X\in D(\cE)$, the spectral truncations $X_R=E_RX$ satisfy
\[
 \|X-X_R\|_{\cE}^2
 =\|(I-E_R)X\|_2^2+\|(I-E_R)L^{1/2}X\|_2^2\longrightarrow0,
\]
where $\|X\|_{\cE}^2=\|X\|_2^2+\cE(X)$.  Thus
$\mathcal C$ is a form core.  If $X\in E_R\mathscr T_2(\cH)$,
then $e^{sL}X$ is well defined and has norm at most $e^{sR}\|X\|_2$.
The assumed endpoint therefore gives
\[
 \|X\|_{4\leftarrow2,\sigma}
 =\|e^{-sL}e^{sL}X\|_{4\leftarrow2,\sigma}
 \le C\|e^{sL}X\|_2\le Ce^{sR}\|X\|_2.
\]
Thus the transported $L^4$ norm is bounded by a constant times
the Hilbert--Schmidt norm on each spectral subspace. We first prove the estimates for a nonzero
$X\in E_R\mathscr T_2(\cH)$; the case $X=0$ is immediate.

\emph{Step 2: endpoint interpolation.}
Apply Stein interpolation \cite[Chapter~V]{SteinWeiss1971},
in its weighted noncommutative form
\cite[Theorem~2.10]{Jencova2018Renyi}, on
$0\le\operatorname{Re}z\le1$ to the analytic family $e^{-szL}$.  On the first boundary $e^{-isyL}$ is
unitary on $L^2$, and on the second boundary the factorization
$e^{-s(1+iy)L}=e^{-sL}e^{-isyL}$ gives the $L^2$-to-$L^4$ bound
$C$, for every $y\in\R$.  Consequently, for $0\le t\le s$,
\[
 \|e^{-tL}X\|_{q(t)\leftarrow2,\sigma}\le C^{t/s}\|X\|_2,
 \qquad \frac1{q(t)}=\frac{1-t/s}{2}+\frac{t/s}{4}.
\]
The exponent is $q(t)=4s/(2s-t)$. Apply the estimate to
$e^{tL}X$, which is well defined on the spectral subspace from
Step~1. Since $e^{-tL}e^{tL}X=X$, it follows that
\[
 \log\frac{\|X\|_{q(t)\leftarrow2,\sigma}}{\|X\|_2}
 \le\frac{t}{s}\log C+
 \log\frac{\|e^{tL}X\|_2}{\|X\|_2},\qquad0<t\le s.
\]

\emph{Step 3: finite cutoffs and the defect.}
Let $P_n$ be the finite spectral projections of $\sigma$ used above
and put $X_n=P_nXP_n$. For all sufficiently large $n$, $X_n\ne0$.
Fix $0<t\le s$ and define, for $0\le u\le t$,
\[
 g_n(u)=\log\|X_n\|_{q(u)\leftarrow2,\sigma}.
\]
Weighted interpolation makes $g_n$ convex, since $1/q(u)$ is affine
in $u$; see \cite{Kosaki1984Interpolation,Jencova2018Renyi}.
We differentiate only finite matrices on the range of $P_n$. The finite-matrix entropy formula and $q'(0)=1/s$ give
\[
 \frac{\Ent_{2,\sigma}(X_n)}{4s\|X_n\|_2^2}
 =g_n'(0)\le\frac{g_n(t)-g_n(0)}{t}.
\]
The weight here is the restriction of $\sigma$, without renormalizing
its trace. Thus this is precisely $\Ent_{2,\sigma}(X_n)$, with no
normalization correction.
By the definition of the transported norm,
\[
 \|X_n\|_2\longrightarrow\|X\|_2,\qquad
 \|X_n\|_{q(t)\leftarrow2,\sigma}
 \longrightarrow\|X\|_{q(t)\leftarrow2,\sigma}<\infty.
\]
Lower semicontinuity of the entropy and the bound on $g_n'(0)$ give
\begin{align*}
 \Ent_{2,\sigma}(X)
 &\le\liminf_{n\to\infty}\Ent_{2,\sigma}(X_n)\\
 &\le\frac{4s\|X\|_2^2}{t}
 \log\frac{\|X\|_{q(t)\leftarrow2,\sigma}}{\|X\|_2}\\
 &\le4\log C\|X\|_2^2+
 \frac{4s\|X\|_2^2}{t}
 \log\frac{\|e^{tL}X\|_2}{\|X\|_2}.
\end{align*}
Only the last Hilbert--Schmidt norm needs to be differentiated.
The operator $L$ is bounded on $E_R\mathscr T_2(\cH)$, so
\[
 \left.\frac{\dd}{\dd t}\log\|e^{tL}X\|_2\right|_{t=0}
 =\frac{\operatorname{Re}\langle X,LX\rangle}{\|X\|_2^2}
 =\frac{\cE(X)}{\|X\|_2^2}.
\]
Letting $t\downarrow0$ proves \eqref{eq:defective-general} on
$\mathcal C$. The argument applies to complex $X$ even when the spectral
projections of $L$ and $\sigma$ do not commute.

\emph{Step 4: the centred estimate.}
Write $a=\langle\sqrt\sigma,X\rangle$ and
$X_0=X-a\sqrt\sigma$.  Since $\|\sqrt\sigma\|_2=1$ and
$\sqrt\sigma\in\ker L$, we have
\[
 \langle\sqrt\sigma,X_0\rangle=0,\qquad
 \|X\|_2^2=|a|^2+\|X_0\|_2^2,\qquad \cE(X_0)=\cE(X).
\]
Moreover $X_0\in\mathcal C$, because every $E_R$ fixes
$\sqrt\sigma$.  The entropy-domain Rothaus inequality proved in
\cref{lem:rothaus-entropy-domain} below gives
\[
 \Ent_{2,\sigma}(X)\le\Ent_{2,\sigma}(X_0)+2\|X_0\|_2^2.
\]
Applying \eqref{eq:defective-general} to $X_0$ and using
$\cE(X_0)=\cE(X)$ now gives
\[
 \Ent_{2,\sigma}(X)\le4s\cE(X)+(4\log C+2)\|X_0\|_2^2.
\]
Thus the defect depends only on the component orthogonal to the
invariant vector.

\emph{Step 5: extension to the form domain and the gap.}
For general $X\in D(\cE)$, use the truncations $X_R=E_RX$
from Step~1.  They satisfy
$\cE(X_R)\to\cE(X)$ and $\|X_R\|_2\to\|X\|_2$.
Since $E_R\sqrt\sigma=\sqrt\sigma$, their centred components are
$(X_R)_0=E_RX_0$ and converge to $X_0$ in $L^2$.
For the defective bound, lower semicontinuity gives
\begin{align*}
 \Ent_{2,\sigma}(X)
 &\le\liminf_{R\to\infty}\Ent_{2,\sigma}(X_R)\\
 &\le\lim_{R\to\infty}\bigl(4s\cE(X_R)+4\log C\|X_R\|_2^2\bigr)
 =4s\cE(X)+4\log C\|X\|_2^2.
\end{align*}
Using the centred bound from Step~4 instead gives
\[
 \Ent_{2,\sigma}(X)
 \le\lim_{R\to\infty}\bigl(4s\cE(X_R)+(4\log C+2)\|E_RX_0\|_2^2\bigr)
 =4s\cE(X)+(4\log C+2)\|X_0\|_2^2.
\]
Thus both inequalities hold throughout $D(\cE)$.  With $X=\sqrt\rho$, the identity
$\Ent_{2,\sigma}(\sqrt\rho)=D(\rho\|\sigma)$ then gives
\eqref{eq:Rothaus}.

If $\lambda>0$, the definition of the spectral gap on the orthogonal
complement of $\sqrt\sigma$ gives
$\lambda\|X_0\|_2^2\le\cE(X_0)=\cE(X)$.
Substitution into the centred bound, again with $X=\sqrt\rho$,
yields
\[
 D(\rho\|\sigma)\le
 \left(4s+\frac{4\log C+2}{\lambda}\right)\cE(\sqrt\rho),
\]
which is \eqref{eq:tight-LSI}.  Finally, when $C=1$, the defect
$4\log C\|X\|_2^2$ in \eqref{eq:defective-general} vanishes.
Applying that inequality directly to $\sqrt\rho$ gives the constant
$4s$, without centering or a spectral-gap assumption.
\end{proof}

\begin{lemma}[Rothaus inequality on the entropy domain]
\label{lem:rothaus-entropy-domain}
Let $\sigma$ be a faithful density matrix on $\cH$.  For every
$Y\in\mathscr T_2(\cH)$ and $a\in\C$,
\begin{equation}\label{eq:rothaus-entropy-domain}
 \Ent_{2,\sigma}(Y+a\sqrt\sigma)
 \le\Ent_{2,\sigma}(Y)+2\|Y\|_2^2.
\end{equation}
Infinite values are allowed.  In particular, the estimate applies to
the centred vector $Y=X-\langle\sqrt\sigma,X\rangle\sqrt\sigma$
whenever its entropy is finite.  
\end{lemma}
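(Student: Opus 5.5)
The plan is to reduce to finite dimensions, prove the inequality there by a one-parameter calculus argument of Rothaus type, and then pass to the limit using lower semicontinuity of $\Ent_{2,\sigma}$. \emph{Reduction.} If $\Ent_{2,\sigma}(Y)=\infty$ there is nothing to prove, so assume it is finite. Let $P_n$ be the finite spectral projections of $\sigma$ from the definition of the transported norms, and set $Y_n=P_nYP_n$ and $\sigma_n=P_n\sigma P_n$ (not renormalized). Then $P_n(Y+a\sqrt\sigma)P_n=Y_n+a\sqrt{\sigma_n}$ converges to $Y+a\sqrt\sigma$ in $\mathscr T_2$. By lower semicontinuity,
\[
 \Ent_{2,\sigma}(Y+a\sqrt\sigma)\le\liminf_n\Ent_{2,\sigma_n}(Y_n+a\sqrt{\sigma_n}),
\]
so it suffices to prove the finite-dimensional inequality and to check that $\Ent_{2,\sigma_n}(Y_n)\to\Ent_{2,\sigma}(Y)$. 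For the latter, compressing by $P_n$ is a (non-unital) conditional expectation. Data processing for the extended relative entropy gives the upper bound up to a normalization term $\|Y_n\|_2^2\log(\|Y\|_2^2/\|Y_n\|_2^2)\to0$, and lower semicontinuity gives the reverse inequality.

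\emph{Finite-dimensional step.} Write $\Ent_{2,\sigma}(W)=\frac12\bigl[F(W^\dagger W)+F(WW^\dagger)\bigr]$ with
\[
 F(\rho)=\Tr\rho\log\rho-\Tr\rho\log\sigma-\Tr\rho\log\Tr\rho .
\]
It suffices to treat each of the two terms separately, by symmetry under $W\mapsto W^\dagger$ (which sends $\sqrt\sigma\mapsto\sqrt\sigma$). By a phase rotation one may also take $a\ge0$. Consider $\varphi(a)=F\bigl((Y+a\sqrt\sigma)^\dagger(Y+a\sqrt\sigma)\bigr)$. The key claim is the noncommutative analogue of Rothaus' estimate: $\varphi(a)-\varphi(0)\le 2\|Y\|_2^2$ for all $a$. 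Following Rothaus, I would compute $\varphi'$ and $\varphi''$ using the integral representation
\[
 \frac{\dd}{\dd a}\Tr\rho_a\log\rho_a=\Tr\dot\rho_a\log\rho_a+\Tr\dot\rho_a .
\]
Here $\rho_a=Y^\dagger Y+a(Y^\dagger\sqrt\sigma+\sqrt\sigma Y)+a^2\sigma$, so $\dot\rho_a=Y^\dagger\sqrt\sigma+\sqrt\sigma Y+2a\sigma$. Along the ray the term $-\Tr\rho\log\sigma$ is a quadratic polynomial in $a$. As $a\to\infty$, $\rho_a/a^2\to\sigma$, so $\varphi(a)$ tends to a finite limit computable from the second-order expansion of relative entropy: $D$ of a perturbation of $\sigma$ is asymptotically a $\chi^2$-type quadratic form. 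That limit is at most $\|Y\|_2^2$-type quantities plus $\varphi(0)$. I would then show that $\varphi$ attains no interior maximum exceeding this bound by analysing critical points, using operator convexity of $x\log x$ through the Fréchet derivative of $\log$ (the Kubo--Ando/Lieb integral $\int_0^\infty(\rho+r)^{-1}\cdot(\rho+r)^{-1}\dd r$). This should reduce the $\chi^2$ comparison to the Lieb concavity bound $\langle K,\mathbf{D}\log\rho\,(K)\rangle\le\Tr K^\dagger\rho^{-1}K$-type estimates, with the constant $2$ appearing from the factor $2$ in $\dot\rho_a$.

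\emph{Main obstacle.} The hard step is the finite-dimensional inequality for non-commuting $Y$ and $\sigma$. The classical Rothaus proof uses pointwise convexity in $f$, which is unavailable here. As a fallback, I would invoke the known noncommutative Rothaus lemma of Olkiewicz--Zegarlinski for the symmetric weighted $L^2$ entropy, in its finite-matrix form. The $P_n$ cutoff above then transfers it verbatim, since the weight $\sigma_n$ is a faithful positive matrix on $P_n\cH$ and the lemma does not require unit trace after accounting for the $\Tr\rho\log\Tr\rho$ normalization.
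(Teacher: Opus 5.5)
Your outer structure matches the paper's: prove the inequality for matrices, then pass to $\cH$ by spectral cutoffs of $\sigma$ and lower semicontinuity. However, neither load-bearing step is actually established, and one of them rests on an argument that does not apply.

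\textbf{The finite-dimensional inequality.} This is the heart of the lemma, and Plan A does not prove it. The large-$a$ limit of $\varphi$ together with an unspecified critical-point analysis does not bound $\varphi(a)-\varphi(0)$ at intermediate $a$. The reduction to BKM/Lieb-type estimates is also not carried out. Plan B imports the result instead. For that you need a version valid for arbitrary non-self-adjoint $Y$, complex $a$, no centring, and this two-sided symmetrized entropy. That is exactly the generality the paper uses, since the lemma is applied to complex standard-form vectors in the proof of \cref{thm:Gross}.

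The idea that makes this step short is the Ball--Carlen--Lieb $2$-uniform smoothness of Schatten norms. Apply it at the base point $a\sigma^{1/q}$ in the direction $\sigma^{1/(2q)-1/4}Y\sigma^{1/(2q)-1/4}$:
\[
 \|Y+a\sqrt\sigma\|_{q\leftarrow2,\sigma}^2
 \le|a|^2+2\operatorname{Re}\bigl(\bar a\langle\sqrt\sigma,Y\rangle\bigr)
 +(q-1)\|Y\|_{q\leftarrow2,\sigma}^2,\qquad2\le q\le4.
\]
The first-order term does not depend on $q$, and equality holds at $q=2$. Since $\partial_q\|Z\|_{q\leftarrow2,\sigma}^2$ at $q=2^+$ equals $\tfrac12\Ent_{2,\sigma}(Z)$, comparing right derivatives at $q=2$ gives $\Ent_{2,\sigma}(Y+a\sqrt\sigma)\le\Ent_{2,\sigma}(Y)+2\|Y\|_2^2$ directly, with no calculus in $a$. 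The same argument with the one-sided weighting $Z\sigma^{1/q-1/2}$ would also justify your unproved claim that the two relative-entropy terms can be treated separately.

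\textbf{The cutoff convergence.} The claim $\Ent_{2,\sigma}(Y_n)\to\Ent_{2,\sigma}(Y)$ is not a data-processing statement, because compressing $Y$ does not compress $Y^\dagger Y$:
\[
 Y_n^\dagger Y_n=P_nY^\dagger P_nYP_n=P_nY^\dagger YP_n-R_n,
 \qquad R_n=P_nY^\dagger Q_nYP_n\ge0,\quad Q_n=I-P_n.
\]
The same happens for $Y_nY_n^\dagger$. Data processing for the pinching $A\mapsto P_nAP_n+Q_nAQ_n$, which fixes $\sigma$, controls only $P_nY^\dagger YP_n$. Relative entropy is not monotone when its first argument decreases in operator order, so your upper bound on $\limsup_n\Ent_{2,\sigma}(Y_n)$ is missing.

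The paper supplies it as follows. Set $\mathfrak D_\sigma(A)=mD(A/m\|\sigma)+m\log m$ with $m=\Tr A$, and let $A_n=Y_n^\dagger Y_n$. Operator monotonicity of the logarithm gives the superadditivity
\[
 \mathfrak D_\sigma(A_n+R_n)\ge\mathfrak D_\sigma(A_n)+\mathfrak D_\sigma(R_n).
\]
Combine this with the pinching bound and with $\mathfrak D_\sigma(R_n)\ge r_n\log r_n$, $\mathfrak D_\sigma(Q_nY^\dagger YQ_n)\ge s_n\log s_n$. This yields
\[
 \mathfrak D_\sigma(A_n)\le\mathfrak D_\sigma(Y^\dagger Y)-r_n\log r_n-s_n\log s_n,
\]
where $r_n=\|Q_nYP_n\|_2^2\to0$ and $s_n=\|YQ_n\|_2^2\to0$.

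\textbf{A smaller point.} With the unnormalized weight $P_n\sigma P_n$ of trace $z_n<1$, the matrix inequality is not literally trace-independent. It acquires the extra term $(\|Y_n\|_2^2-\|X_n\|_2^2)\log z_n$, where $X_n=P_n(Y+a\sqrt\sigma)P_n$. This term may be positive, but it vanishes as $z_n\to1$.
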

\begin{proof}
\emph{Finite dimension.}
For $2\le q\le4$, set
$B_q=\sigma^{1/(2q)-1/4}Y\sigma^{1/(2q)-1/4}$.
The first variation of the squared Schatten norm at
$a\sigma^{1/q}$ is
\[
 \left.\frac{\dd}{\dd u}\|a\sigma^{1/q}+uB_q\|_q^2\right|_{u=0}
 =2\operatorname{Re}\bigl(\bar a\Tr(\sigma^{1-1/q}B_q)\bigr)
 =2\operatorname{Re}\bigl(\bar a\langle\sqrt\sigma,Y\rangle\bigr).
\]
For $a=0$ both sides vanish.  The Ball--Carlen--Lieb smoothness
inequality \cite{BallCarlenLieb1994} therefore gives
\[
 \|Y+a\sqrt\sigma\|_{q\leftarrow2,\sigma}^2
 \le |a|^2+2\operatorname{Re}\bigl(\bar a\langle\sqrt\sigma,Y\rangle\bigr)
 +(q-1)\|Y\|_{q\leftarrow2,\sigma}^2.
\]
This is an equality at $q=2$, without a centering assumption.
The middle term is independent of $q$.  For any fixed $Z$,
\[
 \left.\partial_q\|Z\|_{q\leftarrow2,\sigma}^2\right|_{q=2+}
 =2\|Z\|_2^2\left.\partial_q\log\|Z\|_{q\leftarrow2,\sigma}\right|_{q=2+}
 =\frac12\Ent_{2,\sigma}(Z).
\]
Since equality holds at $q=2$, the right derivatives satisfy
\[
 \frac12\Ent_{2,\sigma}(Y+a\sqrt\sigma)
 \le\|Y\|_2^2+\frac12\Ent_{2,\sigma}(Y).
\]
This is \eqref{eq:rothaus-entropy-domain}.  When $Z=0$ its squared
norm is identically zero, so the same argument includes $Y=0$ and
$Y+a\sqrt\sigma=0$.

\emph{Entropy recovery under spectral cutoffs.}
For a positive trace-class operator $A$ of trace $m>0$, write
\[
 \mathfrak D_\sigma(A):=mD(A/m\|\sigma)+m\log m,
 \qquad \mathfrak D_\sigma(0):=0.
\]
This definition avoids subtracting possibly infinite energy and
entropy traces.  The functional is lower semicontinuous in trace
norm \cite[Theorem~4.1(i)]{Hiai2018Standard} and satisfies
$\mathfrak D_\sigma(A)\ge m\log m$.
Choose increasing finite-rank spectral projections $P_n$ of $\sigma$
with $P_n\to I$ strongly, and put $Q_n=I-P_n$, $Y_n=P_nYP_n$.
We claim that finite $\Ent_{2,\sigma}(Y)$ implies
\begin{equation}\label{eq:entropy-cutoff-recovery}
 \Ent_{2,\sigma}(Y_n)\longrightarrow\Ent_{2,\sigma}(Y).
\end{equation}
To see this, set
\[
 A=Y^\dagger Y,\qquad A_n=Y_n^\dagger Y_n,\qquad
 R_n=P_nY^\dagger Q_nYP_n,\qquad S_n=Q_nAQ_n.
\]
Then $A_n+R_n=P_nAP_n$.  On the finite block, the reference-logarithm
terms cancel in the difference
\begin{align*}
 \mathfrak D_\sigma(A_n+R_n)-\mathfrak D_\sigma(A_n)-\mathfrak D_\sigma(R_n)
 ={}&\Tr\bigl(A_n[\log(A_n+R_n)-\log A_n]\bigr)\\
 &+\Tr\bigl(R_n[\log(A_n+R_n)-\log R_n]\bigr)\ge0.
\end{align*}
The last inequality follows from operator monotonicity of the logarithm
\cite[Chapter~V]{Bhatia1997Matrix}, with regularization at zero.
Data processing \cite[Theorem~4.1(iv)]{Hiai2018Standard} for the
pinching $A\mapsto P_nAP_n+Q_nAQ_n$, which fixes $\sigma$, gives the second:
\[
 \mathfrak D_\sigma(A_n)+\mathfrak D_\sigma(R_n)
 +\mathfrak D_\sigma(S_n)
 \le\mathfrak D_\sigma(P_nAP_n)+\mathfrak D_\sigma(S_n)
 \le\mathfrak D_\sigma(A).
\]
The discarded masses tend to zero because
\[
 r_n:=\Tr R_n=\|Q_nYP_n\|_2^2\le\|Q_nY\|_2^2\longrightarrow0,
 \qquad s_n:=\Tr S_n=\|YQ_n\|_2^2\longrightarrow0.
\]
Since $\mathfrak D_\sigma(R_n)\ge r_n\log r_n$ and
$\mathfrak D_\sigma(S_n)\ge s_n\log s_n$, it follows that
\[
 \mathfrak D_\sigma(A_n)
 \le\mathfrak D_\sigma(A)-r_n\log r_n-s_n\log s_n.
\]
The trace-norm convergence follows from the ideal inequality:
\[
 \|A_n-A\|_1
 \le(\|Y_n\|_2+\|Y\|_2)\|Y_n-Y\|_2\longrightarrow0.
\]
Combining the upper bound with lower semicontinuity gives
\[
 \mathfrak D_\sigma(A)
 \le\liminf_n\mathfrak D_\sigma(A_n)
 \le\limsup_n\mathfrak D_\sigma(A_n)\le\mathfrak D_\sigma(A).
\]
Applying this argument to $Y^\dagger$ also yields
$\mathfrak D_\sigma(Y_nY_n^\dagger)\to\mathfrak D_\sigma(YY^\dagger)$.
The identity
\[
 \Ent_{2,\sigma}(Y)
 =\tfrac12\mathfrak D_\sigma(Y^\dagger Y)
 +\tfrac12\mathfrak D_\sigma(YY^\dagger)
 -\|Y\|_2^2\log\|Y\|_2^2
\]
and $\|Y_n\|_2\to\|Y\|_2$ prove
\eqref{eq:entropy-cutoff-recovery}.

\emph{Passage to the entropy domain.}
Assume $\Ent_{2,\sigma}(Y)<\infty$, since otherwise the inequality is immediate.  Put $z_n=\Tr(P_n\sigma)$ and
$\sigma_n=P_n\sigma P_n/z_n$, considered on $P_n\cH$.
For every $Z=P_nZP_n$,
\[
 \Ent_{2,\sigma_n}(Z)
 =\Ent_{2,\sigma}(Z)+\|Z\|_2^2\log z_n.
\]
With $X=Y+a\sqrt\sigma$, its compression is
$X_n=Y_n+a\sqrt{z_n}\sqrt{\sigma_n}$.  The finite-dimensional
inequality gives
\[
 \Ent_{2,\sigma_n}(X_n)
 \le\Ent_{2,\sigma_n}(Y_n)+2\|Y_n\|_2^2.
\]
Returning to the original reference state, this reads
\[
 \Ent_{2,\sigma}(X_n)
 \le\Ent_{2,\sigma}(Y_n)+2\|Y_n\|_2^2
 +(\|Y_n\|_2^2-\|X_n\|_2^2)\log z_n.
\]
Here $z_n\to1$, both norm sequences are bounded, and $X_n\to X$
in Hilbert--Schmidt norm.  Thus \eqref{eq:entropy-cutoff-recovery} gives
\[
 \Ent_{2,\sigma}(X)
 \le\liminf_n\Ent_{2,\sigma}(X_n)
 \le\Ent_{2,\sigma}(Y)+2\|Y\|_2^2,
\]
as required.
\end{proof}

\subsection{Entropy decay from KMS symmetry}
We prove \cref{thm:KMS-endpoint} by turning the hypercontractive
bound into entropy contraction through weighted interpolation and
noncommutative $L^q$ smoothness.

We use the standard centred smoothness estimate
\begin{equation}\label{eq:centred-smoothness}
 \|aI+Y\|_{q,\sigma}^2\le |a|^2+(q-1)\|Y\|_{q,\sigma}^2,
 \qquad 2\le q<\infty,\quad\langle Y\rangle_\sigma=0.
\end{equation}
It is the Ball--Carlen--Lieb inequality \cite{BallCarlenLieb1994}
applied at $a\sigma^{1/q}$ in direction $\iota_{q,\sigma}(Y)$;
the first variation vanishes by centering.  Schatten-norm density
extends the matrix inequality to the present Hilbert space.  This is
also the estimate used in \cite[Lemma~2.3]{GaoWang2026Entropy}.
The next lemma gives the extra time needed to make the norm at most one.

\begin{lemma}\label{lem:gap-strictification}
Let $T_t$ be a $\sigma$-preserving Markov semigroup with scalar fixed
space and let $\lambda>0$.  Suppose that, for every $\sigma$-centred $Y\in L^2(\sigma)$,
\[
 \|T_tY\|_{2,\sigma}\le e^{-\lambda t}\|Y\|_{2,\sigma}.
\]
If, for some $u>0$, $q>2$, and $K\ge1$,
\[
 \|T_u:L^2(\sigma)\to L^q(\sigma)\|\le K,
\]
then $\|T_{u+r}:L^2(\sigma)\to L^q(\sigma)\|\le1$ for
\begin{equation}\label{eq:generic-strictification-delay}
 r=\lambda^{-1}\bigl(\log K+\tfrac12\log(q-1)\bigr).
\end{equation}
\end{lemma}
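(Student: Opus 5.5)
Given $X\in L^2(\sigma)$, we write it as $X=aI+Y$ with $a=\langle X\rangle_\sigma$ and $Y$ $\sigma$-centred. By KMS symmetry these pieces are orthogonal in $L^2(\sigma)$, so $\|X\|_{2,\sigma}^2=|a|^2+\|Y\|_{2,\sigma}^2$. Indeed, $\langle I,Y\rangle_{2,\sigma}=\Tr(\sigma Y)=0$, using the continuous extension of the mean. The semigroup fixes $I$, since it is unital as the dual of a trace-preserving map. It also preserves the mean, since $\sigma$ is invariant, so $T_{u+r}X=aI+T_{u+r}Y$ with $T_{u+r}Y$ still centred.

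We apply the centred smoothness estimate \eqref{eq:centred-smoothness} in $L^q(\sigma)$:
\[
 \|T_{u+r}X\|_{q,\sigma}^2\le |a|^2+(q-1)\|T_u(T_rY)\|_{q,\sigma}^2 .
\]
We then use the endpoint bound at time $u$ and the gap decay at time $r$:
\[
 \|T_u(T_rY)\|_{q,\sigma}\le K\|T_rY\|_{2,\sigma}\le Ke^{-\lambda r}\|Y\|_{2,\sigma}.
\]
This gives $\|T_{u+r}X\|_{q,\sigma}^2\le |a|^2+(q-1)K^2e^{-2\lambda r}\|Y\|_{2,\sigma}^2$. With $r$ as in \eqref{eq:generic-strictification-delay} we have $(q-1)K^2e^{-2\lambda r}=1$. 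The right side is then exactly $|a|^2+\|Y\|_{2,\sigma}^2=\|X\|_{2,\sigma}^2$, which is the claimed contraction.

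The only delicate point is justifying \eqref{eq:centred-smoothness} for $Y=T_{u+r}Y_0$ in infinite dimension, and the finiteness of the $L^q$ norms involved. Finiteness follows from the endpoint hypothesis, since $T_u$ maps $L^2(\sigma)$ into $L^q(\sigma)$. The inequality itself is stated for centred elements of $L^q(\sigma)$ via Schatten-norm density, so it applies directly. If needed, we first prove the bound for bounded observables $X$, which are dense in $L^2(\sigma)$, and then extend by continuity of $T_{u+r}:L^2\to L^q$. We also need that centring commutes with the consistent extensions of $T_t$; this follows from $\Tr(\sigma T_tX)=\Tr(\sigma X)$ on bounded observables and continuity. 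No other obstacle arises: the argument is purely algebraic once these identifications are in place.
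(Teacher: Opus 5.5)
Your proposal is correct and follows the paper's proof essentially step for step: you split $X=aI+Y$, use orthogonality and mean preservation, bound $\|T_uT_rY\|_{q,\sigma}\le Ke^{-\lambda r}\|Y\|_{2,\sigma}$, apply the centred smoothness estimate \eqref{eq:centred-smoothness}, and choose $r$ so that $(q-1)K^2e^{-2\lambda r}=1$. One small correction: the orthogonality $\langle I,Y\rangle_{2,\sigma}=\Tr(\sigma Y)=0$ follows directly from the definition of the inner product, not from KMS symmetry, which this lemma does not assume.
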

\begin{proof}
Write $X=aI+Y$, where $a=\langle X\rangle_\sigma$ and
$\langle Y\rangle_\sigma=0$.  Invariance and orthogonality give
\[
 \langle T_tY\rangle_\sigma=0,\qquad T_tI=I,\qquad
 \|X\|_{2,\sigma}^2=|a|^2+\|Y\|_{2,\sigma}^2.
\]
Apply the gap estimate before using the endpoint:
\[
 \|T_uT_rY\|_{q,\sigma}
 \le K\|T_rY\|_{2,\sigma}
 \le Ke^{-\lambda r}\|Y\|_{2,\sigma}.
\]
The centred smoothness inequality \eqref{eq:centred-smoothness} then gives
\begin{align*}
 \|T_{u+r}X\|_{q,\sigma}^2
 &=\|aI+T_uT_rY\|_{q,\sigma}^2\\
 &\le |a|^2+(q-1)\|T_uT_rY\|_{q,\sigma}^2
 \le |a|^2+(q-1)K^2e^{-2\lambda r}\|Y\|_{2,\sigma}^2.
\end{align*}
The time in \eqref{eq:generic-strictification-delay} solves $(q-1)K^2e^{-2\lambda r}=1$.  The last expression is
therefore $\|X\|_{2,\sigma}^2$.
\end{proof}

\begin{lemma}
\label{lem:entropy-approximation}
For every state $\rho$ there are $\rho_k\in B(\sigma)$ such that
\begin{equation}\label{eq:entropy-approximation}
 \|\rho_k-\rho\|_1\longrightarrow0,
 \qquad D(\rho_k\|\sigma)\longrightarrow D(\rho\|\sigma)
\end{equation}
in the extended nonnegative reals.
\end{lemma}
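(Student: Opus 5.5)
We construct $\rho_k$ in two stages, first truncating $\rho$ to a finite spectral block of $\sigma$ and then mixing it with $\sigma$. Let $P_n$ be the increasing finite-rank spectral projections of $\sigma$ used in the proof of \cref{lem:rothaus-entropy-domain}, and put $Q_n=I-P_n$. If $D(\rho\|\sigma)=\infty$, lower semicontinuity reduces the problem to trace-norm approximation: any sequence in $B(\sigma)$ converging to $\rho$ in trace norm has relative entropies tending to $\infty$. Such a sequence is $\rho_k=(1-\varepsilon_k)\tau_k+\varepsilon_k\sigma$ with $\tau_k$ the normalized compression $P_{n_k}\rho P_{n_k}/\Tr(P_{n_k}\rho)$. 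This lies in $B(\sigma)$ because on the finite block $\sigma$ is bounded below, so $\tau_k\le c_k P_{n_k}\le c_k'\sigma$, and the $\varepsilon_k\sigma$ term gives the lower bound $\rho_k\ge\varepsilon_k\sigma$. So the real content is the case $D(\rho\|\sigma)<\infty$.

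For finite entropy we use the same two-stage family and check convergence of entropies. First, the compressions $\rho_n:=P_n\rho P_n/m_n$ with $m_n=\Tr(P_n\rho)\to1$ converge to $\rho$ in trace norm. Their entropies converge by exactly the cutoff-recovery argument \eqref{eq:entropy-cutoff-recovery}, applied with $Y=\sqrt\rho$. Indeed, $A=\rho$ there, and the finiteness of $\Ent_{2,\sigma}(\sqrt\rho)=D(\rho\|\sigma)$ is the hypothesis. Let $\tilde A_n=P_n\rho P_n$. The pinching and data-processing step gives $\mathfrak D_\sigma(\tilde A_n)+\mathfrak D_\sigma(Q_n\rho Q_n)\le\mathfrak D_\sigma(\rho)$, and lower semicontinuity gives the reverse inequality. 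Since $\mathfrak D_\sigma(\tilde A_n)=m_nD(\rho_n\|\sigma)+m_n\log m_n$, we obtain $D(\rho_n\|\sigma)\to D(\rho\|\sigma)$. Here we work directly with $\tilde A_n=P_n\rho P_n$ rather than $(P_n\sqrt\rho P_n)^2$, which avoids the $R_n$ term.

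Second, fix $n$ and mix with $\sigma$: $\rho_{n,\varepsilon}=(1-\varepsilon)\rho_n+\varepsilon\sigma\in B(\sigma)$. By joint convexity, $D(\rho_{n,\varepsilon}\|\sigma)\le(1-\varepsilon)D(\rho_n\|\sigma)$. Combined with lower semicontinuity as $\varepsilon\to0$, this gives $D(\rho_{n,\varepsilon}\|\sigma)\to D(\rho_n\|\sigma)$, and also $\|\rho_{n,\varepsilon}-\rho_n\|_1\le2\varepsilon$. A diagonal choice $\varepsilon_k\to0$, $n_k\to\infty$ then yields \eqref{eq:entropy-approximation}.

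The main obstacle is the compression step. We must verify that data processing and lower semicontinuity in the extended sense of \cite{Hiai2018Standard} apply to the unnormalized functional $\mathfrak D_\sigma$, and that the discarded mass $\Tr(Q_n\rho Q_n)$ contributes a term bounded below by $s\log s\to0$. This is the same bookkeeping as in \cref{lem:rothaus-entropy-domain}, so we expect to invoke that argument essentially verbatim. We also need to confirm membership in $B(\sigma)$ for the mixed states. The upper bound $\rho_{n,\varepsilon}\le\Gamma\sigma$ holds because $\rho_n$ is supported on $P_n\cH$, where $\sigma\ge\sigma_{\min,n}P_n>0$; hence $\rho_n\le\sigma_{\min,n}^{-1}P_n\le\sigma_{\min,n}^{-1}\sigma$. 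The lower bound is $\varepsilon\sigma\le\rho_{n,\varepsilon}$.
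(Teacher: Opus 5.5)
Your proof is correct and has the same two-stage structure as the paper's: truncate to a finite spectral block of $\sigma$, then mix with $\sigma$. The truncation step, however, is handled differently.

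The paper does not renormalize the compression. It uses the normal channel $\mathsf E_n(\rho)=E_n\rho E_n+\Tr(Q_n\rho)\,Q_n\sigma Q_n/\Tr(Q_n\sigma)$, which refills the discarded block with the Gibbs tail and fixes $\sigma$. One application of data processing then gives the monotone bound $D(\mathsf E_n(\rho)\|\sigma)\le D(\rho\|\sigma)$, with no error term and no mass bookkeeping. The price is the two-block argument needed to show $\mathsf E_n(\rho)\le c_n\sigma$.

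Your normalized compression $P_n\rho P_n/\Tr(P_n\rho)$ is finite rank, so the upper comparison with $\sigma$ is immediate. The cost is the pinching bookkeeping: data processing for $A\mapsto P_nAP_n+Q_nAQ_n$, additivity of $\mathfrak D_\sigma$ over the two blocks, the bound $\mathfrak D_\sigma(Q_n\rho Q_n)\ge s_n\log s_n$, and renormalization by $m_n=1-s_n$. Together these give $m_nD(\rho_n\|\sigma)\le D(\rho\|\sigma)-s_n\log s_n-m_n\log m_n$, with a vanishing error term in place of exact monotonicity. Working with $P_n\rho P_n$ rather than $(P_n\sqrt\rho P_n)^2$ is right, and it does remove the $R_n$ term.

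Three small points:
\begin{itemize}
\item The operator chain should read $\rho_n\le P_n\le\sigma_{\min,n}^{-1}\sigma$. As written, the second inequality in $\rho_n\le\sigma_{\min,n}^{-1}P_n\le\sigma_{\min,n}^{-1}\sigma$ would require $P_n\le\sigma$.
\item You need $\Tr(P_n\rho)>0$, which holds for all large $n$.
\item No diagonal extraction or separate infinite-entropy case is needed. For any $n_k\to\infty$ and $\varepsilon_k\to0$, convexity gives $D(\rho_{n_k,\varepsilon_k}\|\sigma)\le(1-\varepsilon_k)D(\rho_{n_k}\|\sigma)\to D(\rho\|\sigma)$, and lower semicontinuity at $\rho$ gives the reverse inequality. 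This is how the paper proceeds with $\varepsilon_n=1/(n+1)$.
\end{itemize}
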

\begin{proof}
In finite dimension, put $\rho_\varepsilon=(1-\varepsilon)\rho+
\varepsilon\sigma$.  Faithfulness of $\sigma$ makes
$\rho_\varepsilon\in B(\sigma)$, and
\[
 \|\rho_\varepsilon-\rho\|_1\le2\varepsilon,\qquad
 D(\rho_\varepsilon\|\sigma)\le(1-\varepsilon)D(\rho\|\sigma).
\]
Lower semicontinuity gives the matching lower limit.

In infinite dimension, choose increasing finite-rank spectral
projections $E_n\to I$ of $\sigma$, and put $Q_n=I-E_n$.
Faithfulness gives $\Tr(Q_n\sigma)>0$.  Define the normal channel
\[
 \mathsf E_n(\rho)=E_n\rho E_n+
 \Tr(Q_n\rho)\frac{Q_n\sigma Q_n}{\Tr(Q_n\sigma)}.
\]
It replaces the discarded block by the corresponding Gibbs tail.
It preserves the trace and fixes $\sigma$:
\[
 \Tr\mathsf E_n(\rho)=\Tr(E_n\rho)+\Tr(Q_n\rho)=1,
 \qquad \mathsf E_n(\sigma)=\sigma.
\]
Moreover
\[
 \|\mathsf E_n(\rho)-\rho\|_1
 \le\|E_n\rho E_n-\rho\|_1+\Tr(Q_n\rho)\longrightarrow0.
\]
To control the operator order, let $m_n>0$ be the smallest eigenvalue
of $E_n\sigma E_n$ on $E_n\cH$, and set
\[
 c_n=\max\left\{\frac{\|E_n\rho E_n\|_\infty}{m_n},
                  \frac{\Tr(Q_n\rho)}{\Tr(Q_n\sigma)}\right\}.
\]
The two blocks separately satisfy
\[
 E_n\rho E_n\le c_nE_n\sigma E_n,\qquad
 \Tr(Q_n\rho)\frac{Q_n\sigma Q_n}{\Tr(Q_n\sigma)}
 \le c_nQ_n\sigma Q_n,
\]
so $\mathsf E_n(\rho)\le c_n\sigma$.
Choose $\varepsilon_n=1/(n+1)$ and put
$\rho_n=(1-\varepsilon_n)\mathsf E_n(\rho)+\varepsilon_n\sigma$.
Then
\[
 \varepsilon_n\sigma\le\rho_n
 \le\bigl((1-\varepsilon_n)c_n+\varepsilon_n\bigr)\sigma,
 \qquad
 \|\rho_n-\rho\|_1\le\|\mathsf E_n(\rho)-\rho\|_1+2\varepsilon_n\to0.
\]
Thus $\rho_n\in B(\sigma)$.  If $D(\rho\|\sigma)<\infty$,
convexity, data processing, and lower semicontinuity yield
\begin{align*}
 D(\rho_n\|\sigma)
 &\le(1-\varepsilon_n)D(\mathsf E_n(\rho)\|\sigma)
 \le(1-\varepsilon_n)D(\rho\|\sigma),\\
 D(\rho\|\sigma)&\le\liminf_nD(\rho_n\|\sigma).
\end{align*}
These bounds prove entropy convergence.  If $D(\rho\|\sigma)=\infty$,
the second line alone forces $D(\rho_n\|\sigma)\to\infty$.
\end{proof}

The norm-to-entropy implication below is
\cite[Proposition~5.4]{Hirche2022contraction} in finite dimension.
We give the proof for unbounded Gibbs weights and arbitrary
normal states.

\begin{lemma}
\label{lem:weighted-SDPI}
Let $\Phi$ be a normal channel with faithful invariant state $\sigma$,
let $T=\Phi^*$, and assume that $T$ is $\sigma$-KMS symmetric and extends
consistently to positive contractions on $L^r(\sigma)$ for
$1\le r\le\infty$. Suppose that, for some $1\le K<\infty$,
\begin{equation}\label{eq:pq-SDPI-input}
 \|T:L^p(\sigma)\to L^q(\sigma)\|\le K,
 \qquad 1<p\le q<\infty.
\end{equation}
Then
\begin{equation}\label{eq:weighted-SDPI}
 D(\Phi(\rho)\|\sigma)
 \le \frac{(p-1)q}{p(q-1)}D(\rho\|\sigma)
 +\frac q{q-1}\log K
\end{equation}
for every state $\rho$.
\end{lemma}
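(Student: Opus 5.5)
We first prove the estimate for $\rho\in B(\sigma)$ and then remove this restriction with \cref{lem:entropy-approximation}. For a general state $\rho$, choose $\rho_k\in B(\sigma)$ as in \eqref{eq:entropy-approximation}. Since $\Phi$ is normal, it is trace-norm continuous, so $\Phi(\rho_k)\to\Phi(\rho)$ in trace norm. Lower semicontinuity of relative entropy then gives $D(\Phi(\rho)\|\sigma)\le\liminf_k D(\Phi(\rho_k)\|\sigma)$, and the right-hand side of \eqref{eq:weighted-SDPI} passes to the limit. If $D(\rho\|\sigma)=\infty$, there is nothing to prove.

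Fix $\rho\in B(\sigma)$ and write $\rho=\sigma^{1/2}X\sigma^{1/2}$ with $X$ bounded and $\Gamma^{-1}\le X\le\Gamma$; likewise $\Phi(\rho)=\sigma^{1/2}T(X)\sigma^{1/2}$ by KMS symmetry. The plan is to compare relative entropy with sandwiched R\'enyi divergences, expressed as weighted norms: $\|X\|_{r,\sigma}^r=\Tr\bigl((\sigma^{(1-r)/(2r)}\rho\,\sigma^{(1-r)/(2r)})^r\bigr)$, so $\widetilde D_r(\rho\|\sigma)=\frac{r}{r-1}\log\|X\|_{r,\sigma}$. The hypothesis gives
\[
 \widetilde D_q(\Phi(\rho)\|\sigma)=\frac{q}{q-1}\log\|TX\|_{q,\sigma}\le\frac{q}{q-1}\log K+\frac{q}{q-1}\cdot\frac{p-1}{p}\,\widetilde D_p(\rho\|\sigma).
\]
This has exactly the coefficients of \eqref{eq:weighted-SDPI}, but with R\'enyi divergences in place of $D$. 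The task is therefore to pass from these R\'enyi quantities to the Umegaki entropy on both sides. On the left this is immediate, since $D\le\widetilde D_q$ for $q>1$ by monotonicity in the order, and this holds in infinite dimension as well. On the right, however, $\widetilde D_p\ge D$, which goes the wrong way.

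To fix the right-hand side, we interpolate the hypothesis rather than use it at a single $(p,q)$. The $L^1\to L^1$ contraction of $T$ and Riesz--Thorin/Stein interpolation in the weighted $L^r$ scale (\cite{Jencova2018Renyi}, as in Step~2 of \cref{thm:Gross}) give bounds $\|T:L^{p_\theta}\to L^{q_\theta}\|\le K^\theta$ along the segment joining $(1,1)$ to $(1/p,1/q)$. We then differentiate at $\theta=0$, where both norms equal $\Tr\rho=1$; equivalently, we let the exponents tend to $1$. Write $1/p_\theta=1-\theta/p'$ and $1/q_\theta=1-\theta/q'$, where $p'$ and $q'$ are the conjugate exponents. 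Taking $\theta\downarrow0$ in $\log\|TX\|_{q_\theta,\sigma}\le\theta\log K+\log\|X\|_{p_\theta,\sigma}$, and using $\frac{d}{d\theta}\log\|X\|_{p_\theta,\sigma}|_{\theta=0}=D(\rho\|\sigma)/p'$, we obtain
\[
 D(\Phi(\rho)\|\sigma)/q'\le\log K+D(\rho\|\sigma)/p'.
\]
Multiplying by $q'=q/(q-1)$ yields exactly \eqref{eq:weighted-SDPI}, since $q'/p'=\frac{(p-1)q}{p(q-1)}$. This derivative identity is the standard one connecting sandwiched R\'enyi divergences with $D$ as the order tends to $1$.

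The main obstacle is making this derivative argument rigorous in infinite dimension. For $\rho\in B(\sigma)$ the right-hand norm is differentiable, since $X$ and $X^{-1}$ are bounded and the difference $\log\rho-\log\sigma$ is bounded. We plan to compute the derivative after compressing by the finite spectral projections $P_n$ of $\sigma$, as in \cref{thm:Gross} and \cref{lem:rothaus-entropy-domain}, and then let $n\to\infty$. For the left-hand side we only need an inequality: by convexity of $\theta\mapsto\log\|\cdot\|_{q_\theta,\sigma}$ (Kosaki interpolation), the difference quotient dominates the right derivative. Combining this with the lower semicontinuity of the cutoff entropies gives $D(\Phi(\rho)\|\sigma)/q'\le\liminf_{\theta\downarrow0}\theta^{-1}\log\|TX\|_{q_\theta,\sigma}$, which is the direction we need.
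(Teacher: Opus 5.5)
Your proposal is correct and is essentially the paper's proof. Interpolating the $L^p\to L^q$ bound with the $L^1$ contraction and dividing $\log\|TX\|_{q_\theta,\sigma}\le\theta\log K+\log\|X\|_{p_\theta,\sigma}$ by $\theta$ gives exactly the paper's $\vartheta$-independent sandwiched R\'enyi inequality (with $\theta=1-\vartheta$). Your derivative at $\theta=0$ is the limit $\widetilde D_r\to D$ as $r\downarrow1$, and you finish with the same approximation via \cref{lem:entropy-approximation}.

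The cutoff and convexity argument in your last paragraph is unnecessary. Since $\Phi(\rho)\in B(\sigma)$ as well, the R\'enyi-to-Umegaki limit for comparable states handles both sides directly; on the left, $D\le\widetilde D_r$ alone suffices.
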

\begin{proof}
For $\rho\in B(\sigma)$, the observable
$X=\iota_{1,\sigma}^{-1}(\rho)$ is bounded and positive.
We use the inverse only on this range.  KMS symmetry gives
$\Phi\iota_{1,\sigma}=\iota_{1,\sigma}T$.  Weighted complex interpolation
\cite{Kosaki1984Interpolation} of \eqref{eq:pq-SDPI-input} with
the $L^1$ contraction yields, for
$0\le\vartheta<1$,
\[
 \frac1{p_\vartheta}=\vartheta+\frac{1-\vartheta}{p},
 \quad
 \frac1{q_\vartheta}=\vartheta+\frac{1-\vartheta}{q},
 \quad
 \|TX\|_{q_\vartheta,\sigma}
 \le K^{1-\vartheta}\|X\|_{p_\vartheta,\sigma}.
\]
For $r>1$, define the sandwiched R\'enyi relative entropy
\cite[Section~3]{Jencova2018Renyi} by
\[
 \widetilde D_r(\rho\|\sigma)
 :=\frac1{r-1}\log\Tr\!\left[
 \left(\sigma^{(1-r)/(2r)}\rho
 \sigma^{(1-r)/(2r)}\right)^r\right].
\]
For $\rho\in B(\sigma)$, the positive product belongs to the
required Schatten class, and
\[
 \log\|\iota_{1,\sigma}^{-1}(\rho)\|_{r,\sigma}
 =\frac{r-1}{r}\widetilde D_r(\rho\|\sigma).
\]
Since $\Phi$ preserves
$\sigma$ and order, $\Phi(\rho)\in B(\sigma)$ as well.  The interpolated
bound therefore becomes
\[
 \widetilde D_{q_\vartheta}(\Phi(\rho)\|\sigma)
 \le\frac{(p_\vartheta-1)q_\vartheta}{p_\vartheta(q_\vartheta-1)}
 \widetilde D_{p_\vartheta}(\rho\|\sigma)
 +\frac{(1-\vartheta)q_\vartheta}{q_\vartheta-1}\log K.
\]
The interpolation identities imply
\[
 \frac{p_\vartheta-1}{p_\vartheta}
 =(1-\vartheta)\frac{p-1}{p},\qquad
 \frac{q_\vartheta-1}{q_\vartheta}
 =(1-\vartheta)\frac{q-1}{q}.
\]
The coefficients are therefore independent of $\vartheta$:
\[
 \frac{(p_\vartheta-1)q_\vartheta}{p_\vartheta(q_\vartheta-1)}
 =\frac{(p-1)q}{p(q-1)},\qquad
 \frac{(1-\vartheta)q_\vartheta}{q_\vartheta-1}=\frac q{q-1}.
\]
Letting $\vartheta\uparrow1$ proves
\eqref{eq:weighted-SDPI} on $B(\sigma)$ because
$\widetilde D_r(\rho\|\sigma)\to D(\rho\|\sigma)$ as $r\downarrow1$
for comparable states \cite[Proposition~3.8(ii)]{Jencova2018Renyi}.

For a general state with finite input entropy, choose $\rho_k$ from
\cref{lem:entropy-approximation}.  Trace-norm contractivity gives
$\|\Phi(\rho_k)-\Phi(\rho)\|_1\le\|\rho_k-\rho\|_1\to0$.
Consequently
\begin{align*}
 D(\Phi(\rho)\|\sigma)
 &\le\liminf_kD(\Phi(\rho_k)\|\sigma)\\
 &\le\lim_k\left(\frac{(p-1)q}{p(q-1)}D(\rho_k\|\sigma)
                 +\frac q{q-1}\log K\right)\\
 &=\frac{(p-1)q}{p(q-1)}D(\rho\|\sigma)+\frac q{q-1}\log K.
\end{align*}
For infinite input entropy the right-hand side is infinite, so the
assertion holds in the extended sense.
\end{proof}

\begin{proof}[Proof of \cref{thm:KMS-endpoint}]
We use the interpolation argument of
\cite[Theorem~3.3]{GaoWang2026Entropy}, using the preceding lemmas to justify each step in infinite dimension.  Put $\delta=1-2/q_0$,
so $0<\delta<1$.

\emph{Interpolation and removal of the endpoint constant.}
For $0<\theta\le1$, weighted Stein interpolation gives
\[
 \frac1{q_\theta}=\frac{1-\theta}{2}+\frac\theta{q_0}
 =\frac{1-\theta\delta}{2},\qquad
 \|T_{\theta s}:L^2(\sigma)\to L^{q_\theta}(\sigma)\|\le C^\theta.
\]
When $C>1$, \cref{lem:gap-strictification}, with $K=C^\theta$,
gives a contractive endpoint at
\begin{equation}\label{eq:strictification-delay}
 S_\theta=\theta s+
 \frac{\theta\log C+\frac12\log(q_\theta-1)}{\lambda}.
\end{equation}
Interpolate this endpoint with the $L^2$ bound at time zero.  For
$0\le u\le S_\theta$, define $r_\theta(u)$ by
\[
 \frac1{r_\theta(u)}
 =\frac{1-u/S_\theta}{2}+\frac{u/S_\theta}{q_\theta}
 =\frac12\left(1-\frac{\theta\delta u}{S_\theta}\right).
\]
Then $\|T_u:L^2(\sigma)\to L^{r_\theta(u)}(\sigma)\|\le1$.
For $u>0$, \cref{lem:weighted-SDPI} applies with
$p=2$, $q=r_\theta(u)$, and $K=1$.  Its coefficient is
\[
 \frac{r_\theta(u)}{2(r_\theta(u)-1)}
 =\frac1{1+(\theta\delta/S_\theta)u}.
\]
We have therefore obtained the short-time estimate
\[
 D(\Phi_u(\rho)\|\sigma)
 \le\frac{D(\rho\|\sigma)}{1+(\theta\delta/S_\theta)u},
 \qquad0\le u\le S_\theta.
\]

\emph{Iteration and the limiting rate.}
Assume $D(\rho\|\sigma)<\infty$, fix $t>0$, and take an integer
$n$ large enough that $t/n\le S_\theta$.  Applying the last estimate
successively to $\rho,\Phi_{t/n}(\rho),\ldots,\Phi_{(n-1)t/n}(\rho)$ gives
\[
 D(\Phi_t(\rho)\|\sigma)
 \le\left(1+\frac{\theta\delta t}{nS_\theta}\right)^{-n}
 D(\rho\|\sigma)
 \xrightarrow[n\to\infty]{}
 e^{-(\theta\delta/S_\theta)t}D(\rho\|\sigma).
\]
To compute the limit as $\theta\downarrow0$, use
\[
 q_\theta-1=\frac{1+\theta\delta}{1-\theta\delta},\qquad
 \frac{\log(q_\theta-1)}{2\theta}
 =\frac{\log(1+\theta\delta)-\log(1-\theta\delta)}{2\theta}
 \longrightarrow\delta.
\]
It follows that
\[
 \frac{\theta\delta}{S_\theta}
 =\frac{\delta}{s+\lambda^{-1}\left(\log C+
                    \frac{\log(q_\theta-1)}{2\theta}\right)}
 \longrightarrow\frac{\delta}{s+\lambda^{-1}(\log C+\delta)}.
\]
The entropy bound holds for every $\theta>0$, so passing to this
limit proves \eqref{eq:KMS-entropy-decay}.

For $C=1$, interpolation starts directly from $(s,q_0)$:
\[
 \frac1{r(u)}=\frac12\left(1-\frac{\delta u}{s}\right),\qquad
 \|T_u:L^2(\sigma)\to L^{r(u)}(\sigma)\|\le1,
 \qquad0\le u\le s.
\]
The same norm-to-entropy calculation and iteration give the rate
$\delta/s$. 
\end{proof}

\begin{proof}[Proof of \cref{cor:generator-KMS-endpoint}]
The construction is a conservative KMS-symmetric QMS of positive
contractions on the weighted scale, with invariant state $\sigma_\beta$.
Its Hilbert--Schmidt semigroup is $e^{-tL_{\sigma_E}}$.
Thus the nonnegative operator in \eqref{eq:L2-link} is
$L_{\sigma_E}$. The endpoint implies compact resolvent, so the
kernel assumption gives a positive gap and allows us to apply
\cref{thm:KMS-endpoint}; taking $q_0=4$ gives
\eqref{eq:generator-KMS-rate}, without separating the Bohr frequencies.
\end{proof}

\subsection{Stroock--Varopoulos inequality}
We recall the standard finite-dimensional Davies comparison, then
identify the cross-frequency obstruction and give conditions for passing to infinite dimension.
We use the following frequency convention throughout this subsection.  Write the Hamiltonian as
$H=\sum_{E\in\Sp(H)}EP_E$, where $P_E$ is the orthogonal projection
onto its $E$-eigenspace, and set
$\sigma=\sigma_\beta$.  Every faithful matrix state has this Gibbs
representation.
For each bare jump $A^\alpha$, its component at frequency $\nu$ is
\[
 A_\nu^\alpha
 :=\sum_{\substack{E,E'\in\Sp(H)\\E'-E=\nu}}
 P_{E'}A^\alpha P_E,
 \qquad \nu\in\Omega(H).
\]
Thus $[H,A_\nu^\alpha]=\nu A_\nu^\alpha$ and
$[\log\sigma,A_\nu^\alpha]=-\beta\nu A_\nu^\alpha$.
Adjoint closure of the bare family gives
$(A_\nu^\alpha)^\dagger=A_{-\nu}^{\bar\alpha}$.

\begin{proposition}[Finite-dimensional Davies comparison]
\label{thm:SV-finite}
Let $\dim\cH<\infty$ and use the adjoint-closed frequency decomposition
above.  Suppose
\[
 \cL=\sum_{\alpha,\nu}r_{\alpha,\nu}\mathcal D_{A_\nu^\alpha},
 \qquad\mathcal D_A(\rho)=A\rho A^\dagger-\tfrac12\{A^\dagger A,\rho\},
 \qquad r_{\bar\alpha,-\nu}=e^{\beta\nu}r_{\alpha,\nu}\ge0.
\]
Then every faithful state $\rho$ satisfies
$4\cE(\sqrt\rho)\le\EP_{\cL}(\rho)$.
\end{proposition}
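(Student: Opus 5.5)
The plan is to compute both sides as sums over the pairs $(\alpha,\nu)$ and compare them term by term. For each pair the comparison reduces to a scalar inequality between $\log$-differences and squared square-root differences of eigenvalues.

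\emph{Step 1: Dirichlet form.} Write $\sigma=e^{-\beta H}/Z$. The relation $[\log\sigma,A_\nu^\alpha]=-\beta\nu A_\nu^\alpha$ gives $\sigma^{s}A_\nu^\alpha\sigma^{-s}=e^{-s\beta\nu}A_\nu^\alpha$. Combined with $r_{\bar\alpha,-\nu}=e^{\beta\nu}r_{\alpha,\nu}$ and $(A^\alpha_\nu)^\dagger=A^{\bar\alpha}_{-\nu}$, this shows that $\cL$ is KMS (in fact GNS) symmetric. I then transport $\cL^*$ to standard form by $Y=\sigma^{1/4}X\sigma^{1/4}$ and pair each term $(\alpha,\nu)$ with $(\bar\alpha,-\nu)$. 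The standard computation should give
\[
 \cE(Y)=\frac12\sum_{\alpha,\nu}r_{\alpha,\nu}e^{-\beta\nu/2}\,
 \bigl\|\,e^{\beta\nu/4}A_\nu^\alpha Y-Y A_\nu^\alpha e^{-\beta\nu/4}\bigr\|_2^2 .
\]
The exact placement of the weights $e^{\pm\beta\nu/4}$ must be fixed by writing $\sigma^{1/4}A\sigma^{-1/4}=e^{-\beta\nu/4}A$ and checking against the $L^2$ inner product. The structural point is that each summand is a weighted squared commutator $\|aA Y-bYA\|_2^2$ with $ab$ determined by $\nu$.

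\emph{Step 2: entropy production.} I use $\Tr(\cL(\rho)\log\sigma)=0$. This holds because each $\mathcal D_{A_\nu}$ maps $\log\sigma$, viewed in the Heisenberg picture, to a quantity that vanishes after pairing with its partner term, which is invariance of $\sigma$. It follows that
\[
\EP_{\cL}(\rho)=-\Tr(\cL(\rho)\log\rho)+\Tr(\cL(\rho)\log\sigma).
\]
For each term I use $\log\sigma$-covariance to absorb $\beta\nu$. This yields
\[
\EP=\frac12\sum_{\alpha,\nu}r_{\alpha,\nu}\Tr\Bigl(\bigl[A_\nu^{\alpha\dagger},\cdot\,\bigr]\text{-type expression}\Bigr).
\]
More concretely, I diagonalize $\rho=\sum_i\lambda_i|i\rangle\langle i|$ and set $a_{ij}=|\langle i|A_\nu^\alpha|j\rangle|^2$. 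Then
\[
\EP=\sum_{\alpha,\nu}r_{\alpha,\nu}\sum_{i,j}a_{ij}\,\lambda_j\bigl(\log\lambda_j-\log\lambda_i-\beta\nu\bigr),
\]
and symmetrizing with the partner term rewrites this as
\[
\EP=\frac12\sum r_{\alpha,\nu}\sum_{i,j}a_{ij}\bigl(\lambda_j-e^{-\beta\nu}\lambda_i\bigr)\bigl(\log\lambda_j-\log(e^{-\beta\nu}\lambda_i)\bigr)
\]
up to the correct weights. In the same eigenbasis, with $Y=\sqrt\rho$, the Step 1 summand equals
\[
\sum_{i,j}a_{ij}\bigl(\sqrt{\lambda_j}-e^{-\beta\nu/2}\sqrt{\lambda_i}\bigr)^2
\]
after the same weight normalization.

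\emph{Step 3: scalar inequality.} It now suffices to apply $(x-y)(\log x-\log y)\ge4(\sqrt x-\sqrt y)^2$ for $x,y>0$ with $x=\lambda_j$ and $y=e^{-\beta\nu}\lambda_i$. This inequality follows from $\log t\ge 2(\sqrt t-1)/\sqrt t$ type bounds, equivalently $\frac{x-y}{\log x-\log y}\le\bigl(\frac{\sqrt x+\sqrt y}{2}\bigr)^2$, i.e.\ logarithmic mean $\le$ the power mean of order $1/2$. Summing with nonnegative coefficients $r_{\alpha,\nu}a_{ij}$ gives $4\cE(\sqrt\rho)\le\EP_\cL(\rho)$.

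\emph{Main obstacle.} The delicate step is the bookkeeping in Steps 1 and 2. The two sides must come out with the same weights $r_{\alpha,\nu}$ and the same shifted eigenvalue $e^{-\beta\nu}\lambda_i$, so that the comparison really is term by term. This is where the frequency-diagonal (Davies) structure is used: each $A_\nu$ intertwines $\sigma$ by a single scalar. I would first verify the identities for a single pair $\{(\alpha,\nu),(\bar\alpha,-\nu)\}$ with $\cL$ restricted to that pair, since both sides are linear in the generator.
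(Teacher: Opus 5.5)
Your route is essentially the paper's. The paper obtains the proposition from \cite[Theorem~14]{beigi2020quantum} under GNS symmetry, and its direct version (the remark after \cref{lem:paired-generator}) is exactly your argument: frequency-diagonal paired identities, spectral decomposition of $\rho$, and the scalar inequality $4(\sqrt x-\sqrt y)^2\le(x-y)(\log x-\log y)$ applied pair by pair.

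Two slips need correcting, though neither affects the structure of the argument.

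First, $\Tr(\cL(\rho)\log\sigma)$ does not vanish for general $\rho$. Since $(A_\nu^\alpha)^\dagger A_\nu^\alpha$ commutes with $H$, it equals $-\beta\sum_{\alpha,\nu}r_{\alpha,\nu}\nu\Tr\bigl(\rho(A_\nu^\alpha)^\dagger A_\nu^\alpha\bigr)$, which is zero only at $\rho=\sigma$. Your displayed formula nevertheless keeps this term as the $-\beta\nu$ shift, and that shift is precisely what the comparison needs.

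Second, the sign of the shift is wrong. Symmetrizing $\sum_{i,j}a_{ij}\lambda_j(\log\lambda_j-\log\lambda_i-\beta\nu)$ with the partner $(\bar\alpha,-\nu)$ gives $y=e^{+\beta\nu}\lambda_i$. Correspondingly, for self-adjoint $Y$,
\[
\cE(Y)=\tfrac12\sum_{\alpha,\nu}r_{\alpha,\nu}\|A_\nu^\alpha Y-e^{\beta\nu/2}YA_\nu^\alpha\|_2^2 .
\]
So both of your final expressions should carry $e^{\beta\nu}$ rather than $e^{-\beta\nu}$. Because the same $y$ then appears on both sides, the term-by-term comparison goes through unchanged.
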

This is \cite[Theorem~14]{beigi2020quantum} at $p=1$, $q=2$.
The rate pairing makes $-\cL^*$ self-adjoint for
$\Tr(\sigma X^\dagger Y)$, which is the strong reversibility assumed
there.  Under the standard-form identification, their $2$-Dirichlet
form is $\cE(\sqrt\rho)$ and their $1$-Dirichlet form is
$\EP_{\cL}(\rho)/4$; this accounts for the factor four.
For the filtered Davies generator, $r_{\alpha,\nu}=|\widehat f(\nu)|^2$
and the filter KMS identity supplies the rate pairing.

\subsubsection{The cross-frequency obstruction}
The Davies comparison uses frequency diagonality.  When different frequencies are coupled, the paired identity below
still holds, but the scalar chain inequality does not control the
mixed terms.

In finite dimension, retain the adjoint-closed frequency decomposition
above.  For $0<\sigma_E\le\infty$, set
\begin{align*}
 q_{\nu\mu}^{(\sigma_E)}
 &:={e^{-(\nu-\mu)^2/(8\sigma_E^2)}}
 \overline{\widehat f(\nu)}\widehat f(\mu),
 \qquad
 d_{\nu\mu}:=\frac{\beta(\nu-\mu)}4,\\
 c_{\nu\mu}^{(\sigma_E)}
 &:=q_{\nu\mu}^{(\sigma_E)}
 \frac{e^{\beta(\nu+\mu)/4}}{2\cosh d_{\nu\mu}}.
\end{align*}
The state-side cross-frequency generator is
\begin{equation}\label{eq:cross-frequency-generator}
 \cL_{\sigma_E}(\rho)
 =\sum_{\alpha,\nu,\mu}q_{\nu\mu}^{(\sigma_E)}
 \Bigl(A_\mu^\alpha\rho(A_\nu^\alpha)^\dagger
 -\frac{e^{d_{\nu\mu}}}{2\cosh d_{\nu\mu}}(A_\nu^\alpha)^\dagger A_\mu^\alpha\rho
 -\frac{e^{-d_{\nu\mu}}}{2\cosh d_{\nu\mu}}\rho(A_\nu^\alpha)^\dagger A_\mu^\alpha\Bigr).
\end{equation}
All frequency sums here and below run over $\Omega(H)$.
At $\sigma_E=\infty$ this is the original generator; finite
$\sigma_E$ retains the cross terms with a Gaussian weight.

\begin{lemma}
\label{lem:paired-generator}
Let $\rho$ be a faithful state,
$K_\rho=\log\rho-\log\sigma_\beta$, and write
$a_\nu=\beta\nu/2$.  Define
\begin{align*}
 U_{\alpha,\nu}
 &:=e^{-a_\nu/2}A_\nu^\alpha\sqrt\rho
     -e^{a_\nu/2}\sqrt\rho A_\nu^\alpha,\\
 W_{\alpha,\nu}
 &:=e^{-a_\nu}A_\nu^\alpha\rho
     -e^{a_\nu}\rho A_\nu^\alpha,
 \qquad
 V_{\alpha,\nu}:=[A_\nu^\alpha,K_\rho].
\end{align*}
Let $\cE_{\sigma_E}$ be the standard-form Dirichlet form associated
with $\cL_{\sigma_E}$.  Its energy and the entropy production
satisfy
\begin{align}
 \cE_{\sigma_E}(\sqrt\rho)
 &=\operatorname{Re}\sum_{\alpha,\nu,\mu}
 c_{\nu\mu}^{(\sigma_E)}
 \Tr(U_{\alpha,\nu}^\dagger U_{\alpha,\mu}),
 \label{eq:E-paired-generator}\\
 \EP_{\cL_{\sigma_E}}(\rho)
 &=\operatorname{Re}\sum_{\alpha,\nu,\mu}
 c_{\nu\mu}^{(\sigma_E)}
 \Tr(W_{\alpha,\nu}^\dagger V_{\alpha,\mu}).
 \label{eq:EP-paired-generator}
\end{align}
\end{lemma}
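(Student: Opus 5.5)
The plan is a direct computation in finite dimension, with both identities coming from one mechanism. We expand the generator \eqref{eq:cross-frequency-generator} in frequency components, move the modular weights $\sigma^{\pm1/4}$ through the components $A^\alpha_\nu$ using $\sigma^{x}A^\alpha_\nu\sigma^{-x}=e^{-x\beta\nu}A^\alpha_\nu$, and symmetrize each sum under the involution $(\alpha,\nu,\mu)\mapsto(\bar\alpha,-\mu,-\nu)$. This involution is allowed by adjoint closure, $(A^\alpha_\nu)^\dagger=A^{\bar\alpha}_{-\nu}$. The coefficients should be invariant under it. Using the filter identity $\overline{\widehat f(\nu)}=e^{-\beta\nu/2}\widehat f(-\nu)$, one checks $q^{(\sigma_E)}_{-\mu,-\nu}=e^{\beta(\nu+\mu)/2}q^{(\sigma_E)}_{\nu\mu}$. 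Hence the combination $q_{\nu\mu}e^{\beta(\nu+\mu)/4}$ is invariant, and so is $c^{(\sigma_E)}_{\nu\mu}$, because $d_{-\mu,-\nu}=d_{\nu\mu}$. We check this pairing identity first, since both formulas rest on it.

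For \eqref{eq:E-paired-generator} we compute $\cE_{\sigma_E}(Y)=-\langle Y,\iota_{2,\sigma}\cL^*_{\sigma_E}\iota_{2,\sigma}^{-1}(Y)\rangle$ with $Y=\sqrt\rho$. The Heisenberg dual of \eqref{eq:cross-frequency-generator} is
\[
\cL^*(X)=\sum q_{\nu\mu}\Bigl((A_\nu)^\dagger XA_\mu-\tfrac{e^{d}}{2\cosh d}X(A_\nu)^\dagger A_\mu-\tfrac{e^{-d}}{2\cosh d}(A_\nu)^\dagger A_\mu X\Bigr),
\]
with the placement of the anticommutator halves fixed by the dual pairing. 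Conjugating by $\sigma^{1/4}$ turns each term into an expression in $Y$ with explicit factors $e^{\pm\beta\nu/4}$ and $e^{\pm\beta\mu/4}$.

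Next we expand $\Tr(U_{\alpha,\nu}^\dagger U_{\alpha,\mu})$. This gives four terms: two sandwich terms $\Tr(\sqrt\rho A_\nu^\dagger\sqrt\rho A_\mu)$-type, and two one-sided terms $\Tr(A_\nu^\dagger A_\mu\rho)$ and $\Tr(\rho A_\mu A_\nu^\dagger)$, carrying weights $e^{\mp(a_\nu+a_\mu)/2}$ and $e^{\pm(a_\nu-a_\mu)/2}$. Matching against the generator expression uses the symmetrization in two ways. It identifies the two sandwich terms with each other. It also converts $\rho A_\mu A_\nu^\dagger$ into the $\rho (A_{\nu'})^\dagger A_{\mu'}$ ordering, with the relabeled frequencies. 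The factor $1/(2\cosh d)$ then absorbs the split $e^{d}+e^{-d}$, and taking the real part handles the Hermitian pairing.

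For \eqref{eq:EP-paired-generator} we write $\EP(\rho)=-\Tr(\cL(\rho)K_\rho)$. Applying the cyclicity of the trace to each generator term produces $\Tr$ of commutator-type expressions. The identity $[\log\sigma,A_\mu]=-\beta\mu A_\mu$ is used to combine $\log\rho$ and $\log\sigma$ into $K_\rho$, with $[A_\mu,\log\sigma]$ producing the exponential weights $e^{\pm a}$ that appear in $W$. After the same symmetrization, the sum reorganizes as $\Tr(W_{\nu}^\dagger V_\mu)$ weighted by $c_{\nu\mu}$.

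The main obstacle is the bookkeeping of exponents in the one-sided terms. The generator splits them asymmetrically with $e^{\pm d}/(2\cosh d)$, while $U$ and $W$ are symmetric. Every weight must collapse exactly to $c_{\nu\mu}$ only after averaging over the involution, so we verify the exponent identities term by term before summing. The infinite-dimensional statement is not needed, since the lemma is stated in finite dimension.
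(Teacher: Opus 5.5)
Your plan follows the paper's proof. You expand both sides, use cyclicity and $\sigma_\beta^xA_\nu^\alpha\sigma_\beta^{-x}=e^{-x\beta\nu}A_\nu^\alpha$ for the energy, and reindex by $(\alpha,\nu,\mu)\mapsto(\bar\alpha,-\mu,-\nu)$ using $c^{(\sigma_E)}_{-\mu,-\nu}=c^{(\sigma_E)}_{\nu\mu}$, which you derive correctly from the filter KMS identity. Two steps are misattributed, however, and the second would not work as you describe it.

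In \eqref{eq:E-paired-generator} the two sandwich traces coincide by cyclicity, not by the involution, which maps each sandwich sum to itself. What closes the computation is that their weights add: $c^{(\sigma_E)}_{\nu\mu}(e^{d_{\nu\mu}}+e^{-d_{\nu\mu}})=q^{(\sigma_E)}_{\nu\mu}e^{\beta(\nu+\mu)/4}$, which is the coefficient of the conjugated gain term. The involution is used only to bring $\Tr(\rho A_\mu^\alpha(A_\nu^\alpha)^\dagger)$ into loss ordering. After that, both loss terms carry $q^{(\sigma_E)}_{\nu\mu}/(2\cosh d_{\nu\mu})$.

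In \eqref{eq:EP-paired-generator} the modular relation does not produce the weights $e^{\pm a_\nu}$ of $W_{\alpha,\nu}$. It only shifts $\log\rho$ additively inside $V_{\alpha,\mu}$, and the identity in fact holds with $K_\rho$ replaced by any self-adjoint $K$. The work is done by coefficient algebra. After cyclicity, reindex by the involution the two terms of $\Tr(W_{\alpha,\nu}^\dagger V_{\alpha,\mu})$ that carry $e^{a_\nu}$; this turns $c^{(\sigma_E)}_{\nu\mu}e^{a_\nu}$ into $c^{(\sigma_E)}_{\nu\mu}e^{-a_\mu}$. Then use
\[
 c^{(\sigma_E)}_{\nu\mu}e^{-a_\nu}=q^{(\sigma_E)}_{\nu\mu}(1-p_{\nu\mu}),\qquad
 c^{(\sigma_E)}_{\nu\mu}e^{-a_\mu}=q^{(\sigma_E)}_{\nu\mu}p_{\nu\mu},\qquad
 p_{\nu\mu}=\frac{e^{d_{\nu\mu}}}{2\cosh d_{\nu\mu}}.
\]
The two gain contributions then sum to $-q^{(\sigma_E)}_{\nu\mu}\Tr(A_\mu^\alpha\rho(A_\nu^\alpha)^\dagger K)$. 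The two loss contributions acquire exactly the split $p_{\nu\mu}$, $1-p_{\nu\mu}$ of \eqref{eq:cross-frequency-generator}, which gives $-\Tr(\cL_{\sigma_E}(\rho)K)$.

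The paper's rewriting $V_{\alpha,\nu}=A_\nu^\alpha\log(e^{-a_\nu}\rho)-\log(e^{a_\nu}\rho)A_\nu^\alpha$ is not needed for the identity itself. It serves the diagonal estimate after the lemma, where $U$, $W$, $V$ all take the form $A_\nu^\alpha\phi(e^{-a_\nu}\rho)-\phi(e^{a_\nu}\rho)A_\nu^\alpha$.
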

\begin{proof}
Put $Y=\sqrt\rho$.  The four terms in the first pairing are
\begin{align*}
 \Tr(U_{\alpha,\nu}^\dagger U_{\alpha,\mu})
 ={}&e^{-(a_\nu+a_\mu)/2}\Tr(Y(A_\nu^\alpha)^\dagger A_\mu^\alpha Y)\\
 &-e^{(a_\mu-a_\nu)/2}\Tr(Y(A_\nu^\alpha)^\dagger Y A_\mu^\alpha)\\
 &-e^{(a_\nu-a_\mu)/2}\Tr((A_\nu^\alpha)^\dagger Y A_\mu^\alpha Y)\\
 &+e^{(a_\nu+a_\mu)/2}\Tr((A_\nu^\alpha)^\dagger Y^2 A_\mu^\alpha).
\end{align*}
Multiply by $c_{\nu\mu}^{(\sigma_E)}$ and sum.  Using
$\sigma_\beta^tA_\nu^\alpha\sigma_\beta^{-t}
=e^{-2ta_\nu}A_\nu^\alpha$ and pairing the adjoint rows turns this sum into
\[
 -\Tr\!\left(Y\sigma_\beta^{-1/4}
 \cL_{\sigma_E}(\sigma_\beta^{1/4}Y\sigma_\beta^{1/4})
 \sigma_\beta^{-1/4}\right)
 =\cE_{\sigma_E}(Y).
\]
This is the first formula.  For the second, the modular relation
$[\log\sigma_\beta,A_\nu^\alpha]=-2a_\nu A_\nu^\alpha$ gives
\[
 V_{\alpha,\nu}
 =A_\nu^\alpha\log(e^{-a_\nu}\rho)
  -\log(e^{a_\nu}\rho)A_\nu^\alpha.
\]
The KMS identity for the filter reads
$\widehat f(-\nu)=e^{a_\nu}\overline{\widehat f(\nu)}$, hence
\begin{equation}\label{eq:generator-c-involution}
 c_{-\mu,-\nu}^{(\sigma_E)}=c_{\nu\mu}^{(\sigma_E)}
 \quad\text{under}\quad
 (\alpha,\nu,\mu)\longmapsto(\bar\alpha,-\mu,-\nu).
\end{equation}
Expansion and cyclicity give the real part of
\begin{align*}
 \sum_{\alpha,\nu,\mu}c_{\nu\mu}^{(\sigma_E)}\big[{}
 &e^{-a_\nu}\Tr(\rho(A_\nu^\alpha)^\dagger A_\mu^\alpha K_\rho)
 -e^{-a_\nu}\Tr(A_\mu^\alpha\rho(A_\nu^\alpha)^\dagger K_\rho)\\
 &-e^{a_\nu}\Tr((A_\nu^\alpha)^\dagger\rho A_\mu^\alpha K_\rho)
 +e^{a_\nu}\Tr(A_\mu^\alpha(A_\nu^\alpha)^\dagger\rho K_\rho)\big].
\end{align*}
Reindexing the last two terms by \eqref{eq:generator-c-involution}, and writing
\[
 p_{\nu\mu}:=\frac{e^{d_{\nu\mu}}}{2\cosh d_{\nu\mu}},
\]
use
\begin{align*}
 c_{\nu\mu}^{(\sigma_E)}e^{-a_\nu}
 &=q_{\nu\mu}^{(\sigma_E)}(1-p_{\nu\mu}), \
 c_{\nu\mu}^{(\sigma_E)}e^{-a_\mu}
 =q_{\nu\mu}^{(\sigma_E)}p_{\nu\mu},\text{ and }
 c_{\nu\mu}^{(\sigma_E)}(e^{-a_\nu}+e^{-a_\mu})
 =q_{\nu\mu}^{(\sigma_E)}.
\end{align*}
to obtain
\begin{align*}
 \operatorname{Re}\sum_{\alpha,\nu,\mu}q_{\nu\mu}^{(\sigma_E)}
 \big[{}&-\Tr(A_\mu^\alpha\rho(A_\nu^\alpha)^\dagger K_\rho)
 +p_{\nu\mu}\Tr((A_\nu^\alpha)^\dagger A_\mu^\alpha\rho K_\rho)\\
 &+(1-p_{\nu\mu})\Tr(\rho(A_\nu^\alpha)^\dagger
 A_\mu^\alpha K_\rho)\big],
\end{align*}
This is $-\Tr(\cL_{\sigma_E}(\rho)K_\rho)$ by
\eqref{eq:cross-frequency-generator}.
\end{proof}

For a fixed frequency, spectral decomposition and the scalar inequality
$4(\sqrt y-\sqrt x)^2\le(y-x)(\log y-\log x)$ give
\[
 4\|U_{\alpha,\nu}\|_2^2
 \le \operatorname{Re}\langle W_{\alpha,\nu},V_{\alpha,\nu}\rangle.
\]
The diagonal estimates do not control the terms with $\nu\ne\mu$ in
\eqref{eq:E-paired-generator}--\eqref{eq:EP-paired-generator}.
The following example shows that the Stroock--Varopoulos comparison
can fail even for a single self-adjoint bare jump.

\begin{proposition}
\label{prop:generator-SV-counterexample}
For the filter \eqref{eq:filter}, there exist a Hamiltonian $H$, a
self-adjoint bare jump $A$, and a faithful state $\rho$ on $\C^2$ such that
\[
 \EP_{\cL_{\sigma_E}}(\rho)<4\cE_{\sigma_E}(\sqrt\rho),
 \qquad \sigma_E\in\{30,\infty\}.
\]
\end{proposition}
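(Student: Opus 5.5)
The proof is an explicit two-level computation based on \cref{lem:paired-generator}. Take $\cH=\C^2$, $H=\operatorname{diag}(0,\omega)$ with some fixed $\omega>0$, and a self-adjoint jump with both diagonal and off-diagonal entries, for instance $A=\begin{pmatrix}a&1\\1&b\end{pmatrix}$ with $a\ne b$. Then $A$ has frequency components $A_0=\operatorname{diag}(a,b)$, $A_{\omega}=|1\rangle\langle0|$ and $A_{-\omega}=|0\rangle\langle1|$. These exercise all cross-frequency pairs $(\nu,\mu)\in\{0,\pm\omega\}^2$, which is exactly what the Davies comparison of \cref{thm:SV-finite} excludes. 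The family is adjoint-closed with $\bar\alpha=\alpha$, since $A=A^\dagger$.

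Both sides are finite sums. By \eqref{eq:E-paired-generator} and \eqref{eq:EP-paired-generator},
\[
 \cE_{\sigma_E}(\sqrt\rho)=\operatorname{Re}\sum_{\nu,\mu}c_{\nu\mu}^{(\sigma_E)}\Tr(U_\nu^\dagger U_\mu),\qquad
 \EP_{\cL_{\sigma_E}}(\rho)=\operatorname{Re}\sum_{\nu,\mu}c_{\nu\mu}^{(\sigma_E)}\Tr(W_\nu^\dagger V_\mu),
\]
with nine terms each. The coefficients $c_{\nu\mu}^{(\sigma_E)}$ are explicit from \eqref{eq:filter} together with the Gaussian factor. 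The Gaussian factor equals one at $\sigma_E=\infty$. At $\sigma_E=30$ it is $e^{-(\nu-\mu)^2/7200}$, which is close to one for moderate $\omega$, so a single choice should work for both values.

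The key steps are as follows.
\begin{enumerate}
\item Choose a faithful but non-commuting $\rho$, for instance with a strong off-diagonal coherence. The aim is that the mixed terms $\nu\ne\mu$, in particular those pairing $A_0$ with $A_{\pm\omega}$, contribute negatively to $\EP$ and positively to $\cE$. The diagonal terms satisfy the scalar chain inequality, so the violation has to come from the cross terms.
\item Tune $\beta$, $\omega$ and the parameters $a,b$ and $\rho$ by hand, or guided by a numerical search, until the inequality fails with a visible margin.
\item Record exact rational or algebraic parameter values and verify the strict inequality $\EP<4\cE$ for both $\sigma_E=30$ and $\sigma_E=\infty$.
\end{enumerate}
In step 3 the margin absorbs any rounding in the evaluation of $\log\rho$ and of the filter values. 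An explicit evaluation with certified error bounds then suffices, for example by diagonalizing $\rho$ in closed form, which is possible in dimension two.

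The main obstacle is finding parameters where the violation actually occurs; the rest is bookkeeping. A natural first attempt is to freeze the diagonal contributions and maximize the negative cross-term contribution. The most promising regime is $\rho$ close to $\sigma$: there $\EP$ and $4\cE$ agree to second order for the Davies part, and the sign of the difference is governed by the cross terms. Another option is to let $a-b$ be large, so that $A_0$ dominates and its coupling to $A_{\pm\omega}$ with mismatched weights $e^{\pm d_{\nu\mu}}/(2\cosh d_{\nu\mu})$ breaks the comparison. If a perturbative expansion around $\sigma$ shows that the difference is positive-definite to second order, then I would instead move to a far-from-equilibrium $\rho$ and rely on the certified numerical check.
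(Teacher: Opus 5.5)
You have set up exactly the computation the paper carries out. Its $H=\operatorname{diag}(0,1)$ and $A=\begin{pmatrix}1&1\\1&0\end{pmatrix}$ are your family with $\omega=1$, $a=1$, $b=0$, and the paper evaluates both sides through \cref{lem:paired-generator}. Your bookkeeping is also correct: the frequency components, the adjoint closure, and the factor $e^{-(\nu-\mu)^2/7200}$. But the proposal stops before the step that constitutes the proof. The proposition is an existence statement, so its whole content is a concrete $(\beta,H,A,\rho)$ together with a verified strict inequality at $\sigma_E=\infty$ and at $\sigma_E=30$. You fix none of these. Your fallback plan (expand around $\sigma_\beta$, and search far from equilibrium if that fails) shows that you do not yet know whether a violation occurs anywhere in your family. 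The remark that the equal-frequency terms obey the scalar chain inequality only says where a violation would have to come from; it gives no reason why the $\nu\ne\mu$ terms should ever win. As written, nothing excludes $\EP_{\cL_{\sigma_E}}(\rho)\ge4\cE_{\sigma_E}(\sqrt\rho)$ for all your parameters, so what is missing is the proof itself, not a detail.

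The paper supplies the witness: $\beta=1/4$ and the nearly pure state $\rho=\frac1{2000}I+\frac{999}{1000}|v\rangle\langle v|$ with $v=(2,-3)^{\mathsf T}/\sqrt{13}$, for which $\sqrt\rho$ and $\log\rho$ are explicit. Substitution into \eqref{eq:E-paired-generator}--\eqref{eq:EP-paired-generator} gives $\EP_{\cL_\infty}(\rho)<\frac9{1000}<4\cE_\infty(\sqrt\rho)$ and $\EP_{\cL_{30}}(\rho)<\frac1{100}<4\cE_{30}(\sqrt\rho)$.

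This witness also shows why your heuristics cannot replace the computation. First, it is far from $\sigma_\beta\approx\operatorname{diag}(0.56,0.44)$, not in the near-equilibrium regime you favour. Second, the cross-frequency contributions are negative in \emph{both} quantities, not negative in $\EP$ and positive in $\cE$ as you aim for in step~1. At $\sigma_E=\infty$ the equal- and cross-frequency parts are roughly $3.90$ and $-3.89$ for $\EP$, and $0.484$ and $-0.482$ for $\cE$. What remains is $\EP_{\cL_\infty}(\rho)\approx0.0086$ against $4\cE_\infty(\sqrt\rho)\approx0.0099$.

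Because of this near-cancellation, even the Gaussian factors $e^{-1/7200}$ and $e^{-4/7200}$ at $\sigma_E=30$ raise $\EP$ to about $0.0099$, against $4\cE\approx0.0106$. So the case $\sigma_E=30$ does not follow automatically from $\sigma_E=\infty$ and must be checked separately, as the paper does.

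Finally, the paper verifies that the filtered jump is non-normal, so the example is irreducible. The statement does not require this, but it shows the failure is not an artefact of a reducible generator.
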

\begin{proof}
Take
\[
 H=\begin{pmatrix}0&0\\0&1\end{pmatrix},\qquad
 A=\begin{pmatrix}1&1\\1&0\end{pmatrix},\qquad \beta=\frac14.
\]
Set $v=(2,-3)^{\mathsf T}/\sqrt{13}$ and
$\rho=\frac1{2000}I+\frac{999}{1000}|v\rangle\langle v|$.
The eigenvalues of $\rho$ are $1/2000$ and $1999/2000$, so $\rho$ is
faithful.  The corresponding matrix functions are
\[
 \sqrt\rho=\frac{I+(\sqrt{1999}-1)|v\rangle\langle v|}{\sqrt{2000}},
 \qquad \log\rho=-(\log2000)I+(\log1999)|v\rangle\langle v|.
\]
Substitution into
\eqref{eq:E-paired-generator}--\eqref{eq:EP-paired-generator} gives
\[
 \EP_{\cL_\infty}(\rho)<\frac9{1000}<4\cE_\infty(\sqrt\rho),
 \qquad
 \EP_{\cL_{30}}(\rho)<\frac1{100}<4\cE_{30}(\sqrt\rho).
\]

This proves both assertions.

For $\sigma_E=\infty$, the filtered jump is
\[
 L^1=\begin{pmatrix}
 \fhat(0)&\fhat(-1)\\ \fhat(1)&0
 \end{pmatrix}.
\]
Here the filter values are
\[
 \fhat(0)=e^{-1/4},\qquad
 \fhat(\pm1)=e^{-(\sqrt{17}\pm1)/16},\qquad
 \langle0|[L^1,(L^1)^\dagger]|0\rangle
 =\fhat(-1)^2-\fhat(1)^2>0.
\]
Thus $L^1$ is non-normal.  Every proper unital $*$-subalgebra of
$M_2(\C)$ is commutative and contains only normal matrices.
Consequently $L^1$ and $(L^1)^\dagger$ generate $M_2(\C)$, proving
irreducibility.
\end{proof}

\subsubsection{Unbounded Davies generators}
We give conditions under which finite-dimensional Davies comparisons
pass to the infinite-dimensional generator.

An unweighted sum over energy pairs diverges already on the diagonal,
since $|\fhat(0)|^2=e^{-1/2}$.  We must instead sum the transition blocks with their Gibbs
weights. The result below gives sufficient approximation conditions;
these must be checked for each model.

Assume now that $H$ has pure point spectrum with finite multiplicities,
$Z_\beta<\infty$, and $P_E\cH\subset D(A^\alpha)$ for all $E,\alpha$.
The component $A_\nu^\alpha$ is initially defined on
\[
 \mathcal D_{\rm fin}(H):=
 \operatorname{span}\{P_E\cH:E\in\Sp(H)\}
\]
by the same transition-block formula.  In the following expressions,
products with states are understood through their Hilbert--Schmidt
extensions whenever these exist.  Write $A$ for the whole family
$(A^\alpha)_{\alpha\in\cA}$, and define its Gibbs-weighted transition
activity by
\begin{equation}\label{eq:transition-activity}
 \mathfrak J_{\widehat f}(H,A)
 :=\frac1{Z_\beta}\sum_{\alpha\in\cA}\sum_{E,E'\in\Sp(H)}
 e^{-\beta E}|\widehat f(E'-E)|^2
 \|P_{E'}A^\alpha P_E\|_2^2.
\end{equation}
Assume $\mathfrak J_{\widehat f}(H,A)<\infty$.  For a state satisfying
$\Gamma^{-1}\sigma_\beta\le\rho\le\Gamma\sigma_\beta$, the expressions
$U_{\alpha,\nu}$ from \cref{lem:paired-generator} have Hilbert--Schmidt
extensions whenever $\widehat f(\nu)\ne0$, and obey the following bound,
with zero-weight frequencies omitted:
\begin{equation*}
 \sum_{\alpha,\nu}e^{\beta\nu/2}|\widehat f(\nu)|^2
 \|U_{\alpha,\nu}\|_2^2
 \le4\Gamma\,\mathfrak J_{\widehat f}(H,A).
\end{equation*}
Indeed, $\rho\le\Gamma\sigma_\beta$ gives, first on spectral cutoffs,
\[
 \|A_\nu^\alpha\sqrt\rho\|_2^2
 \le\Gamma\Tr(\sigma_\beta(A_\nu^\alpha)^\dagger A_\nu^\alpha),
 \qquad
 \|\sqrt\rho A_\nu^\alpha\|_2^2
 \le\Gamma\Tr(\sigma_\beta A_\nu^\alpha(A_\nu^\alpha)^\dagger).
\]
The transition-block decomposition yields
\begin{align*}
 \Tr(\sigma_\beta(A_\nu^\alpha)^\dagger A_\nu^\alpha)
 &=Z_\beta^{-1}\sum_{E'-E=\nu}e^{-\beta E}
        \|P_{E'}A^\alpha P_E\|_2^2,\\
 \Tr(\sigma_\beta A_\nu^\alpha(A_\nu^\alpha)^\dagger)
 &=e^{-\beta\nu}\Tr(\sigma_\beta(A_\nu^\alpha)^\dagger A_\nu^\alpha).
\end{align*}
Consequently the two terms defining $U_{\alpha,\nu}$ satisfy
\begin{align*}
 e^{\beta\nu/2}|\widehat f(\nu)|^2\|U_{\alpha,\nu}\|_2^2
 &\le2|\widehat f(\nu)|^2
 \bigl(\|A_\nu^\alpha\sqrt\rho\|_2^2
       +e^{\beta\nu}\|\sqrt\rho A_\nu^\alpha\|_2^2\bigr)\\
 &\le4\Gamma|\widehat f(\nu)|^2
       \Tr(\sigma_\beta(A_\nu^\alpha)^\dagger A_\nu^\alpha).
\end{align*}
Summing proves the activity bound and the existence of the
Hilbert--Schmidt extensions. This controls the Dirichlet form.
The logarithmic term needs a separate estimate.

\begin{definition}\label{def:SV-exhaustion}
Let $P_n$ be increasing finite-rank spectral projections of $H$ that
converge strongly to $I$.
For $\rho\in D(\cL_0)\cap B(\sigma_\beta)$, use $K_\rho$ as in
\cref{lem:paired-generator}.  Discard any initial indices with
$\Tr(P_n\rho)=0$ and set
\[
 \sigma_n=\frac{P_n\sigma_\beta P_n}{\Tr(P_n\sigma_\beta)},
 \qquad
 \rho_n=\frac{P_n\rho P_n}{\Tr(P_n\rho)},
 \qquad
 K_n=\log\rho_n-\log\sigma_n
\]
on $P_n\cH$.  Let $\cL_n$ be the finite-dimensional Davies generator
obtained by keeping the transitions whose initial and final energy
spaces both lie in $P_n\cH$.  Write $\cE_n$ for its standard-form Dirichlet form
and take $\EP_{\cL_n}$ relative to $\sigma_n$.  Extend the matrices
$\rho_n,\sigma_n$ and their square roots by zero to $\cH$ when
comparing different $n$.  We interpret a closed Dirichlet form as
$+\infty$ outside its domain.  The exhaustion is
\emph{SV-admissible for $\rho$} if
\begin{align}
 &\sqrt{\rho_n}\longrightarrow\sqrt\rho\quad\text{in }\mathscr T_2(\cH),
 \qquad
 \cE_0(\sqrt\rho)
 \le\liminf_{n\to\infty}\cE_n(\sqrt{\rho_n}),
 \label{eq:form-exhaustion}\\
 &\sup_n\|K_n\|_\infty<\infty,
 \qquad
 P_nK_nP_n\longrightarrow K_\rho
 \quad\text{strongly},
 \label{eq:log-exhaustion}\\
 &\cL_n(\rho_n)\longrightarrow\cL_0(\rho)
 \quad\text{in trace norm}.
 \label{eq:generator-exhaustion}
\end{align}
Here $K_n$ is extended by zero on $(P_n\cH)^\perp$. The exhaustion is SV-admissible on a class of states if these conditions hold for every state in that class.
\end{definition}

Uniform boundedness and strong convergence of $K_n$, together with
trace-norm generator convergence, imply
\begin{equation}\label{eq:EP-exhaustion}
 \EP_{\cL_n}(\rho_n)=-\Tr(\cL_n(\rho_n)K_n)
 \longrightarrow-\Tr(\cL_0(\rho)K_\rho)
 =\EP_{\cL_0}(\rho).
\end{equation}
To check the passage to the limit, extend $K_n$ by zero as above
and subtract the two trace pairings:
\begin{align*}
 |\Tr(\cL_n(\rho_n)K_n)-\Tr(\cL_0(\rho)K_\rho)|
 \le{}&\|\cL_n(\rho_n)-\cL_0(\rho)\|_1\|K_n\|_\infty\\
 &+|\Tr(\cL_0(\rho)(K_n-K_\rho))|\longrightarrow0.
\end{align*}
The first term vanishes by \eqref{eq:generator-exhaustion}.
For the second, approximate $\cL_0(\rho)$ in trace norm by a
finite-rank operator.  Strong convergence handles that finite-rank
pairing, and the uniform bound on $K_n$ controls the approximation error.

\begin{theorem}\label{thm:SV-exhaustion}
Suppose that an SV-admissible exhaustion exists on
$D(\cL_0)\cap B(\sigma_\beta)$. Then the Stroock--Varopoulos hypothesis \eqref{eq:SV-assumption} of \cref{thm:main} holds on this class. Consequently, \cref{thm:main} applies whenever its remaining hypotheses---the identification \eqref{eq:L2-link}, positive contractions on the weighted spaces, the bound \eqref{eq:24-main}, and a positive gap when $C>1$---are satisfied.
\end{theorem}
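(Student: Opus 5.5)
Fix $\rho\in D(\cL_0)\cap B(\sigma_\beta)$ and an SV-admissible exhaustion $(P_n)$. The plan is to prove the comparison on each finite block, where \cref{thm:SV-finite} applies, and then pass to the limit using the three admissibility conditions \eqref{eq:form-exhaustion}--\eqref{eq:generator-exhaustion}.

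\emph{Finite-dimensional step.} Here $\cL_n$ is a finite Davies generator on $P_n\cH$ with reference state $\sigma_n$. Its rates are $r_{\alpha,\nu}=|\widehat f(\nu)|^2$ on the retained transition blocks. The pairing $r_{\bar\alpha,-\nu}=e^{\beta\nu}r_{\alpha,\nu}$ follows from the filter KMS identity in \eqref{eq:dressed-jump}. The truncation is compatible with the adjoint involution because $P_n$ is a spectral projection of $H$: a block $P_{E'}A^\alpha P_E$ is kept exactly when $(P_{E}A^{\bar\alpha}P_{E'})$ is kept. Moreover $\sigma_n$ is the Gibbs state of $H|_{P_n\cH}$, since the renormalization does not affect the modular relations. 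The state $\rho_n$ is faithful on $P_n\cH$ because $\rho\ge\Gamma^{-1}\sigma_\beta$ forces $P_n\rho P_n\ge\Gamma^{-1}P_n\sigma_\beta P_n>0$. Hence \cref{thm:SV-finite} gives
\[
 4\cE_n(\sqrt{\rho_n})\le\EP_{\cL_n}(\rho_n)\qquad\text{for every }n.
\]

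\emph{Limit step.} By \eqref{eq:EP-exhaustion}, which follows from \eqref{eq:log-exhaustion}--\eqref{eq:generator-exhaustion} as computed above, the right side converges to $\EP_{\cL_0}(\rho)$, and this limit is finite. By \eqref{eq:form-exhaustion} and lower semicontinuity along the exhaustion,
\[
 4\cE_0(\sqrt\rho)\le\liminf_n4\cE_n(\sqrt{\rho_n})\le\lim_n\EP_{\cL_n}(\rho_n)=\EP_{\cL_0}(\rho)<\infty.
\]
Because a closed form is interpreted as $+\infty$ off its domain, this finiteness gives $\sqrt\rho\in D(\cE_0)$. Together with the displayed inequality, this is exactly \eqref{eq:SV-assumption} for the Davies generator $\cL=\cL_0$ with $\cE=\cE_0$.

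\emph{Conclusion and main difficulty.} The final sentence of the theorem is then immediate: \cref{thm:main} requires \eqref{eq:SV-assumption} together with the listed hypotheses, and \cref{thm:Gross} with \cref{lem:MLSI-equivalence} give the conclusion. The main point needing care is the finite step, namely that the truncated generator $\cL_n$ really has the frequency-diagonal form required by \cref{thm:SV-finite}, relative to the renormalized $\sigma_n$. Concretely, one must verify two things. First, retaining only blocks with both energies in $P_n\cH$ does not break the rate pairing. Second, the standard-form Dirichlet form $\cE_n$ and $\EP_{\cL_n}$ are defined relative to $\sigma_n$; normalization only rescales by $\Tr(P_n\sigma_\beta)$, which cancels in $\log\rho_n-\log\sigma_n$. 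Everything else is supplied by the admissibility hypotheses.
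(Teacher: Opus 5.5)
Your proof is correct and follows the paper's argument: you apply \cref{thm:SV-finite} on each block, pass to the limit using \eqref{eq:form-exhaustion} and \eqref{eq:EP-exhaustion}, and use finiteness of $\EP_{\cL_0}(\rho)$, together with the $+\infty$ convention off the form domain, to place $\sqrt\rho$ in $D(\cE_0)$. Your extra checks of the finite step are welcome, namely that truncation preserves the rate pairing and that $\rho_n$ is faithful; one slip is that the normalization constant of $\sigma_n$ does not cancel in $\log\rho_n-\log\sigma_n$ itself, only after pairing with the traceless $\cL_n(\rho_n)$, and no such cancellation is needed anyway because $\EP_{\cL_n}$ and $\cE_n$ are already defined relative to $\sigma_n$.
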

\begin{proof}
By \cref{thm:SV-finite},
$4\cE_n(\sqrt{\rho_n})\le\EP_{\cL_n}(\rho_n)$. Taking lower limits and using
\eqref{eq:form-exhaustion} and \eqref{eq:EP-exhaustion} gives
\[
 4\cE_0(\sqrt\rho)
 \le4\liminf_n\cE_n(\sqrt{\rho_n})
 \le\lim_n\EP_{\cL_n}(\rho_n)
 =\EP_{\cL_0}(\rho)<\infty.
\]
Finiteness follows because $\cL_0(\rho)$ is trace class and
$K_\rho$ is bounded.  Since the closed form is interpreted as
$+\infty$ outside its domain, the same inequality proves
$\sqrt\rho\in D(\cE_0)$ as well as the SV bound.
\end{proof}

\begin{corollary}\label{cor:Davies-main}
For a primitive finite-dimensional Davies generator,
\cref{thm:SV-finite} proves the SV inequality. Hence \cref{thm:main}
applies if the weighted extensions are positive contractions, satisfy
\eqref{eq:L2-link}, and obey \eqref{eq:24-main}. The same conclusion
holds in infinite dimension under \cref{def:SV-exhaustion}.
\end{corollary}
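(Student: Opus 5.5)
The proof is a chaining of earlier results. The corollary has two parts. In finite dimension, primitivity together with \cref{thm:SV-finite} should give the Stroock--Varopoulos hypothesis \eqref{eq:SV-assumption}, after which \cref{thm:main} applies directly. In infinite dimension, \cref{thm:SV-exhaustion} supplies the same hypothesis under an SV-admissible exhaustion. So the plan is to check that each hypothesis of \cref{thm:main} is met, and then quote its conclusion with the rate $\alpha_{\rm HC}$.

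\emph{Finite-dimensional case.}
\begin{enumerate}
\item I would write the Davies generator as $\cL_0=\sum_{\alpha,\nu}|\widehat f(\nu)|^2\mathcal D_{A^\alpha_\nu}$.
\item The rate pairing $r_{\bar\alpha,-\nu}=e^{\beta\nu}r_{\alpha,\nu}$ follows from the filter KMS identity in \eqref{eq:dressed-jump}. Together with $(A^\alpha_\nu)^\dagger=A^{\bar\alpha}_{-\nu}$, this is exactly the hypothesis of \cref{thm:SV-finite}.
\item That proposition then gives $4\cE(\sqrt\rho)\le\EP_{\cL}(\rho)$ for every faithful state $\rho$.
\item Every $\rho\in D(\cL)\cap B(\sigma)$ is faithful, since $\rho\ge\Gamma^{-1}\sigma$. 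In finite dimension $D(\cE)$ is the whole space, so the domain clause of \eqref{eq:SV-assumption} is automatic.
\item Primitivity gives $\ker L=\C\sqrt\sigma$. If $C>1$, the compactness-from-one-endpoint argument then gives $\lambda>0$; in finite dimension this also follows directly from finite spectrum.
\item With the standing assumptions (positive contractions, \eqref{eq:L2-link}, \eqref{eq:24-main}), \cref{thm:main} yields the entropy decay.
\end{enumerate}

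\emph{Infinite-dimensional case.}
\begin{enumerate}
\item The SV-admissible exhaustion of \cref{def:SV-exhaustion} is assumed to exist on $D(\cL_0)\cap B(\sigma_\beta)$. \Cref{thm:SV-exhaustion} converts it into \eqref{eq:SV-assumption}, including $\sqrt\rho\in D(\cE_0)$.
\item Primitivity and the endpoint bound again give compact resolvent and hence a positive gap via the compactness paragraph.
\item \Cref{thm:main} then applies verbatim.
\end{enumerate}

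The only delicate step is matching the normalization of \cref{thm:SV-finite} to that of \eqref{eq:SV-assumption}. The Dirichlet form of \cite{beigi2020quantum} must be identified with $\cE$ in the standard-form coordinates of \eqref{eq:L2-link}, and the $1$-Dirichlet form with $\EP_\cL/4$. I would check this directly using \cref{lem:paired-generator} at the Davies endpoint, where only $\nu=\mu$ terms survive and $c_{\nu\nu}=|\widehat f(\nu)|^2e^{a_\nu}/2$. The scalar inequality $4(\sqrt y-\sqrt x)^2\le(y-x)(\log y-\log x)$ then gives $4\|U_{\alpha,\nu}\|_2^2\le\operatorname{Re}\langle W_{\alpha,\nu},V_{\alpha,\nu}\rangle$ termwise. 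Summing these with the nonnegative weights $c_{\nu\nu}$ gives \eqref{eq:SV-assumption} independently of the citation.
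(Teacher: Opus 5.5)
Your proposal is correct and takes the same route as the paper: \cref{thm:SV-finite} (in finite dimension) or \cref{thm:SV-exhaustion} (under \cref{def:SV-exhaustion}) supplies \eqref{eq:SV-assumption}, and \cref{thm:main} then applies, with positivity of the gap following from primitivity and the endpoint. Your optional self-contained check via \cref{lem:paired-generator} and the termwise inequality $4\|U_{\alpha,\nu}\|_2^2\le\operatorname{Re}\langle W_{\alpha,\nu},V_{\alpha,\nu}\rangle$ is also valid for the filtered rates $r_{\alpha,\nu}=|\widehat f(\nu)|^2$, and it replaces the paper's citation of \cite[Theorem~14]{beigi2020quantum}; note that the lemma is stated only for $\sigma_E>0$, so at the Davies endpoint you need either its algebraic proof or the limit $\sigma_E\downarrow0$ over the finite frequency set.
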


\begin{proof}[Proof of \cref{thm:main}]
Combining \eqref{eq:tight-LSI} and the assumed inequality \eqref{eq:SV-assumption} gives
\[
 \frac4{c_{\rm LS}}D(\rho\|\sigma)
 \le4\cE(\sqrt\rho)
 \le\EP_{\cL}(\rho).
\]
This holds on $D(\cL)\cap B(\sigma)$: the SV assumption places $\sqrt\rho$ in the form domain and gives
the last inequality.  Thus \cref{lem:MLSI-equivalence} applies with
\[
 \alpha_{\rm HC}=\frac4{c_{\rm LS}}
 =\begin{cases}
 \displaystyle\left(s+\frac{\log C+1/2}{\lambda}\right)^{-1},&C>1,\\[4pt]
 s^{-1},&C=1.
 \end{cases}
\]
Under \eqref{eq:weak-HC}, fix its constant $C_{2,4}$ and apply the
estimate at each $s>\tau_{2,4}$.  For finite input entropy the
right-hand side is continuous in $s$, so letting
$s\downarrow\tau_{2,4}$ gives the corresponding limiting rate.
Infinite input entropy requires no limit argument.
\end{proof}
\section{Perturbative hypercontractivity}\label{sec:perturbations}

We prove two stability results. With the Gibbs weight fixed, a
product formula controls changes of the generator that are bounded
along a hypercontractive curve. When the weight changes on a finite
energy subspace, entropy and form comparisons preserve a tight
logarithmic Sobolev inequality, provided the new generator has a
positive gap.

\subsection{Interpolation and bounded perturbations}
We transfer bounds along a hypercontractive curve under perturbations
that are bounded on each weighted space along the curve.

We use weighted Stein interpolation
\cite{SteinWeiss1971,Kosaki1984Interpolation} in the form established
in Step~2 of \cref{thm:Gross}; a weighted three-lines theorem is
\cite[Theorem~2.10]{Jencova2018Renyi}.  For a KMS-symmetric
semigroup as in \cref{sec:proof-main}, an endpoint
$\|T_\tau:L^2(\sigma)\to L^q(\sigma)\|\le C$ with
$\tau>0$ and $2<q<\infty$ gives
\[
 \|T_t:L^2(\sigma)\to L^{q(t)}(\sigma)\|\le C^{t/\tau},
 \qquad\frac1{q(t)}=\frac{1-t/\tau}{2}+\frac{t/\tau}{q},
 \quad0\le t\le\tau.
\]
The next theorem assumes a bound between every two points on
the exponent curve.  It is related to the Holley--Stroock
principles of \cite{ledoux2004logarithmic,junge2023stability}.
For the bounded perturbation theorem and the Trotter product formula
on a fixed Banach space, see
\cite[Theorem~III.1.3 and Corollary~III.5.8]{engel1999one}.

\begin{theorem}
\label{thm:bounded-perturbation}
Let $\sigma$ be a faithful density matrix, fix $t_*>0$ and
$2\le p_*<\infty$, and let $T_t^{(r)}$ and $S_t^{(r)}$ be strongly
continuous, consistent semigroups on $L^r(\sigma)$ for
$2\le r\le p_*$.  Fix $\omega\in\R$ and a continuous
nondecreasing exponent curve $p:[0,t_*]\to[2,p_*]$ with $p(0)=2$.
Suppose that
\begin{equation*}
 \|T_{t-s}:L^{p(s)}(\sigma)\to L^{p(t)}(\sigma)\|
 \le e^{\omega(t-s)},\qquad0\le s\le t\le t_*.
\end{equation*}
Let $\mathscr C$ be a common dense subspace of
$\bigcap_{2\le r\le p_*}L^r(\sigma)$ on which the realizations of
the semigroups agree.  One may take $B(\cH)$ when all maps are
initially defined there.  Let $\mathcal V$ be a linear map with bounded
realizations $\mathcal V^{(r)}:L^r(\sigma)\to L^r(\sigma)$ that
agree on $\mathscr C$.  We assume that adding $\mathcal V$ to the
generator of $T_t$ generates the semigroup $S_t$, in the precise sense
that, for $0\le t\le t_*$, the following $L^2$ Trotter formula holds:
\begin{equation*}
 S_t^{(2)}=\operatorname*{s-lim}_{n\to\infty}
 \bigl(T_{t/n}^{(2)}e^{(t/n)\mathcal V^{(2)}}\bigr)^n.
\end{equation*}
Here $\operatorname*{s-lim}$ denotes convergence on each vector of
$L^2(\sigma)$ in its norm.  Assume that the perturbation bound
\[
 M_{t_*}:=\sup_{2\le r\le p(t_*)}
 \|\mathcal V^{(r)}:L^r(\sigma)\to L^r(\sigma)\|
\]
is finite.
Then, for $0\le s\le t\le t_*$,
\begin{equation}\label{eq:bounded-perturbation}
 \|S_{t-s}:L^{p(s)}(\sigma)\to L^{p(t)}(\sigma)\|
 \le e^{(\omega+M_{t_*})(t-s)}.
\end{equation}
In particular,
$\|S_t:L^2(\sigma)\to L^{p(t)}(\sigma)\|
\le e^{(\omega+M_{t_*})t}$.  If the reverse Trotter formula holds with
$T$ and $S$ interchanged and perturbation $-\mathcal V$, the same
statement transfers a curve from $S$ back to $T$.
\end{theorem}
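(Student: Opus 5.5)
The plan is to bound each Trotter factor along the curve and then pass to the limit. Fix $0\le s\le t\le t_*$ and $n\in\N$, put $h=(t-s)/n$, and take the partition points $s_k=s+kh$ for $k=0,\dots,n$. Since $p$ is nondecreasing, $p(s_k)\le p(s_{k+1})$, and every exponent lies in $[2,p(t_*)]$.

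For each factor, bounded perturbation theory on the fixed Banach space $L^{p(s_k)}(\sigma)$ gives
\[
 \|e^{h\mathcal V^{(p(s_k))}}\|\le e^{h\|\mathcal V^{(p(s_k))}\|}\le e^{hM_{t_*}}.
\]
The curve hypothesis gives
\[
 \|T_h:L^{p(s_k)}\to L^{p(s_{k+1})}\|\le e^{\omega h}.
\]
Composing along the chain $L^{p(s_0)}\to L^{p(s_1)}\to\cdots\to L^{p(s_n)}$ gives
\[
 \bigl\|(T_h e^{h\mathcal V})^n : L^{p(s)}\to L^{p(t)}\bigr\|\le e^{(\omega+M_{t_*})(t-s)}.
\]
At this point we must use consistency. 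The realizations of $T$ and of $e^{h\mathcal V}$ at different exponents agree on $\mathscr C$. We also need the exponential series of $\mathcal V^{(r)}$ to be consistent, which holds because $\mathcal V^{(r)}$ agree on $\mathscr C$ and are bounded. For this we either need $\mathscr C$ invariant, or we argue via density that $\mathcal V^{(r)}$ and $\mathcal V^{(r')}$ agree on $L^r\cap L^{r'}$. Once consistency holds, the $n$-fold product applied to $X\in\mathscr C$ is a single element, independent of which space it is viewed in.

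To pass to the limit $n\to\infty$, take $X\in\mathscr C$. By the $L^2$ Trotter formula, $Y_n:=(T_{h}^{(2)}e^{h\mathcal V^{(2)}})^nX\to S_{t-s}^{(2)}X$ in $L^2(\sigma)$. By the previous step, $\|Y_n\|_{p(t),\sigma}\le e^{(\omega+M_{t_*})(t-s)}\|X\|_{p(s),\sigma}$. The $L^{p(t)}$ norm is lower semicontinuous with respect to $L^2$ convergence. To see this, use the cutoff description $\|\cdot\|_{p\leftarrow2,\sigma}$ in terms of finite spectral projections $P_m$ of $\sigma$: for fixed $m$ the compressed norm is $L^2$-continuous, and we then take the supremum over $m$. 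This yields the bound for $S^{(2)}_{t-s}X$. Consistency of $S$ identifies $S^{(2)}_{t-s}X$ with $S^{(p(t))}_{t-s}X$, or defines the $L^{p(s)}\to L^{p(t)}$ map. Density of $\mathscr C$ in $L^{p(s)}$ then extends the bound to all of $L^{p(s)}(\sigma)$, which gives \eqref{eq:bounded-perturbation}. The special case $s=0$ yields the $L^2\to L^{p(t)}$ statement.

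The reverse statement follows from the same argument with the roles of $T$ and $S$ exchanged and $\mathcal V$ replaced by $-\mathcal V$. This works because $\|{-\mathcal V^{(r)}}\|=\|\mathcal V^{(r)}\|$.

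The main obstacle is the consistency bookkeeping. Each factor is applied in a different $L^r$ space, and the iterates must be the same objects as those in the $L^2$ Trotter limit. I would first try to show that $e^{h\mathcal V^{(r)}}$ maps $L^{r}\cap L^{2}$ consistently, by summing the exponential series in $L^2$ and in $L^r$ simultaneously. Here we use that $L^r(\sigma)\hookrightarrow L^2(\sigma)$ contractively for $r\ge2$, so every $L^r$-convergent series also converges in $L^2$ to the same limit. This makes all realizations restrictions of the $L^2$ ones and removes the difficulty.
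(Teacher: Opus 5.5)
Your proposal is correct and follows the paper's proof: the same partition $s_k=s+kh$, the factor bounds $e^{hM_{t_*}}$ and $e^{\omega h}$, iteration to $e^{(\omega+M_{t_*})(t-s)}$, passage to the limit through the $L^2$ Trotter formula, and extension by density. The only real variation is in the limit step. The paper extracts a weakly convergent subsequence in the reflexive space $L^{p(t)}(\sigma)$ and uses weak lower semicontinuity, whereas you use lower semicontinuity of the cutoff norm $\|\cdot\|_{p\leftarrow2,\sigma}$ under Hilbert--Schmidt convergence. Your density argument, showing that every realization is the restriction of the $L^2$ one, also makes explicit the consistency that the paper only asserts.
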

\begin{proof}
Fix $0\le s<t\le t_*$ and a positive integer $n$.  Put
$h=(t-s)/n$ and $s_k=s+kh$.  For $X\in\mathscr C$, define
\[
 X_{0,n}=X,\qquad X_{k,n}=T_he^{h\mathcal V}X_{k-1,n}
 \quad(1\le k\le n).
\]
The products agree wherever two of the weighted spaces overlap,
by consistency.  The exponential series gives
\[
 \|e^{h\mathcal V^{(p(s_{k-1}))}}\|
 \le\sum_{j=0}^\infty\frac{h^jM_{t_*}^j}{j!}
 =e^{hM_{t_*}}.
\]
Apply the assumed two-time bound to the other factor.  At the $k$th step,
\[
 \|X_{k,n}\|_{p(s_k),\sigma}
 \le e^{\omega h}\|e^{h\mathcal V}X_{k-1,n}\|_{p(s_{k-1}),\sigma}
 \le e^{(\omega+M_{t_*})h}\|X_{k-1,n}\|_{p(s_{k-1}),\sigma}.
\]
Iteration yields a bound independent of $n$:
\[
 \|X_{n,n}\|_{p(t),\sigma}
 \le e^{(\omega+M_{t_*})(t-s)}\|X\|_{p(s),\sigma}.
\]

We now pass to the limit in $L^{p(t)}(\sigma)$.  For $2\le q<\infty$,
H\"older's inequality and $\Tr\sigma=1$ give the continuous inclusion
$L^q(\sigma)\subset L^2(\sigma)$:
\[
 \|X\|_{2,\sigma}
 =\|\sigma^{1/4-1/(2q)}\iota_{q,\sigma}(X)
       \sigma^{1/4-1/(2q)}\|_2
 \le\|X\|_{q,\sigma}.
\]
The bounded sequence $(X_{n,n})_n$ has a weakly convergent subsequence
in the reflexive space $L^{p(t)}(\sigma)$, say with limit $Y$.
The same subsequence converges weakly to $Y$ in $L^2(\sigma)$.
The Trotter formula also gives
\[
 X_{n,n}=(T_he^{h\mathcal V})^nX
 \longrightarrow S_{t-s}^{(2)}X\quad\hbox{in }L^2(\sigma).
\]
Uniqueness of the weak limit therefore gives $Y=S_{t-s}X$.
Weak lower semicontinuity now proves
\[
 \|S_{t-s}X\|_{p(t),\sigma}
 \le\liminf_j\|X_{n_j,n_j}\|_{p(t),\sigma}
 \le e^{(\omega+M_{t_*})(t-s)}\|X\|_{p(s),\sigma}.
\]
By density of $\mathscr C$, this bound holds on all of
$L^{p(s)}(\sigma)$. Consistency identifies the limit with $S_{t-s}$.
For $s=t$ the bound is the identity estimate.  Replacing
$\mathcal V$ by $-\mathcal V$ proves the reverse statement whenever
the reverse product formula holds.
\end{proof}

\subsection{Finite-core changes of the Gibbs weight}
We now allow the Gibbs weight to change on a finite-dimensional
energy subspace and compare the resulting entropies and Dirichlet forms.

\begin{lemma}
\label{lem:finite-core-entropy}
Let $\sigma_0$ and $\sigma$ be faithful density matrices on $\cH$.
Assume that $P$ is a finite-rank orthogonal projection satisfying
$P\sigma_0=\sigma_0P$ and $P\sigma=\sigma P$.  Set $Q=I-P$ and suppose that
\[
 Q\sigma Q=c\,Q\sigma_0Q
\]
for a scalar $c>0$.  Then there is a constant $0\le\kappa<\infty$
such that, for every Hilbert--Schmidt operator
$Y\in\mathscr T_2(\cH)$ with finite $\Ent_{2,\sigma_0}(Y)$,
\begin{equation}
\label{eq:finite-core-entropy-comparison}
 \Ent_{2,\sigma}(Y)
 \le \Ent_{2,\sigma_0}(Y)+\kappa\|Y\|_2^2.
\end{equation}
\end{lemma}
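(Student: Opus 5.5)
The argument reduces everything to a comparison of the two reference logarithms. For a state $\tau$ with finite $D(\tau\|\sigma_0)$, we want
\[
 D(\tau\|\sigma)=D(\tau\|\sigma_0)+\Tr\bigl(\tau(\log\sigma_0-\log\sigma)\bigr).
\]
Since both $\sigma_0$ and $\sigma$ commute with $P$, they are block diagonal. On $Q\cH$ we have $\log\sigma-\log\sigma_0=(\log c)Q$ as a genuine operator identity, with no unbounded difference: $Q\sigma Q=cQ\sigma_0Q$, and both are faithful on $Q\cH$. On the finite-dimensional block $P\cH$, both $P\sigma P$ and $P\sigma_0P$ are invertible matrices, so $P(\log\sigma_0-\log\sigma)P$ is a bounded self-adjoint matrix. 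Hence $\Delta:=\log\sigma_0-\log\sigma$ extends to a bounded self-adjoint operator commuting with $P$, with
\[
 \|\Delta\|_\infty\le\kappa_0:=\max\bigl\{|\log c|,\ \|P(\log\sigma_0-\log\sigma)P\|_\infty\bigr\}<\infty.
\]
Granting the displayed identity, we get $D(\tau\|\sigma)\le D(\tau\|\sigma_0)+\kappa_0$ for every state $\tau$ with $D(\tau\|\sigma_0)<\infty$.

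The delicate point, and the main obstacle, is justifying that identity in the extended sense of \cite{Hiai2018Standard}, where neither $\Tr\tau\log\tau$ nor $\Tr\tau\log\sigma_0$ need be finite. I would first prove it for $\tau$ supported on a finite spectral subspace of $\sigma_0$. Here $\sigma_0$ and $\sigma$ can be diagonalised simultaneously by choosing eigenbases within $P\cH$ and $Q\cH$, since they commute with $P$ and are proportional on $Q\cH$. All traces are then finite sums and the identity is algebraic.

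For general $\tau$, I would use the cutoff recovery \eqref{eq:entropy-cutoff-recovery} from the proof of \cref{lem:rothaus-entropy-domain}. Take spectral projections $P_n$ of $\sigma_0$ with $P_n\ge P$; these are also spectral projections of $\sigma$, because they may be taken to be $P$ plus spectral projections of $Q\sigma_0Q$. That argument gives $\mathfrak D_{\sigma_0}(P_nAP_n)\to\mathfrak D_{\sigma_0}(A)$ for positive trace-class $A$ with finite entropy. On each cutoff,
\[
 \mathfrak D_\sigma(P_nAP_n)\le\mathfrak D_{\sigma_0}(P_nAP_n)+\kappa_0\Tr A.
\]
Lower semicontinuity of $\mathfrak D_\sigma$ in trace norm then yields $\mathfrak D_\sigma(A)\le\mathfrak D_{\sigma_0}(A)+\kappa_0\Tr A$. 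This avoids needing equality in infinite dimension; only the inequality is required.

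Finally, apply this with $A=Y^\dagger Y$ and $A=YY^\dagger$, both of trace $\|Y\|_2^2$, using
\[
 \Ent_{2,\sigma}(Y)=\tfrac12\mathfrak D_\sigma(Y^\dagger Y)+\tfrac12\mathfrak D_\sigma(YY^\dagger)-\|Y\|_2^2\log\|Y\|_2^2 .
\]
This identity is written for the weight $\sigma$ in the proof of \cref{lem:rothaus-entropy-domain}, and it holds verbatim for $\sigma_0$. The $\log\|Y\|_2^2$ terms cancel between the two weights. Since $\Ent_{2,\sigma_0}(Y)<\infty$, both $\mathfrak D_{\sigma_0}$ terms are finite, so the cutoff recovery applies to each. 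The result is \eqref{eq:finite-core-entropy-comparison} with $\kappa=\kappa_0$; the case $Y=0$ is trivial.
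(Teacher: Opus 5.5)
Your proof is correct. It agrees with the paper on the bounded operator $K=\log\sigma_0-\log\sigma=P(\log\sigma_0-\log\sigma)P-(\log c)Q$ and on $\kappa$, but it justifies the key step by a different route.

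\emph{The paper's route.} It cites the bounded-relative-Hamiltonian identity $D(\rho\|\sigma)=D(\rho\|\sigma_0)+\Tr(\rho K)$ in the extended relative-entropy sense, from Hiai. This gives the exact formula $\Ent_{2,\sigma}(Y)-\Ent_{2,\sigma_0}(Y)=\tfrac12\Tr\bigl((Y^\dagger Y+YY^\dagger)K\bigr)$.

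\emph{Your route.} You prove only the one-sided inequality. You compute on finite blocks $P_n=P+E_n$ that reduce both weights, with $E_n$ finite spectral projections of $Q\sigma_0Q$ on $Q\cH$. You then use the pinching bound $\mathfrak D_{\sigma_0}(P_nAP_n)\le\mathfrak D_{\sigma_0}(A)-s_n\log s_n$, where $s_n=\Tr\bigl((I-P_n)A\bigr)$, taken from the proof of \cref{lem:rothaus-entropy-domain}. Trace-norm lower semicontinuity of $\mathfrak D_\sigma$ finishes the argument.

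\emph{Comparison.} Your route stays within tools the paper already develops and needs no external perturbation theorem. In fact only the $\limsup$ half of the cutoff recovery is used, so finiteness of $\Ent_{2,\sigma_0}(Y)$ is not even needed. The hypotheses are symmetric under $\sigma\leftrightarrow\sigma_0$, $c\leftrightarrow c^{-1}$, so the same argument also gives the reverse bound. The price is that you rely on common finite-rank reducing projections, which exist because of this block structure. The cited identity holds for any pair of weights whose logarithms differ by a bounded operator.

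\emph{Two harmless inaccuracies.}
\begin{itemize}
\item $P\sigma P$ and $P\sigma_0P$ need not commute, so the weights are in general not simultaneously diagonalisable on $P\cH$. The finite-block identity only needs finite traces on a subspace reducing both weights, so commutation is irrelevant.
\item $P+E_n$ need not be a spectral projection of $\sigma_0$ in the strict sense $\mathbf 1_\Omega(\sigma_0)$. The pinching and finite-block computations only use that $P_n$ commutes with $\sigma_0$ and $\sigma$, which does hold.
\end{itemize}
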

\begin{proof}
The quadratic-form difference $K=\log\sigma_0-\log\sigma$ has a
bounded self-adjoint extension.  Indeed, block functional calculus gives
\[
 K=P(\log\sigma_0-\log\sigma)P-(\log c)Q,
 \qquad
 \|K\|_\infty\le\max\{\|P(\log\sigma_0-\log\sigma)P\|_\infty,|\log c|\}<\infty.
\]
The $P$-block is finite dimensional and both compressed states are
strictly positive there.  Since the difference of the reference logarithms is bounded,
relative entropy satisfies, whenever $D(\rho\|\sigma_0)<\infty$,
\[
 D(\rho\|\sigma)=D(\rho\|\sigma_0)+\Tr(\rho K).
\]
This identity holds also when the individual energy and von Neumann
entropy traces are infinite, by bounded perturbation of the reference
logarithm in the extended relative-entropy definition;
see \cite[Appendix~B.1]{Hiai2018Standard} for bounded relative Hamiltonians.
Apply it to $Y^\dagger Y/\|Y\|_2^2$ and $YY^\dagger/\|Y\|_2^2$
in the explicit relative-entropy formula for $\Ent_{2,\sigma}$.
For $Y\ne0$ this gives
\[
 \Ent_{2,\sigma}(Y)-\Ent_{2,\sigma_0}(Y)
 =\frac12\Tr\bigl((Y^\dagger Y+YY^\dagger)K\bigr)
 \le\|K\|_\infty\|Y\|_2^2.
\]
For $Y=0$ the claim is immediate.  Thus one may take
$\kappa=\|\log\sigma_0-\log\sigma\|_\infty$.
\end{proof}

\begin{proposition}
\label{prop:finite-core-hc}
Let $T_t^0$ and $T_t$ be primitive KMS-symmetric Markov semigroups with
faithful invariant states $\sigma_0$ and $\sigma$, and let
$L_0,L\ge0$ be their standard-form generators with closed forms
$\cE_0,\cE$ on $\mathscr T_2(\cH)$.
Assume the finite-core relation of \cref{lem:finite-core-entropy},
and choose its comparison constant $\kappa\ge0$.  Suppose that,
for some finite $c_0>0$,
\[
 \Ent_{2,\sigma_0}(Y)\le c_0\cE_0(Y),
 \qquad Y\in D(\cE_0).
\]
Assume also that $D(\cE)\subset D(\cE_0)$ and that there are finite
constants $A>0$ and $B\ge0$ such that
\begin{equation}
\label{eq:finite-core-form-comparison}
 \cE_0(Y)\le A\cE(Y)+B\|Y\|_2^2,
 \qquad Y\in D(\cE).
\end{equation}
If $L$ has gap $\lambda>0$, define
$c_{\rm fc}=c_0A+(c_0B+\kappa+2)/\lambda$.  Then
\begin{equation}
\label{eq:finite-core-tight-lsi}
 \Ent_{2,\sigma}(Y)\le c_{\rm fc}\cE(Y),
 \qquad Y\in D(\cE).
\end{equation}
This conclusion concerns the $2$-logarithmic Sobolev inequality.
We do not deduce a hypercontractive bound from these assumptions.
\end{proposition}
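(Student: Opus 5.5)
The plan is to chain the entropy comparison, the assumed $\sigma_0$-logarithmic Sobolev inequality, the form comparison, and finally the Rothaus centring step with the Poincar\'e inequality of $L$, exactly as in Steps~4--5 of \cref{thm:Gross}. Fix $Y\in D(\cE)$. Since $D(\cE)\subset D(\cE_0)$, the hypothesis gives $\Ent_{2,\sigma_0}(Y)\le c_0\cE_0(Y)<\infty$. Hence \cref{lem:finite-core-entropy} applies and yields the defective inequality
\[
 \Ent_{2,\sigma}(Y)\le c_0\cE_0(Y)+\kappa\|Y\|_2^2
 \le c_0A\,\cE(Y)+(c_0B+\kappa)\|Y\|_2^2,
\]
where \eqref{eq:finite-core-form-comparison} was used in the last step.

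Next I remove the defect on the invariant direction. Primitivity and KMS symmetry of $T_t$ give $\ker L=\C\sqrt\sigma$. Write $a=\langle\sqrt\sigma,Y\rangle$ and $Y_0=Y-a\sqrt\sigma$. Then $Y_0\in D(\cE)$, because $\sqrt\sigma\in\ker L\subset D(\cE)$, and moreover
\[
 \cE(Y_0)=\cE(Y),\qquad \|Y_0\|_2\le\|Y\|_2 .
\]
The defective inequality therefore also holds for $Y_0$; in particular $\Ent_{2,\sigma}(Y_0)<\infty$. Now \cref{lem:rothaus-entropy-domain}, applied with reference state $\sigma$, gives
\[
 \Ent_{2,\sigma}(Y)\le\Ent_{2,\sigma}(Y_0)+2\|Y_0\|_2^2
 \le c_0A\,\cE(Y)+(c_0B+\kappa+2)\|Y_0\|_2^2.
\]
Finally, the gap of $L$ gives $\lambda\|Y_0\|_2^2\le\cE(Y_0)=\cE(Y)$. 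Substituting this bound produces \eqref{eq:finite-core-tight-lsi} with $c_{\rm fc}=c_0A+(c_0B+\kappa+2)/\lambda$.

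The main point requiring care is the centring. The Rothaus step must be carried out with respect to the \emph{new} weight $\sigma$, whose invariant vector $\sqrt\sigma$ differs from $\sqrt{\sigma_0}$. For this reason the defective bound has to be established for $Y_0$ itself before Rothaus is invoked, and this is why that bound is needed for all of $D(\cE)$ rather than only for $Y$. It also requires that $\sqrt\sigma\in D(\cE)\subset D(\cE_0)$, which holds because $\sqrt\sigma\in\ker L$. No hypercontractivity or cutoff argument is needed, because \cref{lem:finite-core-entropy} and \cref{lem:rothaus-entropy-domain} already hold on the full entropy domain in infinite dimension.
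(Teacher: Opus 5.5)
Your proposal is correct and follows essentially the same route as the paper. You centre with respect to $\sqrt\sigma$, use the $\sigma_0$-inequality together with \cref{lem:finite-core-entropy} on the centred vector to get finite entropy, apply \cref{lem:rothaus-entropy-domain}, and finish with the form comparison and the gap. The only difference is cosmetic: you first record the defective bound for every element of $D(\cE)$ and then apply it to $Y_0$, whereas the paper works with the centred vector directly.
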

\begin{proof}
Fix $X\in D(\cE)$ and define its centred vector by
$Y=X-\langle\sqrt\sigma,X\rangle\sqrt\sigma$.
Since $\sqrt\sigma\in\ker L$, both $X$ and $Y$ belong to
$D(\cE)\subset D(\cE_0)$.  The free logarithmic Sobolev
inequality and \cref{lem:finite-core-entropy} first give
\[
 \Ent_{2,\sigma}(Y)
 \le\Ent_{2,\sigma_0}(Y)+\kappa\|Y\|_2^2
 \le c_0\cE_0(Y)+\kappa\|Y\|_2^2<\infty.
\]
Thus \cref{lem:rothaus-entropy-domain} applies to $Y$ without an
endpoint assumption.  Combining it with \eqref{eq:finite-core-form-comparison} gives
\[
 \Ent_{2,\sigma}(X)
 \le\Ent_{2,\sigma}(Y)+2\|Y\|_2^2
 \le c_0A\cE(X)+(c_0B+\kappa+2)\|Y\|_2^2.
\]
The centering does not change the energy, and the gap controls its norm:
\[
 L^{1/2}Y=L^{1/2}X,\qquad
 \lambda\|Y\|_2^2\le\|L^{1/2}Y\|_2^2=\cE(X).
\]
Substitution removes the squared-norm term:
\[
 \Ent_{2,\sigma}(X)
 \le\left(c_0A+\frac{c_0B+\kappa+2}{\lambda}\right)\cE(X)
 =c_{\rm fc}\cE(X).
\]  To deduce hypercontractivity from a logarithmic Sobolev inequality,
one needs regularity of the $L^p$ Dirichlet forms
\cite{OlkiewiczZegarlinski1999,bardet2017estimating}.
We prove an $L^2$-to-$L^4$ bound directly for the Coulomb sampler
in \cref{subsec:coulomb}.
\end{proof}
\section{Applications and obstructions}\label{sec:applications}

We apply the entropy-decay criteria to bosonic Gibbs samplers and
examine how the jump operators affect hypercontractivity.  We classify one-mode polynomial jumps by their order and then
treat Bose--Hubbard models with hopping. For finite-rank Coulomb
perturbations, we use a weighted oscillator estimate.  A periodic charging
model then shows that a positive spectral gap can coexist with failure
of every $L^2$-to-$L^p$ endpoint with $p>2$.

\subsection{Polynomial jumps and multimode Bose--Hubbard models}\label{subsec:mixed-polynomial}
We first compare the growth of a one-mode number Hamiltonian with the
order of the bosonic jumps.  This gives a classification of the
hypercontractive endpoints and explains why cubic jumps suffice for
quadratic number Hamiltonians.  We then prove exponential entropy
decay for a full multimode Bose--Hubbard sampler with hopping and
repulsive density interactions in \cref{thm:multimode-BH}.

Entropy decay for quantum Ornstein--Uhlenbeck semigroups, with
Hamiltonians quadratic in the creation and annihilation operators,
is well understood; see, for example,
\cite{carbone2008hypercontractivity,gao2021ricci,beigi2025meta}.
Here the Hamiltonian is a nonlinear polynomial of the number operator,
and the filtered transition rates depend strongly on occupation number.

We use the bosonic operators of \cref{subsec:spaces-gibbs} on the
one-mode space $\cH=\ell^2(\N_0)$, omitting their mode index.
Let $H=h(N)$, where $h$ is a real polynomial of degree $m\ge2$ with
positive leading coefficient, and fix $\beta>0$.  The occupation number
is denoted by $n$, and $Z_\beta=\sum_{n\ge0}e^{-\beta h(n)}$.
A jump of order $d\ge1$ creates or annihilates $d$ particles; we use the
adjoint pair $\cA_d=\{a^d,(a^\dagger)^d\}$.  Choose a finite nonempty set $D\subset\{1,2,\ldots\}$ of jump orders,
put $d_*:=\max D$, and take the bare family
$\cA_D:=\bigcup_{d\in D}\cA_d$.
The set $D$ is unrelated to the relative entropy $D(\rho\|\sigma)$.

\Cref{thm:mixed-polynomial} gives three regimes: no hypercontractive
endpoint when $d_*<m$, a positive waiting time when $d_*=m$, and an
endpoint at every positive time when $d_*>m$.  When $d_*=m$, the theorem gives the threshold but excludes
the boundary time from its statement.
The proof compares the growth of the Gibbs weight, governed by
$h(n)$, with the diagonal damping of order $n^{d_*}$.  The remaining terms
are uniformly bounded over compact intervals of finite exponents and
can be treated by \cref{thm:bounded-perturbation}.  A no-jump lower
bound in the invariant diagonal sector gives the converse estimates.

\paragraph{\emph{Separating the diagonal loss.}}
We first work with one jump pair and identify the unbounded part of
its generator.  We first prove the matrix formulas on the
finite-rank core $\operatorname{span}\{E_{nk}:n,k\ge0\}$, where
$E_{nk}=|n\rangle\langle k|$.
For the filter \eqref{eq:filter}, the dressed creation and annihilation
jumps are denoted by
$L_d^+$ and $L_d^-$, respectively:
\begin{equation*}
 L_d^+=(a^\dagger)^d\fhat(h(N+d)-h(N)),
 \qquad
 L_d^-=a^d\fhat(h(N-d)-h(N)).
\end{equation*}
Put
\[
 \Delta_d^+h(n)=h(n+d)-h(n),\qquad
 \Delta_d^-h(n)=h(n-d)-h(n),
\]
Here $h(n-d)$ means evaluation of the polynomial at the integer
$n-d$. For $n<d$, the lowering jump is zero, so no transition to
a negative occupation occurs.  Every falling factorial below
is defined to be zero when $n<d$.
To use the perturbation theorem, we need bounds on the weighted
generator.  Conjugation changes the gain operators and leaves the
diagonal losses unchanged.  On the symmetric Schatten realization of
$L^p(\sigma_\beta)$, for $2\le p<\infty$ define
\[
 g_{d,\pm,p}(n)
 =\fhat(\Delta_d^\pm h(n))e^{\beta(p-1)\Delta_d^\pm h(n)/(2p)}
 =\exp\left[-\frac{\sqrt{1+(\beta\Delta_d^\pm h(n))^2}}4
 +\frac{(p-2)\beta\Delta_d^\pm h(n)}{4p}\right]
\]
and
\begin{align*}
 B_{d,+,p}&=(a^\dagger)^d g_{d,+,p}(N),
 &B_{d,-,p}&=a^d g_{d,-,p}(N),\\
 b_{d,+}(N)&=\prod_{j=1}^d(N+j)|\fhat(\Delta_d^+h(N))|^2,
 &b_{d,-}(N)&=\prod_{j=0}^{d-1}(N-j)|\fhat(\Delta_d^-h(N))|^2.
\end{align*}
The symbols $B_{d,\pm,p}$ denote the transported Heisenberg gain
operators, and $b_{d,\pm}(N)$ denote the diagonal loss operators.
Indeed, transport of the gain $L^\dagger X L$ gives
$\sigma_\beta^{1/(2p)}L^\dagger\sigma_\beta^{-1/(2p)}$.
Relabelling the adjoint pair and using the KMS relation gives the
factor $(p-1)/(2p)$ above.  In particular the transported generator
annihilates $\sigma_\beta^{1/p}$, the vector representing $I$.  The weighted
generator of the $d$th jump pair, acting on its symmetric Schatten
realization, is
\begin{equation*}
 \mathscr L_{p,h,d}(X)
 =B_{d,+,p}XB_{d,+,p}^\dagger
 +B_{d,-,p}XB_{d,-,p}^\dagger
 -\frac12\{b_{d,+}(N)+b_{d,-}(N),X\}.
\end{equation*}

The next lemma bounds the gains and the upward loss. The downward
loss is the only unbounded part of the reference generator.
For nonnegative sequences $(u_n)$ and $(v_n)$ in this subsection,
$u_n\lesssim v_n$ means $u_n\le C v_n$ for all
sufficiently large $n$, with a constant $C$ independent of $n$;
$u_n\gtrsim v_n$ means $v_n\lesssim u_n$, and $u_n\asymp v_n$ means
both inequalities.  For a linear map $\mathscr V$ on $\mathscr T_p(\cH)$,
write $\|\mathscr V\|_{p\to p}$ for its Schatten-$p$ operator norm.

\begin{lemma}
\label{lem:single-complement}
For every $2\le p_*<\infty$, the operators $B_{d,\pm,p}$ and
$b_{d,+}(N)$ are bounded uniformly for $2\le p\le p_*$.  Hence
\begin{equation*}
 \mathscr V_{d,p}(X):=B_{d,+,p}XB_{d,+,p}^\dagger
 +B_{d,-,p}XB_{d,-,p}^\dagger
 -\frac12\{b_{d,+}(N),X\}
\end{equation*}
satisfies
\begin{equation}\label{eq:single-complement-bound}
 \sup_{2\le p\le p_*}\|\mathscr V_{d,p}\|_{p\to p}<\infty.
\end{equation}
\end{lemma}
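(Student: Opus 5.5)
The plan is to reduce everything to scalar estimates on the occupation basis and then use the ideal property of Schatten norms. Since $B_{d,+,p}=(a^\dagger)^dg_{d,+,p}(N)$ and $B_{d,-,p}=a^dg_{d,-,p}(N)$ are weighted shifts, their operator norms are
\[
 \|B_{d,+,p}\|_\infty^2=\sup_{n\ge0}\prod_{j=1}^d(n+j)\,g_{d,+,p}(n)^2,
 \qquad
 \|B_{d,-,p}\|_\infty^2=\sup_{n\ge d}\prod_{j=0}^{d-1}(n-j)\,g_{d,-,p}(n)^2 .
\]
The diagonal operator $b_{d,+}(N)$ has norm $\sup_n b_{d,+}(n)$. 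Once these suprema are bounded uniformly in $p\in[2,p_*]$, the bound \eqref{eq:single-complement-bound} follows from the ideal property of Schatten norms:
\[
 \|BXB^\dagger\|_p\le\|B\|_\infty^2\|X\|_p,
 \qquad
 \tfrac12\|\{b,X\}\|_p\le\|b\|_\infty\|X\|_p .
\]
This gives $\|\mathscr V_{d,p}\|_{p\to p}\le\|B_{d,+,p}\|_\infty^2+\|B_{d,-,p}\|_\infty^2+\|b_{d,+}\|_\infty$.

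For the scalar bounds, write $x=\beta\Delta_d^\pm h(n)$. The exponent in $g_{d,\pm,p}$ is $-\tfrac14\sqrt{1+x^2}+\tfrac14(1-2/p)x$, and $0\le1-2/p\le1-2/p_*<1$.
\begin{itemize}
\item If $x\ge0$, the exponent is at most $-\tfrac14\bigl(1-(1-2/p_*)\bigr)x=-x/(2p_*)$.
\item If $x\le0$, the exponent is at most $-|x|/4$.
\end{itemize}
In both cases $g_{d,\pm,p}(n)\le e^{-|x|/(2p_*)}$ uniformly in $p$. For the loss, $|\fhat(\nu)|^2=\exp[-(\sqrt{1+(\beta\nu)^2}+\beta\nu)/2]$ decays like $e^{-\beta\nu}$ as $\nu\to+\infty$. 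Since $\Delta_d^+h(n)\to+\infty$, this makes $b_{d,+}(n)$ bounded.

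It remains to use the growth of $h$. Let $h$ have degree $m\ge2$ and leading coefficient $c>0$. Then
\[
 \Delta_d^+h(n)=cdm\,n^{m-1}+O(n^{m-2}),
 \qquad
 \Delta_d^-h(n)=-cdm\,n^{m-1}+O(n^{m-2}).
\]
Hence $|\beta\Delta_d^\pm h(n)|\ge c'n$ for all large $n$, with $c'>0$. The factor $e^{-c'n/(2p_*)}$ therefore dominates the polynomial prefactor of degree $d$. The tails of all three suprema are thus bounded by a constant depending only on $p_*$, $d$, $h$ and $\beta$.

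For the finitely many small $n$, each term is a continuous function of $p$ on the compact interval $[2,p_*]$, so it is uniformly bounded. The only delicate point is the uniformity as $p\uparrow p_*$ in the upward gain, where the KMS factor $e^{\beta(p-1)\Delta/(2p)}$ partly cancels the filter decay. The strict inequality $1-2/p_*<1$, which requires $p_*<\infty$, is exactly what leaves the residual decay rate $1/(2p_*)$. This is why the statement restricts to finite $p_*$, and it is the step to check carefully.
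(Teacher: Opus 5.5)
Your proposal is correct and follows the paper's proof almost exactly. Both arguments use the sign split giving $g_{d,\pm,p}(n)\le e^{-|x|/(2p_*)}$ uniformly in $p$, then the growth $|\Delta_d^\pm h(n)|\gtrsim n^{m-1}$ to dominate the ladder coefficients, and finally the ideal/anticommutator bound $\|\mathscr V_{d,p}\|_{p\to p}\le\|B_{d,+,p}\|_\infty^2+\|B_{d,-,p}\|_\infty^2+\|b_{d,+}(N)\|_\infty$. Your compactness argument for small $n$ is unnecessary, since $g_{d,\pm,p}\le1$ for all $n$ already, but it is harmless.
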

\begin{proof}
For $x=\beta\Delta_d^\pm h(n)$ and $2\le p\le p_*$,
\begin{equation*}
 |g_{d,\pm,p}(n)|\le e^{-|x|/(2p_*)}.
\end{equation*}
For $x\ge0$ the exponent is at most $-x/(2p)$.
For $x<0$ it is at most $-(p-1)|x|/(2p)\le-|x|/(2p_*)$.
The polynomial expansion gives
\[
 \Delta_d^+h(n)=ma_md\,n^{m-1}+O(n^{m-2}),\qquad
 \Delta_d^-h(n)=-ma_md\,n^{m-1}+O(n^{m-2}),
\]
where $a_m>0$ is the leading coefficient.  Thus, for all sufficiently
large $n$, $|\Delta_d^\pm h(n)|\ge c_dn^{m-1}$ and
$\Delta_d^+h(n)\ge c_dn^{m-1}$.  This exponential decay dominates
the polynomial ladder coefficients.  For the positive increment,
put $x=\beta\Delta_d^+h(n)$; then
\[
 |\fhat(\Delta_d^+h(n))|^2
 =e^{-(\sqrt{1+x^2}+x)/2}\le e^{-x},
\]
so $b_{d,+}(n)\lesssim n^de^{-cn^{m-1}}$ for some $c>0$.
In particular,
\[
 \|B_{d,\pm,p}\|_\infty
 \le C\sup_{n\ge0}(1+n)^{d/2}e^{-cn^{m-1}/p_*}<\infty.
\]
The ideal property and the anticommutator bound now give
\[
 \|\mathscr V_{d,p}\|_{p\to p}
 \le\|B_{d,+,p}\|_\infty^2+\|B_{d,-,p}\|_\infty^2
 +\|b_{d,+}(N)\|_\infty,
\]
uniformly for $2\le p\le p_*$.
\end{proof}

\paragraph{\emph{Combining the jump orders.}}
We now sum the generators over $d\in D$ and include all downward
losses in one diagonal operator.  This separates the unbounded part, so
\cref{thm:bounded-perturbation} applies to the full family.  Set
\begin{align}
 A_D(N)&:=\frac12\sum_{d\in D}b_{d,-}(N),
 &\mathscr A_D(X)&:=-(A_D(N)X+XA_D(N)),\label{eq:mixed-diagonal}\\
 \mathscr V_{D,p}&:=\sum_{d\in D}\mathscr V_{d,p},
 &\mathscr L_{p,h,D}&=\mathscr A_D+\mathscr V_{D,p}.
 \label{eq:mixed-bounded-perturbation}
\end{align}
Here $\mathscr L_{p,h,D}=\sum_{d\in D}\mathscr L_{p,h,d}$.
By \cref{lem:single-complement},
\begin{equation*}
 M_{D,p_*}:=\sup_{2\le p\le p_*}
 \|\mathscr V_{D,p}\|_{p\to p}<\infty.
\end{equation*}
The diagonal loss satisfies
\begin{equation}\label{eq:mixed-lambda-asymptotic}
 A_D(n)\asymp n^{d_*},
\end{equation}
because for each fixed $d$,
\[
 \frac{b_{d,-}(n)}{n^d}
 =\prod_{j=0}^{d-1}\left(1-\frac jn\right)
   |\fhat(\Delta_d^-h(n))|^2\longrightarrow1,
 \qquad \frac{2A_D(n)}{n^{d_*}}\longrightarrow1.
\]
Thus the largest jump order determines the damping at large
occupation. The estimate \eqref{eq:mixed-lambda-asymptotic} concerns
large $n$ only; lowering jumps can vanish at small occupations.
The diagonal semigroup is
\[
 T_t^D(X)=e^{-tA_D(N)}Xe^{-tA_D(N)}.
\]
Its gain in integrability is measured by the following estimate,
which reduces the weighted norm bound to a scalar supremum: for
$2\le r\le q<\infty$,
\begin{equation}\label{eq:mixed-diagonal-rq}
 \|T_t^D:L^r(\sigma_\beta)\to L^q(\sigma_\beta)\|
 \le Z_\beta^{1/r-1/q}
 \sup_{n\ge0}\exp\!\left[
 \beta(1/r-1/q)h(n)-2tA_D(n)\right].
\end{equation}
Indeed, if $X\in L^r(\sigma_\beta)$ and
$Y=\iota_{r,\sigma_\beta}(X)\in\mathscr T_r(\cH)$, the transported
output is $K_{t,r,q}YK_{t,r,q}$, where
\[
 K_{t,r,q}=e^{-tA_D(N)}\sigma_\beta^{1/(2q)-1/(2r)}.
\]
The estimate follows from
$\|K_{t,r,q}YK_{t,r,q}\|_q\le\|K_{t,r,q}\|_\infty^2\|Y\|_r$.
The positive term in the exponent of \eqref{eq:mixed-diagonal-rq}
is the cost of changing the Gibbs weight from $L^r$ to $L^q$; the
negative term is the damping supplied by the jumps.  Comparing their
growth gives the three cases above.

\paragraph{\emph{Classification and entropy decay.}}
We pass from the diagonal estimate to the full sampler, then use
the gap criterion of \cref{thm:KMS-endpoint} to deduce entropy decay.
Write $S_t^D$ for the consistent weighted semigroup whose Schatten
realizations satisfy
\[
 \iota_{p,\sigma_\beta}S_t^D\iota_{p,\sigma_\beta}^{-1}
 =e^{t\mathscr L_{p,h,D}}.
\]
Let $\Phi_t^D$ be its Schr\"odinger-picture semigroup on trace-class
operators.

\begin{theorem}
\label{thm:mixed-polynomial}
Let $h$ be a real polynomial of degree $m\ge2$ with positive leading
coefficient, let $D\subset\{1,2,\ldots\}$ be finite and nonempty, and put
$d_*=\max D$.  For the sampler with bare jumps $\cA_D$ and filter
\eqref{eq:filter}, the following statements hold.
\begin{enumerate}
\item If $m>d_*$, the sampler has no bounded $L^2$-to-$L^4$ map at any finite
positive time.
\item If $m=d_*$ and $a_m>0$ is the leading coefficient of $h$,
then, for every $p>2$, the sampler is bounded
$L^2(\sigma_\beta)\to L^p(\sigma_\beta)$ for
\begin{equation}\label{eq:mixed-threshold}
 t>\beta a_m(1/2-1/p),
\end{equation}
and unbounded between these spaces for
$0<t<\beta a_m(1/2-1/p)$.
The boundary time is excluded from this statement.
\item If $m<d_*$, it is bounded $L^2(\sigma_\beta)\to
L^p(\sigma_\beta)$ for every $p>2$ and every $t>0$.
\end{enumerate}
\emph{Quantitative bound.} Fix a slope $\delta>0$ and define
\[
 p_\delta(u)=(1/2-\delta u)^{-1},\qquad
 0\le u<(2\delta)^{-1}.
\]
For every time $s\in(0,(2\delta)^{-1})$, define
\begin{equation*}
 \omega_{D,\delta,s}:=
 \max\!\left\{0,\,
 \delta\log Z_\beta+
 \sup_{n\ge0}\{\beta\delta h(n)-2A_D(n)\}
 +M_{D,p_\delta(s)}\right\}.
\end{equation*}
Whenever the supremum is finite, the full semigroup $S_t^D$ satisfies
\begin{equation}\label{eq:mixed-full-curve}
 \|S_{t-u}^D:L^{p_\delta(u)}(\sigma_\beta)
 \to L^{p_\delta(t)}(\sigma_\beta)\|
 \le e^{\omega_{D,\delta,s}(t-u)},
 \qquad0\le u\le t\le s.
\end{equation}

\emph{Kernel and entropy decay.} The mixed semigroup is conservative,
and its nonnegative standard-form operator $L_D=-\mathscr L_{2,h,D}$
has compact resolvent. Put $g=\gcd(D)$ and let $P_r$ project onto
occupations $n\equiv r\pmod g$. Then
\[
 \ker L_D=\operatorname{span}\{\sqrt{\sigma_\beta}P_r:0\le r<g\}.
\]
In particular, the sampler is primitive exactly when $g=1$.
Assume henceforth $g=1$, and let $\lambda_D>0$ be its smallest
positive eigenvalue. If $d_*=m$, \eqref{eq:mixed-full-curve} reaches
$p=4$ for every $s>\beta a_m/4$; if $d_*>m$, it does so for every $s>0$.
For either of these two regimes, fix such an admissible $s$
and, with $\delta=1/(4s)$, put
\begin{equation}\label{eq:mixed-entropy-constant}
 C_D(s):=e^{s\omega_{D,1/(4s),s}}.
\end{equation}
Then every state $\rho$ satisfies
\begin{equation}\label{eq:mixed-entropy-decay}
 D(\Phi_t^D(\rho)\|\sigma_\beta)
 \le e^{-\alpha_D(s)t}D(\rho\|\sigma_\beta),
 \qquad
 \alpha_D(s)=\left(
 2s+\frac{2\log C_D(s)+1}{\lambda_D}
 \right)^{-1}.
\end{equation}
\end{theorem}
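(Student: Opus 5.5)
The plan is to prove the quantitative curve bound \eqref{eq:mixed-full-curve} first and read the three regimes off it. Along $p_\delta$, the diagonal estimate \eqref{eq:mixed-diagonal-rq} with $r=p_\delta(u)$, $q=p_\delta(t)$ has $1/r-1/q=\delta(t-u)$. Writing $\tau=t-u$, it gives
\[
 \|T^D_{\tau}:L^{p_\delta(u)}\to L^{p_\delta(t)}\|\le \exp\Bigl(\tau\bigl[\delta\log Z_\beta+\sup_n\{\beta\delta h(n)-2A_D(n)\}\bigr]\Bigr).
\]
This is exactly the two-time hypothesis of \cref{thm:bounded-perturbation}, with $\omega=\max\{0,\cdot\}$. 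Adding $\mathscr V_{D,p}$ and using the uniform bound $M_{D,p_\delta(s)}$ from \cref{lem:single-complement} then yields \eqref{eq:mixed-full-curve}. The Trotter formula on $L^2$ holds because $\mathscr V_{D,2}$ is bounded and $\mathscr A_D$ generates a contraction semigroup (bounded perturbation theorem of \cite{engel1999one}). Consistency of the realizations holds on the finite-rank core $\operatorname{span}E_{nk}$.

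For the three regimes I compare $\beta\delta h(n)\sim\beta\delta a_m n^m$ with $2A_D(n)\sim n^{d_*}$, using \eqref{eq:mixed-lambda-asymptotic}. If $d_*>m$ the supremum is finite for every $\delta$. Taking $\delta=(1/2-1/p)/t$ reaches any $p$ at any time $t$, so we get boundedness for all $t>0$. If $d_*=m$, finiteness requires $\beta\delta a_m<1$. Hence $p$ is reached at time $t$ whenever $(1/2-1/p)/t<1/(\beta a_m)$, which is \eqref{eq:mixed-threshold}; in particular $p=4$ is reached for $s>\beta a_m/4$. Beyond the threshold time I compose with $L^p$-contractivity.

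For the converse (statement (1), and unboundedness below the threshold), I test on the diagonal observables $X=E_{nn}$. The sector of diagonal operators is invariant: conjugation by ladder operators maps diagonals to diagonals. There the generator acts on the vector of diagonal entries as a Markov chain with killing rate $2A_D(n)$. Positivity and the Dyson expansion give the no-jump lower bound $\langle n|S_tE_{nn}|n\rangle\ge e^{-2tA_D(n)}$. This yields
\[
 \|S_tE_{nn}\|_{p,\sigma}/\|E_{nn}\|_{2,\sigma}\ge e^{-2tA_D(n)}\sigma_n^{1/p-1/2},\qquad \sigma_n^{1/p-1/2}=\bigl(Z_\beta e^{\beta h(n)}\bigr)^{1/2-1/p}.
\]
In case (1) the right side is $\exp(\beta(1/2-1/p)a_mn^m-O(n^{d_*}))\to\infty$. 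In case (2) it diverges precisely when $2t<\beta a_m(1/2-1/p)\cdot(\text{leading ratio})$. Matching the constant exactly with $A_D(n)\sim n^{d_*}/2$ gives the stated threshold. Checking that this constant matches the upper-bound threshold is the delicate bookkeeping step I expect to be the main obstacle.

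For the kernel, compact resolvent follows from the compactness paragraph of \cref{sec:proof-main}, once any $L^2\to L^q$ bound is available. When $d_*\le m-1$ I would instead argue directly that $A_D(N)$ has compact resolvent and $\mathscr V$ is bounded. For a kernel element $Y$, $\cE(Y)=0$ forces $Y$ to commute in the KMS sense with every filtered jump. Writing out the resulting off-diagonal relations forces $Y$ to be diagonal with entries $\sqrt{\sigma_n}\,y_n$, where $y_n=y_{n\pm d}$ for $d\in D$. Such sequences are constant on residue classes mod $g$. Finally, \cref{cor:generator-KMS-endpoint} applies with the constant $C_D(s)$ and the gap $\lambda_D$, giving \eqref{eq:mixed-entropy-decay}.
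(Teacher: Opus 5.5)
Your proposal follows the paper's proof essentially step for step. It feeds the diagonal bound \eqref{eq:mixed-diagonal-rq} along $p_\delta$ into \cref{thm:bounded-perturbation} and compares $\beta\delta a_mn^m$ with $2A_D(n)\sim n^{d_*}$. It then uses a no-jump lower bound on normalized diagonal matrix units, compact resolvent from a bounded perturbation of the diagonal loss, the kernel via vanishing paired derivations and B\'ezout, and finally \cref{thm:KMS-endpoint} with $q_0=4$.

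One step is wrong as written, though the damage is harmless.

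\emph{The no-jump bound.} The bound $\langle n|S_tE_{nn}|n\rangle\ge e^{-2tA_D(n)}$ does not follow from positivity and a Dyson expansion. The remainder $\mathscr V_{D,p}$ contains the upward loss $-\frac12\{b_{d,+}(N),\cdot\}$ and is not positivity preserving. You must expand around the full diagonal exit rate $r(n)=2A_D(n)+\sum_{d\in D}b_{d,+}(n)$, which gives $e^{-tr(n)}$. Your inequality is in fact false for small $t$, since $b_{d,+}(n)>0$ forces $r(n)>2A_D(n)$.

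Because $b_{d,+}$ is bounded by \cref{lem:single-complement} (indeed $b_{d,+}(n)\to0$), one still has $r(n)\sim n^{d_*}$, so your conclusions stand. The threshold matching you call delicate is then immediate, and both sides switch at $t=\beta a_m(1/2-1/p)$:
\begin{itemize}
\item the upper bound requires $\beta\delta a_m<1$ with $\delta=(1/2-1/p)/t$;
\item the logarithm of the lower bound is $(\beta a_m(1/2-1/p)-t)n^m+o(n^m)$.
\end{itemize}

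\emph{Identification with the sampler.} The Trotter formula you invoke concerns the semigroup generated by $\mathscr A_D+\mathscr V_{D,2}$. To apply it, and \cref{thm:KMS-endpoint}, to the form-constructed sampler, you need the identification made in the paper. Since $\mathscr V_{D,2}$ is bounded and symmetric, $\mathscr A_D+\mathscr V_{D,2}$ is essentially self-adjoint on finite matrices, where it agrees with the constructed generator.

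\emph{Kernel.} For the kernel formula you should also record the easy converse inclusion: each $\sqrt{\sigma_\beta}P_r$ lies in the form domain and is annihilated by every paired derivation.
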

\begin{proof}
\emph{Step 1: construction of the sampler.}
The sampler is obtained from the infinite-dimensional form
construction of \cite[Section~2]{BeckerRouzeSalzmannToAppearcmp}.
Indeed, choose a scalar $c\in\R$ that makes the Hamiltonian positive.
Put $\widetilde H=h(N)+cI\asymp(N+1)^m$.  For either
$A=a^d$ or $A=(a^\dagger)^d$, the ladder coefficients give
\[
 \|A\widetilde H^{-d/(2m)}\|_\infty
 +\|\widetilde H^{d/(2m)}A\widetilde H^{-d/m}\|_\infty<\infty.
\]
All jumps are closed on their respective maximal domains and have
the finite occupation span as a common core. For the finite family,
the same ladder estimates give the common energy-bound exponents $\gamma=d_*/(2m)$ and $\mu=d_*/m$.
The filter
\eqref{eq:filter} is nonvanishing and satisfies the required KMS and
boundedness conditions.  Thus the closed form of
\cite[Propositions~2.4 and~2.8]{BeckerRouzeSalzmannToAppearcmp}
generates a conservative KMS-symmetric QMS for every polynomial $h$
covered by \cref{thm:mixed-polynomial}.

We identify its generator before applying the perturbation theorem.
On Hilbert--Schmidt matrix units,
\[
 \mathscr A_D E_{nk}=-(A_D(n)+A_D(k))E_{nk}.
\]
Finite matrices form an operator core for this self-adjoint diagonal
operator. The remainder $\mathscr V_{D,2}$ is bounded by
\cref{lem:single-complement} and symmetric by KMS reversibility.
Consequently $\mathscr A_D+\mathscr V_{D,2}$ is essentially
self-adjoint on that core. The standard-form generator supplied by
the construction agrees there with
\eqref{eq:mixed-bounded-perturbation}, so uniqueness of the
self-adjoint closure identifies it with this bounded perturbation.
The bounded perturbation product formula
\cite[Corollary~III.5.8]{engel1999one} now gives
\[
 e^{t\mathscr L_{2,h,D}}
 =\operatorname*{s-lim}_{n\to\infty}
 \bigl(e^{(t/n)\mathscr A_D}e^{(t/n)\mathscr V_{D,2}}\bigr)^n
 \quad\text{on }\mathscr T_2(\cH).
\]
For the other exponents, transporting $\mathscr V_{D,p}$ back by
$\iota_{p,\sigma_\beta}$ gives the same map on finite matrix
observables. These observables are dense in every $L^p(\sigma_\beta)$
with $p<\infty$; the diagonal semigroups are consistent there as well.
Together with the QMS extensions, this verifies the consistency
and product-formula assumptions of \cref{thm:bounded-perturbation}.

\emph{Step 2: hypercontractive upper bounds.}
Fix the exponent curve $p_\delta$ and a time $s$ for which the
supremum in $\omega_{D,\delta,s}$ is finite.  The identity
$1/p_\delta(u)-1/p_\delta(t)=\delta(t-u)$ turns
\eqref{eq:mixed-diagonal-rq} into a bound along this curve, with
growth constant
\[
 \omega_0=\delta\log Z_\beta+
 \sup_{n\ge0}\{\beta\delta h(n)-2A_D(n)\},\qquad
 \|T_{t-u}^D:L^{p_\delta(u)}\to L^{p_\delta(t)}\|
 \le e^{\omega_0(t-u)}.
\]
By \cref{lem:single-complement}, the perturbation in
\eqref{eq:mixed-bounded-perturbation} is bounded uniformly for all exponents
along this curve.  Applying
\cref{thm:bounded-perturbation} adds $M_{D,p_\delta(s)}$ to this
constant and proves \eqref{eq:mixed-full-curve}.

It remains to choose the curve.  If $a_m$ is the leading coefficient
of $h$, then
\[
 \beta\delta h(n)-2A_D(n)
 =\beta\delta a_m n^m-n^{d_*}+o(n^m+n^{d_*}).
\]
For $d_*>m$, this tends to $-\infty$ for every $\delta>0$.
For $d_*=m$, its leading coefficient is $\beta\delta a_m-1$.
To reach $p>2$ at time $t$, choose
\[
 \delta=\frac{1/2-1/p}{t},\qquad
 \beta\delta a_m<1\quad\Longleftrightarrow\quad
 t>\beta a_m(1/2-1/p).
\]
This proves the claimed upper bounds in both regimes.

\emph{Step 3: the no-jump obstruction.}
To rule out earlier bounds, we use the invariant classical sector of diagonal observables,
whose outgoing rate is
$r(n)=\sum_{d\in D}(b_{d,+}(n)+b_{d,-}(n))\sim n^{d_*}$.
Writing $\pi_n=Z_\beta^{-1}e^{-\beta h(n)}$, test the semigroup on
the normalized observable $\pi_n^{-1/2}|n\rangle\langle n|$.
The probability of remaining at occupation $n$ without a jump until
time $t$ is $e^{-tr(n)}$, so
\[
 \|S_t^D:L^2(\sigma_\beta)\to L^p(\sigma_\beta)\|
 \ge e^{-tr(n)}\pi_n^{1/p-1/2}.
\]
The logarithm of this lower bound is
\[
 -tr(n)+(1/2-1/p)(\beta h(n)+\log Z_\beta)
 =\beta a_m(1/2-1/p)n^m-tn^{d_*}+o(n^m+n^{d_*}).
\]
It tends to $+\infty$ for $m>d_*$ at every finite $t$, and for
$m=d_*$ whenever $t<\beta a_m(1/2-1/p)$.  This proves both obstructions directly
for the full sampler.

\emph{Step 4: the kernel, gap and entropy decay.}
By Step~1, $L_D=-\mathscr A_D-\mathscr V_{D,2}$ is a bounded
symmetric perturbation of the nonnegative diagonal operator
$-\mathscr A_D$.  The latter has eigenvectors $E_{nk}$ and eigenvalues
$A_D(n)+A_D(k)\to\infty$ as $n+k\to\infty$.
If $\mathsf P_J$ projects onto the span of $E_{nk}$ with $n+k\le J$,
then
\[
 \|(I-\mathsf P_J)(I-\mathscr A_D)^{-1}\|_{2\to2}
 =\sup_{n+k>J}\frac1{1+A_D(n)+A_D(k)}\longrightarrow0.
\]
Thus $(I-\mathscr A_D)^{-1}$ is a norm limit of finite-rank maps.
The bounded perturbation transfers this compactness through
\[
 (L_D+I)^{-1}=(I-\mathscr A_D)^{-1}
 +(I-\mathscr A_D)^{-1}\mathscr V_{D,2}(L_D+I)^{-1}.
\]
Hence $L_D$ has compact resolvent.

For $Y\in\ker L_D$, every nonnegative paired form vanishes.
The paired identity \eqref{eq:E-paired-generator} extends from finite
matrices by the loss bound $C(N+I)^{d_*}$ and the form-core property
proved in Step~1. Since its integration weight is strictly positive,
each paired derivation vanishes. Each matrix entry involves finitely
many Bohr frequencies; extracting their coefficients, using the
nonvanishing filter and its KMS relation, gives
\[
 [X,a^d]=[X,(a^\dagger)^d]=0\quad(d\in D),\qquad
 X=\sigma_\beta^{-1/4}Y\sigma_\beta^{-1/4},
\]
entrywise. These products are defined entrywise because the jumps
are weighted shifts; no boundedness of $X$ is required. For any fixed
$d\in D$, $X$ therefore commutes with
$a^d(a^\dagger)^d=\prod_{j=1}^d(N+j)$, whose eigenvalues are strictly
increasing. Hence $X$ is diagonal, and its entries obey
$x_{n+d}=x_n$ for all $d\in D$. B\'ezout's identity writes $g=A-B$
with $A,B$ nonnegative integer combinations of elements of $D$, so
$x_n=x_{n+A}=x_{n+g+B}=x_{n+g}$. Thus $X$ is constant on residue
classes modulo $g$. Conversely, each $\sqrt{\sigma_\beta}P_r$ belongs
to the form domain by its Gibbs tail and is annihilated by all paired
derivations. This proves the stated kernel formula. When $g=1$,
compact resolvent and simplicity of zero give $\lambda_D>0$.

When $d_*=m$, the critical threshold in
\eqref{eq:mixed-threshold} is $\beta a_m/4$ at $p=4$.
When $d_*>m$, every $s>0$ is allowed.  Finally apply
\cref{thm:KMS-endpoint} with $q_0=4$ and norm bound
\eqref{eq:mixed-entropy-constant} to obtain
\eqref{eq:mixed-entropy-decay}.
\end{proof}

\paragraph{\emph{The full multimode Bose--Hubbard Hamiltonian.}}
With hopping, the filtered jumps are no longer explicit in the
occupation basis. Total particle number is still conserved, so we
can estimate the filters on finite-dimensional particle sectors.
The cubic losses dominate the errors in these estimates.

\begin{theorem}[Multimode Bose--Hubbard sampler with hopping]
\label{thm:multimode-BH}
Let $k\ge1$, $\beta>0$, and let $a_j,a_j^\dagger,N_j=a_j^\dagger a_j$
be the usual mode operators on $\ell^2(\N_0^k)$. Consider
\[
 H=\sum_{j=1}^k\left[\frac{U_j}{2}N_j(N_j-I)-\mu_jN_j\right]
 +\sum_{i<j}V_{ij}N_iN_j
 -\sum_{i<j}\left(J_{ij}a_i^\dagger a_j+
 \overline{J_{ij}}a_j^\dagger a_i\right),
\]
where $U_j>0$, $\mu_j\in\R$, $V_{ij}\ge0$, and $J_{ij}\in\C$ are
arbitrary. Put $\sigma_\beta=Z_\beta^{-1}e^{-\beta H}$, with
$Z_\beta=\Tr(e^{-\beta H})<\infty$. In each mode take the bare jumps
$a_j,a_j^\dagger,a_j^3,(a_j^\dagger)^3$, with filter
\[
 \fhat(\nu)=\exp\left[-\frac{\sqrt{1+(\beta\nu)^2}+\beta\nu}{4}\right].
\]
Writing $P_E$ for the spectral projections of $H$, define on finite
particle sectors
\[
 L^\alpha=\sum_{E,E'}\fhat(E'-E)P_{E'}A^\alpha P_E,\qquad
 C_H=\sum_\alpha(L^\alpha)^\dagger L^\alpha,
\]
and choose the correction Hamiltonian by
\[
 P_EBP_{E'}=\frac{i}{2}\tanh\!\left(\frac{\beta(E-E')}4\right)P_EC_HP_{E'}.
\]
The generator
\[
 \cL(\rho)=-i[B,\rho]+\sum_\alpha
 \left(L^\alpha\rho(L^\alpha)^\dagger
       -\frac12\{(L^\alpha)^\dagger L^\alpha,\rho\}\right)
\]
has a conservative KMS-symmetric realization with $\sigma_\beta$ as
its unique invariant state. Its nonnegative standard-form generator
$L$ has compact resolvent and a positive gap $\lambda$. If
$0=\lambda_1\le\lambda_2\le\cdots$ are its eigenvalues with multiplicity,
then, for the fixed model,
\[
 1+\lambda_j\asymp j^{3/(2k)},\qquad
 \Tr_{\mathscr T_2}(e^{-tL})\asymp t^{-2k/3}\quad(0<t\le1).
\]
Here the trace is on Hilbert--Schmidt space; the comparison constants
may depend on all model parameters. Its Heisenberg semigroup satisfies
\[
 \|T_t:L^2(\sigma_\beta)\to L^p(\sigma_\beta)\|<\infty
 \qquad(2<p<\infty,\ t>0).
\]
For any $s>0$, choose $C(s)\ge1$ bounding the $L^2$-to-$L^4$ norm
of $T_s$. Then every state $\rho$ satisfies
\[
 D(\Phi_t(\rho)\|\sigma_\beta)
 \le e^{-\alpha(s)t}D(\rho\|\sigma_\beta),\qquad
 \alpha(s)=\left(2s+\frac{2\log C(s)+1}{\lambda}\right)^{-1}>0.
\]
Alternatively, take the bare jumps $a_j,a_j^\dagger,a_j^2,(a_j^\dagger)^2$.
The resulting sampler is again conservative and primitive, with compact
resolvent and a positive gap. Let $W=(w_{ij})$ be real symmetric with
$w_{ii}=U_i$, $w_{ij}=V_{ij}$ for $i\ne j$, and put
$\Lambda=\lambda_{\max}(W)>0$. In this case a bounded
$L^2(\sigma_\beta)$-to-$L^p(\sigma_\beta)$ endpoint is available whenever
\[
 2<p<\infty,\qquad t>\frac{\beta\Lambda}{2}\left(\frac12-\frac1p\right).
\]
The same entropy estimate holds with its own gap and endpoint constant
at every $s>\beta\Lambda/8$. No assertion is made at the boundary time.
There is no occupation cutoff or smallness assumption on the hopping.
The constants may depend on all model parameters, including $k$.
\end{theorem}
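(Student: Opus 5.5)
The plan is to reduce the Bose--Hubbard sampler, sector by sector, to the structure exploited in \cref{thm:mixed-polynomial}: a diagonal loss operator of order $\mathcal N^{3}$ (respectively $\mathcal N^2$) in the total number $\mathcal N=\sum_jN_j$, plus a remainder that is bounded on every $L^r(\sigma_\beta)$, $2\le r\le p_*$. Since $H$ commutes with $\mathcal N$, every spectral projection $P_E$ lies inside a finite-dimensional particle sector $\mathcal H_n$. Hence each $L^\alpha$ maps $\mathcal H_n\to\mathcal H_{n\pm d}$, and the filtered jumps are finite matrices between sectors.

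On $\mathcal H_n$ the Hamiltonian is $H=Q(N)+O(n)$, where $Q(N)=\frac12\sum w_{ij}N_iN_j$. The hopping term has norm $O(n)$, and the linear terms are also $O(n)$. A lowering jump $a_j^d$ therefore changes the energy by at most $O(n)$ in absolute value. So on $\mathcal H_n$,
\[
 |E'-E|\le c\,n,\qquad \fhat(E'-E)\ge e^{-c'\beta n}\ \text{ for downward jumps}.
\]
This bound is crude, and its crudeness is exactly where the cubic jumps are needed. The downward loss $\sum_j(L^{a_j^3})^\dagger L^{a_j^3}$ should be bounded below on $\mathcal H_n$ by $c\sum_j N_j^3\gtrsim n^3/k^2$, up to the filter factor. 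To get this I would write $\fhat(\nu)=e^{-\beta\nu/4}\phi(\nu)$ with $\phi(\nu)=e^{-\sqrt{1+\beta^2\nu^2}/4}$. For downward jumps the factor $e^{-\beta\nu/4}\ge1$, so only $\phi$ can shrink the rate. I have to check that $\phi$ is not exponentially small: this needs $|\nu|=O(n)$, and then $\phi\ge e^{-cn}$ destroys polynomial growth.

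This is the step I expect to be the main obstacle. I will not use scalar filter bounds. Instead I will work with the Bohr-frequency decomposition of the energy change $a_jHa_j^\dagger-H$ through the commutator $[H,a_j^3]$, which is a polynomial of degree 2 in the ladder operators (hopping contributes degree 1). I then write $L^\alpha=\fhat(\mathrm{ad}_H)(A^\alpha)$ and treat $\mathrm{ad}_H(A^\alpha)$ as a relatively bounded quantity: on $\mathcal H_n$ it is $O(n)\cdot A^\alpha$. This gives the weaker estimate $\fhat\ge e^{-Cn}$, which is insufficient.

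The refinement I would try first is a commutator expansion with the filter. Write $H=H_0+H_1$, with $H_0$ the diagonal density part and $H_1$ the hopping part, where $\|H_1|_{\mathcal H_n}\|=O(n)$. The $H_0$-frequencies of $a_j^3$ are explicit, as in the one-mode case: they are $-3U_jN_j+O(\sum_iV_{ij}N_i)+O(1)$. For the term $e^{-\beta\nu/4}$ these frequencies are favourable, since they are negative. Hopping shifts them by $O(n)$, but $\phi$ depends on $|\nu|$ only through the factor $e^{-|\nu|/4}$, and the $e^{-\beta\nu/4}$ gain is of the same order.

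So the correct comparison is in the weighted (transported) form, as with $g_{d,\pm,p}$ in \cref{lem:single-complement}. In the transported $L^p$ generator the gains carry $\exp[-\sqrt{1+x^2}/4+(p-2)x/(4p)]$, which is uniformly bounded, so the gain terms stay bounded. The losses are summed Gibbs-weighted exactly as in the activity bound. For the diagonal lower bound I would prove on each $\mathcal H_n$ the form inequality
\[
 C_H\ge c\,\mathcal N^3\,\mathbf 1_{\mathcal N\ge n_0}-C',
\]
using the KMS pairing $|\fhat(\nu)|^2+e^{\beta\nu}|\fhat(\nu)|^2\ge 1$-type identities: $|\fhat(\nu)|^2+|\fhat(-\nu)|^2=e^{-\sqrt{1+\beta^2\nu^2}/2}(e^{-\beta\nu/2}+e^{\beta\nu/2})\ge c>0$ uniformly. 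Summing a jump with its adjoint thus gives a loss at least $c\,(A^{\bar\alpha}A^\alpha)$ restricted to downward-energy components. I would first test whether this uniform pairing bound already closes the argument, since it avoids any frequency localisation.

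Given the diagonal-versus-bounded split, the rest follows the proof of \cref{thm:mixed-polynomial}. The diagonal estimate \eqref{eq:mixed-diagonal-rq} uses
\[
 \beta\delta H-2c\mathcal N^3\to-\infty,
\]
because $H\le\frac{\Lambda}{2}\mathcal N^2+O(\mathcal N)$, where $\Lambda=\lambda_{\max}(W)$ enters through $Q(N)\le\frac\Lambda2|N|^2$. \Cref{thm:bounded-perturbation} then yields $L^2\to L^p$ bounds for every $t>0$. In the quadratic case, comparing $\beta\delta\frac\Lambda2 n^2$ against the loss $n^2$ gives the threshold $t>\frac{\beta\Lambda}2(\frac12-\frac1p)$. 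Compact resolvent follows as in Step 4 of that proof. The eigenvalue asymptotics $1+\lambda_j\asymp j^{3/(2k)}$ come from min–max comparison with the diagonal operator $\mathcal N_1^3+\mathcal N_2^3$ on matrix units, which counts pairs $(n,n')\in\N_0^{2k}$ with $|n|^3+|n'|^3\le\Lambda$, that is $\asymp\Lambda^{2k/3}$. The heat-trace asymptotics follow by the same count.

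Primitivity holds because $a_j$ and $a_j^\dagger$ generate an irreducible algebra (gcd 1), using the kernel argument via paired derivations, which extends since the filter is nonvanishing. The positive gap then follows from compactness. Finally, \cref{cor:generator-KMS-endpoint} with $q_0=4$ gives the stated rate $\alpha(s)$.
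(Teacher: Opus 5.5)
There is a genuine gap at the central step: you never prove the cubic lower bound $C_H\ge c(\mathsf N+I)^3-C$. You also never prove the quadratic-family bound $C_H\ge(1-\varepsilon)\sum_iN_i^2-o(\mathsf N^2)$, and there the sharp leading constant is what the threshold $\frac{\beta\Lambda}{2}(\frac12-\frac1p)$ requires. You correctly see that $|E'-E|=O(n)$ is useless. Hopping shifts frequencies by the same order as the $H_0$-frequency $-3\sum_jw_{ij}n_j+O(1)$ of $a_i^3$, and could a priori push weight to $\nu>0$, where $\fhat$ is exponentially small.

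The remedy you settle on, $|\fhat(\nu)|^2+|\fhat(-\nu)|^2\ge e^{-1/2}$, does not lift to operators. The lowering row gives roughly $\sum_\nu|\fhat(\nu)|^2(A_\nu)^\dagger A_\nu$, while the raising row gives roughly $\sum_\nu|\fhat(-\nu)|^2A_\nu(A_\nu)^\dagger$, and these are different operators. Moreover $L^\alpha$ is a coherent sum over Bohr frequencies, so $\|L^\alpha\psi\|^2$ contains cross terms and admits no frequency-by-frequency comparison.

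The missing idea is the paper's commutator identity. Take $D_i=b_iI+3\sum_jw_{ij}N_j$ with $b_i$ large. Then $[H_0,a_i^3]=-a_i^3(D_i-c_iI)$ for a scalar $c_i$. Hence $a_i^3=-[H,K]+R$ with $K=a_i^3D_i^{-1}$ and $\|KQ_n\|+\|RQ_n\|\le C\sqrt{n+1}$. Hopping enters only through $[T,K]$, which is lower order. Combine this with multiplier bounds for $\mathcal D_nX=H_{n-3}X-XH_n$: a Helffer--Sj\"ostrand argument on $\operatorname{Re}z\ge-1$, and dyadic $L^1$ bounds on inverse Fourier transforms of $g$ and $\nu g$. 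This yields $\|(L^-_{i,3}-a_i^3)Q_n\|+\|(L^+_{i,3})^\dagger Q_n\|\le C\sqrt{n+1}\log(n+2)$. So the cubic lowering jump really sits where $\fhat\approx1$, and the cubic bound on $C_H$ follows.

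Second, the reduction to \cref{thm:mixed-polynomial} via \cref{thm:bounded-perturbation} does not survive hopping. The drift $D=(2\cosh(\frac\beta4\operatorname{ad}_H))^{-1}(C_H)$ is neither a function of the occupations nor commuting with $\sigma_\beta$, so \eqref{eq:mixed-diagonal-rq} has no analogue. What can be proved about $D$ and the gains are relative bounds: $\operatorname{Re}D_\theta\ge c(\mathsf N+I)^3-C$ and $\|B_{\alpha,\theta}Q_n\|\le C\sqrt{n+1}$. These are not the uniform $L^r(\sigma_\beta)$-boundedness that the Trotter argument needs. Your claim that the gains are bounded because the scalar factor $\exp[-\sqrt{1+x^2}/4+(p-2)x/(4p)]$ is bounded confuses a bounded multiplier $F(\operatorname{ad}_H)$ with a bounded output, since $\|a_i^3Q_n\|\sim n^{3/2}$.

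The paper instead proves weighted coercivity, $\operatorname{Re}\langle X,W_\theta LW_\theta^{-1}X\rangle\ge c_*\langle X,\mathcal K_3X\rangle-C_*\|X\|_2^2$, for $W_\theta X=e^{\theta\widetilde H}Xe^{\theta\widetilde H}$ and $0\le\theta\le\theta_*<\beta/4$. The bound on $C_H$ is transferred to $\operatorname{Re}D_\theta$ through the positive kernel $m_\theta$. The paper then integrates a differential inequality for $\|W_{\theta(t)}e^{-tL_M}\Pi_MX\|_2$ on finite-particle compressions and reaches $L^p$ via $\|\cdot\|_p\le\|\cdot\|_2$.

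Your compactness, eigenvalue-counting and kernel arguments are fine in outline. But they rest on $c\mathcal K_3-C\le L\le C\mathcal K_3+C$, hence on the missing lower bound. Uniqueness among invariant states of infinite relative entropy also needs the separate approximation-plus-Pinsker step.
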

\begin{proof}
The main estimate is that the cubic losses grow like the cube of
the total particle number, whereas the weighted gains grow only
linearly. We first prove these bounds on particle sectors, then use
them to control the full semigroup.

\emph{Step 1:}
Write $H=H_0+T$, where $H_0$ contains the number-diagonal terms and
$T$ is the hopping term. Set $\mathsf N=\sum_jN_j$, let
$Q_n=\mathbf1_{\{n\}}(\mathsf N)$, and put
$\cH^{[n]}=Q_n\cH$. Each term of $H$ preserves $\cH^{[n]}$,
whose dimension is $\binom{n+k-1}{k-1}$. Let $h_{\rm hop}$ be the Hermitian $k\times k$ matrix with
zero diagonal and off-diagonal entries $-J_{ij}$ and
$-\overline{J_{ij}}$. On the
symmetric $n$-particle space,
\[
 T|_{\cH^{[n]}}=\sum_{\ell=1}^n(h_{\rm hop})_\ell,
 \qquad \|TQ_n\|\le n\|h_{\rm hop}\|.
\]
If $U_{\min}=\min_jU_j>0$, the number-diagonal part satisfies
\[
 H_0\ge\frac{U_{\min}}2\sum_jN_j^2-C\mathsf N,
 \qquad H_0\le C\left(\sum_jN_j^2+\mathsf N+I\right).
\]
Moreover,
\[
 \sum_jN_j^2\ge k^{-1}\mathsf N^2,\qquad
 N_iN_j\le\tfrac12(N_i^2+N_j^2).
\]
The positive onsite coefficients and nonnegative density couplings
therefore give, after a scalar shift,
\[
 c(\mathsf N+I)^2\le\widetilde H:=H+c_0I
 \le C(\mathsf N+I)^2,\qquad
 \Tr(e^{-\beta\widetilde H})
 \le\sum_{n\ge0}\binom{n+k-1}{k-1}e^{-\beta c(n+1)^2}<\infty.
\]
The inequalities hold separately on every particle sector. For a lowering or raising jump of order $d\in\{1,3\}$,
\[
 \|a_i^dQ_n\|^2\le n(n-1)\cdots(n-d+1),\qquad
 \|(a_i^\dagger)^dQ_n\|^2\le(n+1)\cdots(n+d),
\]
where the first expression is zero for $n<d$. Each jump changes
$n$ by at most three. Combining these estimates with the lower and
upper bounds on $\widetilde H$ gives
\[
 \|A^\alpha\widetilde H^{-3/4}\|+
 \|\widetilde H^{3/4}A^\alpha\widetilde H^{-3/2}\|<\infty.
\]
The jumps are closed on their respective maximal domains and share
the finite particle span as a spectral core. These are the common
energy bounds required in
\cite[Condition~A and Proposition~2.4]{BeckerRouzeSalzmannToAppearcmp}.
The filter satisfies its KMS and boundedness conditions, so the form
construction gives a conservative KMS-symmetric sampler.
Every spectral sum defining a jump between fixed particle sectors
is finite. Below, constants are independent of the sector index $n$.

\emph{Step 2:}
Extend $V_{ij}$ symmetrically and set
$w_{ii}=U_i$, $w_{ij}=V_{ij}$ for $i\ne j$. Fix a mode $i$ and put
\[
 A=a_i^3,\qquad D_i=b_iI+3\sum_jw_{ij}N_j,\qquad
 c_i=b_i+6U_i+3\mu_i,
\]
where $b_i$ is sufficiently large that $D_i\ge4I$ and its values
at occupations differing by one hop are comparable. Write $\varepsilon(\mathbf n)$ for the value of $H_0$ at the
occupation vector $\mathbf n$. For $n_i\ge3$,
\begin{align*}
 \varepsilon(\mathbf n-3e_i)-\varepsilon(\mathbf n)
 &=\frac{U_i}{2}\bigl[(n_i-3)(n_i-4)-n_i(n_i-1)\bigr]
   +3\mu_i-3\sum_{j\ne i}V_{ij}n_j\\
 &=-3\sum_jw_{ij}n_j+6U_i+3\mu_i
 =-D_i(\mathbf n)+c_i.
\end{align*}
Applying this identity to each occupation vector gives
\[
 [H_0,A]=-A(D_i-c_iI).
\]
For $\operatorname{Re}z\ge-1$, put $K_z=A(z+D_i)^{-1}$.
Since $D_i(\mathbf n)\ge c(n_i+1)$ and
$|z+D_i(\mathbf n)|\ge D_i(\mathbf n)-1\ge\tfrac34D_i(\mathbf n)$,
\[
 \|K_zQ_n\|
 =\max_{|\mathbf n|=n}
 \frac{\sqrt{n_i(n_i-1)(n_i-2)}}{|z+D_i(\mathbf n)|}
 \le C\sqrt{n+1}.
\]
For a hopping term $T_{rs}=a_r^\dagger a_s$, $r\ne s$, put
$R_i(z)=(z+D_i)^{-1}$ and $\Delta_{rs}=3(w_{ir}-w_{is})$.
The product rule gives
\[
 [T_{rs},K_z]=[T_{rs},A]R_i(z)+A[T_{rs},R_i(z)],
 \qquad [T_{rs},A]=-3\mathbf1_{\{r=i\}}a_i^2a_s.
\]
The first term has norm at most $C\sqrt{n+1}$ on $\cH^{[n]}$:
its coefficient is bounded by
$C(n_i+1)\sqrt{n_s}/D_i(\mathbf n)$ when $r=i$.
For the second term, evaluate the commutator on $|\mathbf n\rangle$:
\begin{align*}
 [T_{rs},R_i(z)]|\mathbf n\rangle
 &=\sqrt{(n_r+1)n_s}
 \left(\frac1{z+D_i(\mathbf n)}
       -\frac1{z+D_i(\mathbf n)+\Delta_{rs}}\right)
 |\mathbf n+e_r-e_s\rangle\\
 &=\frac{\Delta_{rs}\sqrt{(n_r+1)n_s}}
 {(z+D_i(\mathbf n))(z+D_i(\mathbf n)+\Delta_{rs})}
 |\mathbf n+e_r-e_s\rangle.
\end{align*}
The choice of $b_i$ bounds the modulus of each denominator below by
$cD_i(\mathbf n)$. After applying $A$, the coefficient is bounded by
\[
 C\frac{(n_i+1)^{3/2}|w_{ir}-w_{is}|\sqrt{(n_r+1)n_s}}
 {(b_i+3\sum_jw_{ij}n_j)^2}\le C\sqrt{n+1}.
\]
Indeed, nonnegativity of the $w_{ij}$ and $w_{ii}>0$ bounds the first
factor $(n_i+1)^{3/2}$ by $CD_i(\mathbf n)^{3/2}$ and gives
\[
 |w_{ir}-w_{is}|\sqrt{(n_r+1)n_s}
 \le C D_i(\mathbf n)^{1/2}\sqrt{n+1}.
\]
Each term is a weighted occupation shift, so a bound on its
coefficients is an operator-norm bound. Summing the finitely many hopping terms therefore proves
\[
 \|K_zQ_n\|+\|[T,K_z]Q_n\|\le C\sqrt{n+1},
 \qquad \operatorname{Re}z\ge-1.
\]

On maps from $\cH^{[n]}$ to $\cH^{[n-3]}$, let
$\mathcal D_nX=H_{n-3}X-XH_n$, where $H_n=H|_{\cH^{[n]}}$.
Since $H_0$ commutes with $D_i$, we have
\begin{align*}
 [H,K_z]
 &=-A(D_i-c_iI)(z+D_i)^{-1}+[T,K_z]\\
 &=-A+zK_z+c_iK_z+[T,K_z].
\end{align*}
Thus
\[
 (z-\mathcal D_n)K_zQ_n=AQ_n-R_zQ_n,\qquad
 R_z=c_iK_z+[T,K_z],\qquad \|R_zQ_n\|\le C\sqrt{n+1}.
\]
In particular, with $K=AD_i^{-1}$ and $R=c_iK+[T,K]$,
\[
 A=-[H,K]+R,\qquad
 \|KQ_n\|+\|RQ_n\|\le C\sqrt{n+1}.
\]
The finitely many sectors $n<3$ contribute zero to $A$.

\emph{Step 3:}
We use the resolvent identity to estimate the energy filter on
each half of the frequency axis. The group
$e^{it\mathcal D_n}X=e^{itH_{n-3}}Xe^{-itH_n}$ consists of
operator-norm isometries. Hence
\[
 \|(z-\mathcal D_n)^{-1}\|_{\infty\to\infty}
 \le|\operatorname{Im}z|^{-1},\qquad
 \|g(\mathcal D_n)\|_{\infty\to\infty}
 \le\|\check g\|_{L^1(\R)},
\]
where $g(\nu)=\int_\R\check g(t)e^{it\nu}\dd t$ in the second
inequality. Also $\operatorname{spec}(\mathcal D_n)\subset[-M_n,M_n]$,
where $M_n=C(n+1)^2$.

Choose a smooth $\chi$ equal to zero on $(-\infty,-1]$ and to
one on $[0,\infty)$. Restricted to this spectral interval, it is a
sum of $O(1+\log(n+2))$ smooth functions $\chi_\ell$ supported
in $[-1,\infty)$ on dyadic intervals of length $2^\ell$, with
$|\chi_\ell^{(r)}|\le C_r2^{-\ell r}$. Each has an almost analytic
extension $\widetilde\chi_\ell$ with
\[
 \int_{\C}\frac{|\partial_{\bar z}\widetilde\chi_\ell(z)|}
 {|\operatorname{Im}z|}\dd x\dd y\le C,
 \qquad\operatorname{supp}\widetilde\chi_\ell
 \subset\{\operatorname{Re}z\ge-1\}.
\]
For example, use the first-order extension
$(\chi_\ell(x)+iy\chi_\ell'(x))\eta(y/2^\ell)$ with a fixed
smooth cutoff $\eta$. On $|y|\lesssim2^\ell$, its derivative obeys
\[
 |\partial_{\bar z}\widetilde\chi_\ell(x+iy)|
 \le C|y|2^{-2\ell}
 +C2^{-\ell}\mathbf1_{\{|y|\asymp2^\ell\}},
\]
on an $x$-interval of length $O(2^\ell)$. Dividing by $|y|$ and
integrating therefore gives a constant independent of $\ell$.
The Cauchy--Green, or Helffer--Sj\"ostrand, formula
\cite[Section~2]{GesztesyNichols2015} and Step~2 give
\begin{align*}
 \chi_\ell(\mathcal D_n)AQ_n
 &=A\chi_\ell(-D_i)Q_n
 -\frac1\pi\int_{\C}\partial_{\bar z}\widetilde\chi_\ell(z)
 (z-\mathcal D_n)^{-1}R_zQ_n\dd x\dd y,\\
 \|\chi_\ell(\mathcal D_n)AQ_n\|&\le C\sqrt{n+1}.
\end{align*}
The first term vanishes because $D_i\ge4I$. The formula can also
be checked entry by entry in the two finite energy eigenbases;
the resolvent estimate makes its bound independent of their dimensions.
Summing the dyadic pieces yields
\[
 \|\chi(\mathcal D_n)AQ_n\|
 \le C\sqrt{n+1}\log(n+2).
\]

Write $f=\fhat$ in this proof and set
$g_-(\nu)=f(\nu)-1+\chi(\nu)$ and
$g_+(\nu)=f(-\nu)-\chi(\nu)$. For every integer $r\ge0$,
\[
 |g_\pm^{(r)}(\nu)|\le C_r(1+|\nu|)^{-1-r}.
\]
A dyadic partition and the Fourier estimate above imply
\[
 \|g_\pm(\mathcal D_n)\|_{\infty\to\infty}\le C,
 \qquad
 \|(\nu g_\pm)(\mathcal D_n)\|_{\infty\to\infty}
 \le C\log(n+2).
\]
Here is the dyadic estimate. For a smooth $u$ supported in a fixed
bounded interval, put $u_\ell(\nu)=u(2^{-\ell}\nu)$.
Plancherel, Cauchy--Schwarz, and a change of variables give
\[
 \|\check u\|_1\le C(\|u\|_2+\|u'\|_2),\qquad
 \check u_\ell(t)=2^\ell\check u(2^\ell t),\qquad
 \|\check u_\ell\|_1=\|\check u\|_1.
\]
The rescaled pieces of $g_\pm$ have size $O(2^{-\ell})$, while
those of $\nu g_\pm$ have size $O(1)$. If $L_n$ is the number of
dyadic scales meeting $[-M_n,M_n]$, the two bounds follow from
\[
 \sum_{\ell=0}^{L_n}C2^{-\ell}\le2C,
 \qquad \sum_{\ell=0}^{L_n}C\le C\log(n+2).
\]
Apply the multiplier to $A=-\mathcal D_nK+R$ on this block:
\begin{align*}
 g_\pm(\mathcal D_n)AQ_n
 &=-(\nu g_\pm)(\mathcal D_n)KQ_n+g_\pm(\mathcal D_n)RQ_n,\\
 \|g_\pm(\mathcal D_n)AQ_n\|
 &\le C\log(n+2)\|KQ_n\|+C\|RQ_n\|.
\end{align*}
Step~2 now gives
\[
 \|g_\pm(\mathcal D_n)AQ_n\|\le C\sqrt{n+1}\log(n+2).
\]
Thus, denoting the filtered cubic lowering and raising jumps by
$L_{i,3}^-$ and $L_{i,3}^+$,
\[
 \|(L_{i,3}^--a_i^3)Q_n\|
 +\|(L_{i,3}^+)^\dagger Q_n\|
 \le C\sqrt{n+1}\log(n+2).
\]
The second term uses
$(L_{i,3}^+)^\dagger Q_n=f(-\mathcal D_n)AQ_n$.
Taking adjoints and shifting $n$ by three bounds $L_{i,3}^+Q_n$
as well. The same dyadic Fourier argument for $f$ itself gives
$\|L_{i,1}^{\pm}Q_n\|\le C\sqrt{n+1}\log(n+2)$ for the linear
jumps. Consequently, for $\psi\in\cH^{[n]}$,
\begin{align*}
 \langle\psi,C_H\psi\rangle
 &\ge\frac12\sum_i\|a_i^3\psi\|^2
       -C(n+1)\log^2(n+2)\|\psi\|^2,\\
 \sum_i(a_i^\dagger)^3a_i^3
 &=\sum_iN_i(N_i-I)(N_i-2I)
 \ge k^{-2}\mathsf N^3-C(\mathsf N+I)^2.
\end{align*}
The last lower bound uses
$\sum_i n_i^3\ge k^{-2}(\sum_i n_i)^3$ and
$n_i(n_i-1)(n_i-2)=n_i^3-3n_i^2+2n_i$.
For every $\varepsilon>0$,
\[
 (n+1)^2+(n+1)\log^2(n+2)
 \le\varepsilon(n+1)^3+C_\varepsilon.
\]
This absorbs the error into the cubic term. For the upper bound,
the filtered lowering jumps have norm at most $C(n+1)^{3/2}$,
and the remaining jumps satisfy the smaller bounds above. Thus
\[
 c(\mathsf N+I)^3-CI\le C_H\le C(\mathsf N+I)^3.
\]

\emph{Step 4:}
For a bare jump $A^\alpha$, its standard-form gain operator is
\[
 B_\alpha=e^{\beta H/4}L^\alpha e^{-\beta H/4}
 =F_0(\operatorname{ad}_H)A^\alpha,\qquad
 F_\theta(\nu)=\exp\!\left[-\frac{\sqrt{1+(\beta\nu)^2}}4+\theta\nu\right].
\]
Here $\operatorname{ad}_H(X)=[H,X]$ is used only on the finite
particle blocks. Fix $0\le\theta_*<\beta/4$.
Both $F_\theta$ and $\nu F_\theta$ have inverse Fourier transforms
with uniformly bounded $L^1$ norms for $|\theta|\le\theta_*$.
For a cubic lowering jump, the identity in Step~2 gives
\begin{align*}
 B_{\alpha,\theta}Q_n
 &:=(e^{\theta H}B_\alpha e^{-\theta H})Q_n
 =F_\theta(\mathcal D_n)AQ_n=-(\nu F_\theta)(\mathcal D_n)KQ_n
   +F_\theta(\mathcal D_n)RQ_n.
\end{align*}
The Fourier bounds therefore imply
\[
 \sup_{|\theta|\le\theta_*}\|B_{\alpha,\theta}Q_n\|
 \le C_{\theta_*}(\|KQ_n\|+\|RQ_n\|)
 \le C_{\theta_*}\sqrt{n+1}.
\]
For cubic raising jumps this follows by adjunction and
$F_\theta(-\nu)=F_{-\theta}(\nu)$; for linear jumps it follows
directly from their $C\sqrt{n+1}$ bare bound.

The correction $B$ in the statement is essential when hopping is
present. Transporting the full generator, including this correction,
gives on finite particle matrices
\[
 LX=DX+XD-\sum_\alpha B_\alpha X B_\alpha^\dagger,
 \qquad
 D=\frac1{2\cosh((\beta/4)\operatorname{ad}_H)}(C_H).
\]
To check the drift and its sign, set
$M=-iB-C_H/2$ and $x=\beta(E-E')/4$. The definition of $B$ gives
\[
 P_EMP_{E'}=-\frac{1-\tanh x}{2}P_EC_HP_{E'}
 =-\frac{e^{-x}}{2\cosh x}P_EC_HP_{E'}.
\]
Conjugating by $e^{\beta H/4}$ multiplies this block by $e^x$.
Hence $e^{\beta H/4}Me^{-\beta H/4}=-D$, and the two drift terms
in the nonnegative standard-form generator are $DX+XD$.
For $D_\theta=e^{\theta H}De^{-\theta H}$, its Hermitian part is
\[
 \operatorname{Re}D_\theta
 =\frac{\cosh(\theta\operatorname{ad}_H)}
 {2\cosh((\beta/4)\operatorname{ad}_H)}(C_H)
 =\int_\R m_\theta(t)e^{itH}C_He^{-itH}\dd t,
\]
where an elementary Fourier transform gives
\[
 m_\theta(t)=
 \frac{\cos(2\pi\theta/\beta)\cosh(2\pi t/\beta)}
 {\beta\,[\sinh^2(2\pi t/\beta)+\cos^2(2\pi\theta/\beta)]}
 \ge0,\qquad \int_\R m_\theta(t)\dd t=\frac12.
\]
Since $H$ commutes with $\mathsf N$, Step~3 implies, uniformly for
$|\theta|\le\theta_*$,
\[
 \operatorname{Re}D_\theta\ge c(\mathsf N+I)^3-CI,
 \qquad \|D_\theta Q_n\|\le C_{\theta_*}(n+1)^3.
\]
For the norm estimate, use the integrable inverse Fourier transform
of $e^{\theta\nu}/(2\cosh(\beta\nu/4))$ and the upper bound on $C_H$.

\emph{Step 5:}
On Hilbert--Schmidt space set
\[
 \mathcal K_rX=(\mathsf N+I)^rX+X(\mathsf N+I)^r
 \text{ for }r=1,2,3,\qquad
 W_\theta X=e^{\theta\widetilde H}Xe^{\theta\widetilde H}.
\]
The loss operator $X\mapsto DX+XD$ is a self-adjoint direct sum
on the blocks $Q_nXQ_m$, with domain $D(\mathcal K_3)$ and finite
particle matrices as an operator core. The gain operator and its
conjugates are lower order. In fact, each bare row has a fixed
particle-number change $d_\alpha\in\{-3,-1,1,3\}$, and
\[
 W_\theta\bigl(B_\alpha X B_\alpha^\dagger\bigr)
 =B_{\alpha,\theta}(W_\theta X)B_{\alpha,\theta}^\dagger.
\]
Write $X_{nm}=Q_nXQ_m$, with a block understood to be zero if
an index is negative. For one gain term,
\[
 \|B_{\alpha,\theta}X_{nm}B_{\alpha,\theta}^\dagger\|_2
 \le C\sqrt{(n+1)(m+1)}\|X_{nm}\|_2,
\]
and its output lies in block $(n+d_\alpha,m+d_\alpha)$.
For fixed $\alpha$, distinct input blocks have orthogonal outputs.
Squaring and summing gives
\[
 \|B_{\alpha,\theta}XB_{\alpha,\theta}^\dagger\|_2^2
 \le C\sum_{n,m}(n+1)(m+1)\|X_{nm}\|_2^2
 \le C\|\mathcal K_1X\|_2^2.
\]
For the form bound, put $x_{nm}=\|X_{nm}\|_2$ and use
\begin{align*}
 |\langle X,B_{\alpha,\theta}XB_{\alpha,\theta}^\dagger\rangle|
 &\le C\sum_{n,m}\sqrt{(n+1)(m+1)}\,
 x_{nm}x_{n+d_\alpha,m+d_\alpha}\\
 &\le C\sum_{n,m}(n+m+2)x_{nm}^2
 =C\langle X,\mathcal K_1X\rangle.
\end{align*}
The second inequality uses $2ab\le a^2+b^2$ and the comparability
of the weights before and after a shift of at most three.
Summing over $\alpha$ proves
\begin{align*}
 \left\|\sum_\alpha B_{\alpha,\theta}X
 B_{\alpha,\theta}^\dagger\right\|_2
 &\le C_{\theta_*}\|\mathcal K_1X\|_2,\\
 \left|\sum_\alpha\langle X,B_{\alpha,\theta}X
 B_{\alpha,\theta}^\dagger\rangle\right|
 &\le C_{\theta_*}\langle X,\mathcal K_1X\rangle.
\end{align*}
At $\theta=0$ the gain is symmetric because each $B_\alpha$ is
paired with its adjoint. For every $\varepsilon>0$, the scalar
inequality $r\le\varepsilon r^3+C_\varepsilon$ gives
\[
 \|\mathcal K_1X\|_2
 \le\varepsilon\|\mathcal K_3X\|_2+C_\varepsilon\|X\|_2.
\]
Thus the gain has relative operator bound zero with respect to
the loss operator. The self-adjoint
perturbation theorem \cite{Schmdgen2012} therefore identifies the closure on finite
particle matrices with the standard-form generator from Step~1;
the Gibbs vector belongs to its domain by the Gaussian particle-number
tail. Its form domain is $D(\mathcal K_3^{1/2})$. For the weighted form, Step~4 first gives
\begin{align*}
 \operatorname{Re}\langle X,W_\theta LW_\theta^{-1}X\rangle
 &\ge c\langle X,\mathcal K_3X\rangle
       -C\langle X,\mathcal K_1X\rangle-C\|X\|_2^2,\\
 \langle X,\mathcal K_1X\rangle
 &\le\varepsilon\langle X,\mathcal K_3X\rangle
       +C_\varepsilon\|X\|_2^2.
\end{align*}
Choose $\varepsilon$ so that $C\varepsilon<c/2$. We obtain
\[
 \operatorname{Re}\langle X,W_\theta LW_\theta^{-1}X\rangle
 \ge c_*\langle X,\mathcal K_3X\rangle-C_*\|X\|_2^2,
 \qquad 0\le\theta\le\theta_*<\beta/4,
\]
on the finite particle core. The constants are uniform in $\theta$
on this interval.

\emph{Step 6: smoothing at every positive time.}
Fix $s>0$ and $2<p<\infty$, and choose
\[
 \delta_p=\frac12-\frac1p,\qquad
 \theta_*=\frac{\beta\delta_p}{2}<\frac\beta4,\qquad
 \theta(t)=\frac{\theta_*t}{s}\text{ for }0\le t\le s.
\]
Let $\Pi_MX=Q_{\le M}XQ_{\le M}$, with
$Q_{\le M}=\sum_{n=0}^MQ_n$, and let $L_M$ be the form compression
of $L$ to its range. These projections commute with $W_\theta$.
For $U_M(t)=W_{\theta(t)}e^{-tL_M}\Pi_MX$, differentiation
in the finite-dimensional range gives
\[
 U_M'(t)=\frac{\theta_*}{s}
 (\widetilde H U_M(t)+U_M(t)\widetilde H)
 -W_{\theta(t)}L_MW_{\theta(t)}^{-1}U_M(t).
\]
The compression keeps every loss term: it is $\Pi_ML\Pi_M$ on
its range, rather than a new sampler with truncated jumps.
Since $\Pi_M$ commutes with the weight, Step~5 gives
\begin{align*}
 \frac12\frac{\dd}{\dd t}\|U_M(t)\|_2^2
 &\le\frac{\theta_*}{s}
 \langle U_M,\widetilde H U_M+U_M\widetilde H\rangle
 -c_*\langle U_M,\mathcal K_3U_M\rangle+C_*\|U_M\|_2^2\le b_s\|U_M(t)\|_2^2,
\end{align*}
where one may take
\[
 b_s=C_*+2\sup_{r\ge1}
 \left\{\frac{C\theta_*}{s}r^2-c_*r^3\right\}<\infty.
\]
Here $\widetilde H\le C(\mathsf N+I)^2$ is from Step~1.
Integration gives $\|U_M(s)\|_2\le e^{sb_s}\|X\|_2$,
uniformly in $M$. To remove $M$, set $u=(L+I)^{-1}X$ and
$u_M=(L_M+I)^{-1}\Pi_MX$. The variational equations give
\[
 \cE(u-u_M,v)+\langle u-u_M,v\rangle=0
 \quad(v\in\operatorname{Ran}\Pi_M),\qquad
 \|u-u_M\|_{\cE}
 \le\inf_{v\in\operatorname{Ran}\Pi_M}\|u-v\|_{\cE}\longrightarrow0.
\]
Here $\cE(\cdot,\cdot)$ is the sesquilinear form associated with
$\cE$. The limit follows because finite particle matrices are a form
core. Strong resolvent convergence then gives
$e^{-sL_M}\Pi_MX\to e^{-sL}X$ in Hilbert--Schmidt norm
\cite[Theorem~III.4.8]{engel1999one}.
Choose a joint eigenbasis $(\psi_a)_a$ of $\widetilde H$ and
$\mathsf N$, with $\widetilde H\psi_a=\widetilde E_a\psi_a$.
If $y_{ab}^{(M)}$ and $y_{ab}$ are the corresponding coefficients
of these two outputs, Fatou's lemma gives
\[
 \sum_{a,b}e^{2\theta_*(\widetilde E_a+\widetilde E_b)}|y_{ab}|^2
 \le\liminf_M\sum_{a,b}
 e^{2\theta_*(\widetilde E_a+\widetilde E_b)}|y_{ab}^{(M)}|^2
 \le e^{2sb_s}\|X\|_2^2.
\]
In particular, the weighted output is Hilbert--Schmidt and
\[
 \|e^{\theta_*\widetilde H}e^{-sL}X
 e^{\theta_*\widetilde H}\|_2\le e^{sb_s}\|X\|_2.
\]
The bound therefore holds for the full semigroup. Put $\widetilde Z=\Tr(e^{-\beta\widetilde H})$.
Since $\|Y\|_p\le\|Y\|_2$ for $p\ge2$,
\[
 \|e^{-sL}X\|_{p\leftarrow2,\sigma_\beta}
 =\widetilde Z^{\delta_p}
 \|e^{\theta_*\widetilde H}e^{-sL}X
 e^{\theta_*\widetilde H}\|_p
 \le\widetilde Z^{\delta_p}e^{sb_s}\|X\|_2.
\]
This proves the required bounds for every $s>0$ and finite $p>2$.

\emph{Step 7: the kernel and the gap.}
Step~5 gives a quantitative tail estimate:
\[
 \|(I-\Pi_M)Y\|_2^2
 \le\frac{\langle Y,\mathcal K_3Y\rangle}{(M+2)^3}
 \le\frac{\cE(Y)+C_*\|Y\|_2^2}{c_*(M+2)^3}.
\]
Every $\Pi_M$ has finite rank. Thus the form unit ball is relatively
compact in Hilbert--Schmidt space, and $L$ has compact resolvent.
Moreover, the upper loss and gain bounds in Steps~4--5 give the
matching upper form bound, so
\[
 c\mathcal K_3-CI\le L\le C\mathcal K_3+CI.
\]
There are $\binom{n+k}{k}\asymp n^k$ occupation states with at most
$n$ particles. Counting pairs of such states gives
$\#\{\text{eigenvalues of }\mathcal K_3\le E\}\asymp E^{2k/3}$.
The min--max principle yields $1+\lambda_j\asymp j^{3/(2k)}$;
summing $e^{-t\lambda_j}$ gives the asserted heat-trace bounds.
These are matching powers, not exact Weyl constants.

To identify its kernel, put
$J_\alpha(t)=e^{itH}L^\alpha e^{-itH}$. The paired Dirichlet-form
identity \eqref{eq:E-paired-generator}, at $\sigma_E=\infty$, reads
\[
 \cE(Y)=\sum_\alpha\int_\R
 \frac{\|J_\alpha(t)Y-YJ_{\bar\alpha}(t)^\dagger\|_2^2}
 {\beta\cosh(2\pi t/\beta)}\dd t.
\]
We first justify the identity for complex finite particle matrices.
For a self-adjoint $Y$, the difference between the two sides is a
real quadratic polynomial in its matrix entries. It vanishes on
the positive definite matrices by homogeneity and \cref{lem:paired-generator},
and hence on all self-adjoint matrices. One may work in a particle
cutoff larger than the support of $Y$ by three, so all jump images
and loss terms are retained. For complex $Y=Y_1+iY_2$, with
$Y_j=Y_j^\dagger$, put
$\mathcal D_\alpha(t)Y=J_\alpha(t)Y-YJ_{\bar\alpha}(t)^\dagger$.
Adjoint pairing gives
\[
 \mathcal D_\alpha(t)(Y^\dagger)
 =-\bigl(\mathcal D_{\bar\alpha}(t)Y\bigr)^\dagger,
 \qquad
 \cE(Y)=\cE(Y_1)+\cE(Y_2).
\]
The mixed terms on the right also cancel after summing over
$\alpha$. This proves the identity on the complex finite particle core.

The growth bounds from Step~3 give, uniformly in $t$,
\[
 \sum_\alpha\|\mathcal D_\alpha(t)Y\|_2^2
 \le C\langle Y,\mathcal K_3Y\rangle.
\]
Both sides therefore extend continuously to
$D(\mathcal K_3^{1/2})$. Approximation by finite particle matrices
also shows that each $\mathcal D_\alpha(t)Y$ is continuous in $t$.
The integration weight is strictly positive. Hence
\[
 Y\in\ker L\quad\Longrightarrow\quad
 \mathcal D_\alpha(t)Y=0
 \quad\text{for every }\alpha\text{ and }t\in\R.
\]
On each block $Q_n(\cdot)Q_m$, this is a finite trigonometric sum,
because a jump has a fixed particle-number change. Indeed, the finite frequency expansion on such a block is
\[
 \mathcal D_\alpha(t)Y
 =\sum_\nu e^{it\nu}
 \left[f(\nu)A_\nu^\alpha Y-f(-\nu)YA_\nu^\alpha\right].
\]
Distinct exponentials are linearly independent. Since
$f(-\nu)=e^{\beta\nu/2}f(\nu)$ and $f(\nu)>0$, each coefficient
vanishes and
\[
 A_\nu^\alpha Y=e^{\beta\nu/2}YA_\nu^\alpha.
\]
Define $X=\sigma_\beta^{-1/4}Y\sigma_\beta^{-1/4}$ blockwise;
these are finite matrices, so no boundedness of $X$ is assumed.
The modular relation gives, blockwise,
\[
 A_\nu^\alpha\sigma_\beta^{1/4}
 =e^{\beta\nu/4}\sigma_\beta^{1/4}A_\nu^\alpha,
 \qquad
 \sigma_\beta^{1/4}A_\nu^\alpha
 =e^{-\beta\nu/4}A_\nu^\alpha\sigma_\beta^{1/4}.
\]
Substituting $Y=\sigma_\beta^{1/4}X\sigma_\beta^{1/4}$ into the
preceding identity and cancelling the finite invertible weights
therefore gives $[A_\nu^\alpha,X]=0$. Summing the finitely many frequencies in each
particle block, the linear rows yield
\[
 [a_j,X]=[a_j^\dagger,X]=0,\qquad [N_j,X]=0\quad(1\le j\le k).
\]
Writing $X_{\mathbf n\mathbf m}=\langle\mathbf n|X|\mathbf m\rangle$,
the number commutators give
\[
 (n_j-m_j)X_{\mathbf n\mathbf m}=0\quad\text{for all }j.
\]
Thus $X$ is diagonal. If $x_{\mathbf n}=X_{\mathbf n\mathbf n}$,
the creation commutators give
\[
 0=\langle\mathbf n+e_j|[a_j^\dagger,X]|\mathbf n\rangle
 =\sqrt{n_j+1}\,(x_{\mathbf n}-x_{\mathbf n+e_j}).
\]
Every occupation vector is connected to the vacuum by these steps.
Hence $X$ is scalar and $\ker L=\C\sqrt{\sigma_\beta}$.
Compact resolvent now gives a positive gap $\lambda$.

\emph{Step 8: entropy decay and uniqueness among states.}
Apply \cref{thm:KMS-endpoint} with $p=4$ and
$C(s)=\max\{1,\widetilde Z^{1/4}e^{sb_s}\}$ to obtain the claimed
relative-entropy estimate. To check uniqueness also for invariant
states of infinite relative entropy, approximate any state $\tau$ in
trace norm by its normalized finite particle compressions $\rho_M$.
These have finite relative entropy because each particle cutoff
has finite dimension and the Gibbs state is faithful there. Pinsker's
inequality and the entropy estimate give
\[
 \|\Phi_t(\rho_M)-\sigma_\beta\|_1^2
 \le2D(\Phi_t(\rho_M)\|\sigma_\beta)
 \le2e^{-\alpha(s)t}D(\rho_M\|\sigma_\beta)
 \longrightarrow0.
\] If $\Phi_t(\tau)=\tau$, trace-norm contractivity gives
\[
 \|\tau-\sigma_\beta\|_1
 \le\|\tau-\rho_M\|_1+\|\Phi_t(\rho_M)-\sigma_\beta\|_1.
\]
Letting $t\to\infty$ and then $M\to\infty$ proves $\tau=\sigma_\beta$.

\emph{Step 9: the quadratic-jump alternative.}
We record the changes needed when each cubic pair is replaced by a
quadratic pair. Step~1 applies with common energy exponents $1/2$ and
$1$. In Step~2 take
\[
 A=a_i^2,\qquad D_i=b_iI+2\sum_jw_{ij}N_j,\qquad
 c_i=b_i+3U_i+2\mu_i.
\]
Then $[H_0,A]=-A(D_i-c_iI)$ exactly, and
$\|A(z+D_i)^{-1}Q_n\|\le C$ for $\operatorname{Re}z\ge-1$.
For a hopping term the direct commutator is
$[a_r^\dagger a_s,a_i^2]=-2\mathbf1_{\{r=i\}}a_ia_s$.
The resolvent-commutator coefficient is bounded by
\[
 C\frac{(n_i+1)|w_{ir}-w_{is}|\sqrt{(n_r+1)n_s}}
 {D_i(\mathbf n)^2}\le C\sqrt{n+1},
\]
using the same nonnegative-interaction estimate as in Step~2.
Thus its resolvent identity holds with a remainder bounded by
$C\sqrt{n+1}$. Step~3 consequently gives
\[
 \|(L_{i,2}^--a_i^2)Q_n\|+\|L_{i,2}^+Q_n\|
 \le C\sqrt{n+1}\log(n+2).
\]
For every $\varepsilon>0$, expanding the squared lowering norms and
using $2ab\le\varepsilon a^2+\varepsilon^{-1}b^2$ now gives
\[
 C_H\ge(1-\varepsilon)\sum_iN_i^2
 -C_\varepsilon(\mathsf N+I)\log^2(\mathsf N+2I),
 \qquad C_H\le C(\mathsf N+I)^2.
\]
The quadratic part of $H$ is $\frac12\sum_{i,j}w_{ij}N_iN_j$;
hopping and the remaining terms have sector norm $O(n+1)$. Hence
\[
 \widetilde H\le\frac\Lambda2\sum_iN_i^2+C(\mathsf N+I),
 \qquad C_H\ge b\widetilde H-C_bI\quad(0<b<2/\Lambda),
\]
where the smaller logarithmic error is absorbed using
$\widetilde H\ge c(\mathsf N+I)^2$.
Step~4 still gives $\|B_{\alpha,\theta}Q_n\|\le C_{\theta_*}\sqrt{n+1}$.
The positive average defining $\operatorname{Re}D_\theta$ has mass
$1/2$ and leaves $\widetilde H$ invariant. Together with the gain
bound of Step~5, which is infinitesimal relative to the quadratic
energy, it follows that for every $0<c<1/\Lambda$,
\[
 \operatorname{Re}\langle X,W_\theta LW_\theta^{-1}X\rangle
 \ge c\langle X,\widetilde HX+X\widetilde H\rangle
 -C_{c,\theta_*}\|X\|_2^2,
 \qquad 0\le\theta\le\theta_*<\beta/4.
\]
At $\theta=0$ the same relative-bound argument identifies the operator
and form domains as $D(\mathcal K_2)$ and $D(\mathcal K_2^{1/2})$,
with finite particle matrices as a core.
Set $\theta_*=(\beta/2)(1/2-1/p)$ and choose
$\theta_*/s<c<1/\Lambda$. This is possible under the strict time
condition in the statement. For the weighted compressions in Step~6,
\[
 \frac12\frac{\dd}{\dd t}\|U_M(t)\|_2^2
 \le(\theta_*/s-c)
 \langle U_M,\widetilde HU_M+U_M\widetilde H\rangle
 +C_{c,\theta_*}\|U_M(t)\|_2^2.
\]
Integration and the same form-core limit and Fatou argument give
$\|T_s:L^2(\sigma_\beta)\to L^p(\sigma_\beta)\|
\le\widetilde Z^{1/2-1/p}e^{sC_{c,\theta_*}}<\infty$.
The quadratic form domain embeds compactly into Hilbert--Schmidt
space. Step~7's kernel argument uses only the linear rows and applies
unchanged, giving a simple kernel and a positive gap. Step~8 at any
$s>\beta\Lambda/8$ proves entropy decay and uniqueness among states.
\end{proof}

\begin{remark}
For $k=1$, $U_1=2$, and $\mu_1=-1$, the Hamiltonian is $N^2$.
The linear pair alone has no bounded $L^2$-to-$L^4$ endpoint by
\cref{thm:mixed-polynomial}; the quadratic pair gives a finite
waiting time, and the cubic pair gives smoothing at every positive
time. The theorem treats both enlarged families with full hopping. It does not give the same entropy estimate
for the samplers with only linear jumps in \cite{BeckerRouzeSalzmannBose},
nor a rate uniform in the number of modes.
\end{remark}

\subsection{Finite-rank Coulomb truncations}\label{subsec:coulomb}
We next treat a trapped Coulomb system in two or three dimensions.
The interaction is compressed to a finite Hermite subspace, while the
Hamiltonian and sampler act on the full infinite-dimensional space.
For each fixed particle number and cutoff, we prove a bounded
$L^2$-to-$L^4$ map and exponential entropy decay.

\begin{corollary}[Coulomb sampler with a finite-rank interaction]
\label{cor:coulomb-hc}
Let $d\in\{2,3\}$, $n,M\ge1$, $\beta,\eta>0$, and
$\alpha_{n,i,j}\in\R$.  On $\cH_n=L^2((\R^d)^n)$, with
$x=(x_1,\ldots,x_n)$, define
\[
 H_{0,n}=-\Delta+|x|^2,\qquad
 W_n(x)=\sum_{1\le i<j\le n}\alpha_{n,i,j}w_d(x_i-x_j),
 \qquad w_2(y)=-\log|y|,\quad w_3(y)=|y|^{-1}.
\]
Let $P_M$ project onto the first $M$ one-particle Hermite basis
vectors, ordered by energy with a fixed ordering of degeneracies, and
put
\[
 \Pi_M=P_M^{\otimes n},\qquad
 W_{n,M}=\Pi_MW_n\Pi_M,\qquad
 H_{n,M}=H_{0,n}+W_{n,M},\qquad
 \sigma_\beta=\frac{e^{-\beta H_{n,M}}}{\Tr(e^{-\beta H_{n,M}})}.
\]
Consider the coherent Gibbs sampler with bare jumps
$a_{ij}=(x_{ij}+\partial_{x_{ij}})/\sqrt2$ and
$a_{ij}^\dagger=(x_{ij}-\partial_{x_{ij}})/\sqrt2$,
$1\le i\le n$, $1\le j\le d$, and filter
$\widehat f_\eta(\nu)=e^{-\eta\nu^2-\beta\nu/4}$.
Its conservative KMS-symmetric semigroup $\Phi_t$ has
$\sigma_\beta$ as its unique invariant state, and its nonnegative
standard-form generator has compact resolvent and positive gap.
For its Heisenberg semigroup $T_t$, there are finite constants
$s_{n,M}>0$ and $\alpha_{n,M}>0$ such that
\[
 \|T_{s_{n,M}}:L^2(\sigma_\beta)\to L^4(\sigma_\beta)\|\le1,
 \qquad
 D(\Phi_t(\rho)\|\sigma_\beta)
 \le e^{-\alpha_{n,M}t}D(\rho\|\sigma_\beta)
\]
for every state $\rho$ and $t\ge0$.
The conclusion also holds with any bounded self-adjoint
$W=\Pi_MW\Pi_M$ in place of $W_{n,M}$.
The constants may depend on $n,M,\beta,\eta$, and the interaction;
the estimates need not be uniform in the cutoff.
\end{corollary}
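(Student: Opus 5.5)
Following the introduction, we view the Coulomb sampler as a perturbation of the tensor qOU sampler. We then establish a weighted oscillator estimate of the type in Steps~4--6 of \cref{thm:multimode-BH}, with $\widetilde H=H_{n,M}+c_0I$ and exponents $\theta<\beta/4$. Finally we apply \cref{thm:KMS-endpoint} with $q_0=4$ and $C=1$.

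\emph{Step 1: structure of $H_{n,M}$.} The bounded operator $W=\Pi_MW\Pi_M$ commutes with $\Pi_M$, so $H_{n,M}$ is block diagonal. On $(I-\Pi_M)\cH_n$ it equals $H_{0,n}$. On the finite-dimensional range of $\Pi_M$ it is a finite Hermitian matrix. The operator $W_{n,M}$ is bounded because $w_d$ is relatively form-bounded with respect to $-\Delta$ and Hermite functions lie in the form domain. For the rest we only need $\|W\|<\infty$ and $W=\Pi_MW\Pi_M$, which covers the final sentence of the statement. The energy-bound hypotheses of \cite{BeckerRouzeSalzmannToAppearcmp} follow from those of the oscillator, since $\widetilde H\asymp H_{0,n}$ and the ladder operators are bounded by $H_{0,n}^{1/2}$. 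The Gaussian filter is admissible, so the construction gives a conservative KMS-symmetric semigroup.

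\emph{Step 2: decomposition of the filtered jumps.} Write $a=a_{ij}$ and set $Q=I-\Pi_M$. Split $a=Qa Q+(\Pi_M aQ+Qa\Pi_M)+\Pi_M a\Pi_M$.
\begin{itemize}
\item On $Q\cH_n$, $\operatorname{ad}_{H_{n,M}}$ acts as for the free oscillator. Hence $QaQ$ has the single Bohr frequency $-2$ (in the units where $[H_{0,n},a]=-2a$), and its filtered version is $\widehat f_\eta(-2)\,QaQ$.
\item $\Pi_Ma\Pi_M$ is a finite matrix.
\item The mixed terms are the main difficulty: each contains one unbounded ladder operator and one finite-energy coefficient. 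Because $P_M$ spans finitely many Hermite levels, $aP_M$ and $a^\dagger P_M$ are finite rank with range in a slightly larger level span. The mixed terms are therefore finite rank. Their filtered versions involve $\widehat f_\eta$ at finitely many frequencies on a finite block, so they remain finite rank with bounded norm.
\end{itemize}
Consequently, each weighted gain operator $B_{\alpha,\theta}=e^{\theta\widetilde H}B_\alpha e^{-\theta\widetilde H}$ differs from the free qOU gain $e^{-\beta/4\cdot(\pm2)}\cdots$ by a finite-rank operator with norm bounded uniformly for $|\theta|\le\theta_*$. The $\eta$-Gaussian makes the finitely many filtered values summable. The free part reproduces the qOU weighted structure. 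The drift $D_\theta$ obtained from $C_H$ is likewise $c(N_{\rm tot}+I)$ plus a bounded operator, by the positive-average representation of Step~4 of \cref{thm:multimode-BH}.

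\emph{Step 3: weighted energy inequality.} We aim for
\[
\operatorname{Re}\langle X,W_\theta LW_\theta^{-1}X\rangle\ge\langle X,\mathcal G_\theta X\rangle-C\|X\|_2^2,\qquad 0\le\theta\le\theta_*.
\]
Here $\mathcal G_\theta$ is the corresponding free qOU weighted generator, and $C$ collects the bounded finite-rank corrections from Step~2. For the free tensor qOU sampler, a direct computation on number-diagonal weights yields a contractive exponent curve of the form $\theta(t)=\kappa(1-e^{-ct})$. This is the classical Gaussian hypercontractivity and is the weighted oscillator estimate we need. With the weight $e^{\theta\widetilde H}$ and the bounded error $C$, the Step~6 compression and Fatou argument of \cref{thm:multimode-BH} give
\[
\|T_s:L^2\to L^4\|\le \widetilde Z^{1/4}e^{Cs}\quad\text{at some }s>0.
\]
Compactness from one endpoint (\cref{subsec:spaces-gibbs}) then yields compact resolvent. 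For the kernel, repeat the argument of Step~7 of \cref{thm:multimode-BH}: the commutators with $a_{ij},a^\dagger_{ij}$ force $X$ to be scalar, because these generate an irreducible family. This gives primitivity and a positive gap.

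\emph{Step 4: contractivity.} \cref{lem:gap-strictification} upgrades the endpoint to $\|T_{s_{n,M}}:L^2\to L^4\|\le1$ with
\[
s_{n,M}=s+\lambda^{-1}\bigl(\log C+\tfrac12\log3\bigr).
\]
\cref{thm:KMS-endpoint} with $C=1$ then gives $\alpha_{n,M}=1/(2s_{n,M})$.

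\emph{Main obstacle.} The expected difficulty is Step~3, specifically the control of the mixed terms inside the weighted form. Conjugation by $e^{\theta\widetilde H}$ multiplies them by factors that grow with the unbounded ladder coefficient. We would handle this by their finite-rank structure: both the range and the corange lie in a bounded energy window, so the weights are bounded there. The remaining check is that the Gaussian filter does not reintroduce infinitely many frequencies, which holds because only finitely many Bohr frequencies occur in these blocks.
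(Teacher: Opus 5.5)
Your overall architecture matches the paper's: a free tensor qOU operator plus finite-energy corrections, a weighted $L^2$ oscillator bound converted to $L^4$, then gap, strictification and \cref{thm:KMS-endpoint}. Step~3, however, has a genuine gap.

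Step~2 is correct at the level of the jump operators: $B_{\alpha,\theta}=B^0_{\alpha,\theta}+G_{\alpha,\theta}$, with $G_{\alpha,\theta}$ finite rank and uniformly bounded. The generator is quadratic in the jumps, though. So $B_{\alpha,\theta}XB_{\alpha,\theta}^\dagger$ contains the cross terms $G_{\alpha,\theta}X(B^0_{\alpha,\theta})^\dagger$ and $B^0_{\alpha,\theta}XG_{\alpha,\theta}^\dagger$, in which an unbounded ladder $a_j^\sharp$ acts on the other side of $X$. These are not bounded maps on $\mathscr T_2(\cH_n)$, and their quadratic forms are not bounded either. Take $X=x|\psi\rangle\langle e_m|+y|\phi\rangle\langle e_{m+1}|$ with $|x|^2+|y|^2=1$, unit vectors $\psi,\phi$ in the finite window, and occupation vectors $e_m,e_{m+1}$ differing by one quantum in mode $j$. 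Then $\langle X,GXa_j^\dagger\rangle=\bar xy\sqrt{m+1}\,\langle\psi,G\phi\rangle$, whose real part changes sign with the phase of $\bar xy$. Hence no bounded $C$ can make $\operatorname{Re}\langle X,W_\theta LW_\theta^{-1}X\rangle\ge\langle X,\mathcal G_\theta X\rangle-C\|X\|_2^2$ hold in general.

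Your energy-window argument in the last paragraph controls the conjugation of $G$ by the weight, not this ladder factor. This is also why ``repeat Step~6 of \cref{thm:multimode-BH}'' does not transfer. There the gains were of order $\mathcal K_1$ against losses of order $\mathcal K_3$. Here the free gains and losses are both of order $\mathcal K$, and the cross terms are only $(\mathcal K+I)^{1/2}$-bounded.

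The missing ingredient is the paper's \cref{lem:oscillator-smoothing}, which has three parts.
\begin{enumerate}
\item Prove $\|e^{\theta\mathcal K}Re^{-\theta\mathcal K}X\|_2\le C_R\|(\mathcal K+I)^{1/2}X\|_2$ uniformly for $0\le\theta\le\theta_*$. This is a perturbation of relative form bound zero, not a bounded one.
\item Establish a strict coercivity margin for the conjugated free operator: $\operatorname{Re}\langle X,e^{\theta\mathcal K}L_0e^{-\theta\mathcal K}X\rangle\ge\kappa_*\langle X,\mathcal KX\rangle-C\|X\|_2^2$ with $\kappa_*=c-v\cosh(2\theta_*)>0$. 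In the normalization $\mathcal KX=\mathsf NX+X\mathsf N$ the needed exponent is $\theta_*=\beta/4$, and $c/v=\cosh\beta$ gives $\kappa_*=v(\cosh\beta-\cosh(\beta/2))>0$.
\item Absorb the cross terms into half of this margin by Young's inequality, and run the weight along a subcritical curve such as $\theta(t)=\kappa_*t/2$. Along an optimal, saturating free curve there is no margin left to absorb them.
\end{enumerate}

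The same relative bound gives the two-sided comparison $c_1\|(\mathcal K+I)^{1/2}X\|_2^2\le\cE(X)+c_2\|X\|_2^2\le c_3\|(\mathcal K+I)^{1/2}X\|_2^2$. This shows that finite Hermite matrices form a form core, which your compression--Fatou step presupposes. The extension of the paired identity in your kernel argument presupposes it too.

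With these repairs the rest of your route is sound. Compactness from the endpoint, your direct kernel argument in place of the paper's citation, strictification, and \cref{thm:KMS-endpoint} with $C=1$ give the rate $1/(2s_{n,M})$. This is slightly smaller than the paper's rate, because $\log3>1$.
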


The compressed interaction is well defined: both Coulomb kernels
are locally integrable, and Hermite functions decay rapidly, so
\[
 \langle\varphi_{\boldsymbol\ell},W_{n,M}
 \varphi_{\boldsymbol r}\rangle
 =\sum_{i<j}\alpha_{n,i,j}
 \int_{(\R^d)^n}\overline{\varphi_{\boldsymbol\ell}(x)}
 w_d(x_i-x_j)\varphi_{\boldsymbol r}(x)\,\dd x
 \quad(1\le\ell_i,r_i\le M)
\]
is finite, where $\varphi_{\boldsymbol\ell}
=\varphi_{\ell_1}\otimes\cdots\otimes\varphi_{\ell_n}$.
The resulting matrix is self-adjoint, and
$H_{n,M}=H_{0,n}$ on $(I-\Pi_M)\cH_n$.
This finite-energy support lets us prove the $L^2$-to-$L^4$
bound directly. The form comparison in \cref{prop:finite-core-hc}
alone would not give that bound.

\paragraph{\emph{An oscillator estimate.}}
Changing the Hamiltonian produces terms with one unbounded ladder
operator. The next estimate controls them.
For a general $k$-mode space $\mathcal H^{(k)}=\ell^2(\N_0^k)$, put
$\mathsf N=\sum_{j=1}^kN_j$.  In the Coulomb application,
$k=dn$ and $H_{0,n}=dn+2\mathsf N$ under the Hermite identification.
On $\mathscr T_2(\mathcal H^{(k)})$ define the nonnegative operator
\[
 \mathcal KX=\mathsf NX+X\mathsf N,\qquad
 e^{\theta\mathcal K}X=e^{\theta \mathsf N}Xe^{\theta \mathsf N}.
\]
We first use these formulas on the span $\mathscr F$ of the
oscillator matrix units.  For $r_->r_+>0$, set
$c=(r_-+r_+)/2$, $v=\sqrt{r_-r_+}$, and define the free standard-form
qOU operator by
\[
 L_0X=c\mathcal KX+kr_+X
 -v\sum_{j=1}^k(a_jXa_j^\dagger+a_j^\dagger Xa_j).
\]
This is the sum of $k$ one-mode qOU operators.

\begin{lemma}\label{lem:oscillator-smoothing}
Let $L\ge0$ be self-adjoint on $\mathscr T_2(\mathcal H^{(k)})$,
with $\mathscr F$ a form core contained in $D(L)$, and write
$L=L_0+R$ on $\mathscr F$.  Suppose that $R\mathscr F\subset\mathscr F$
and that, for some $\theta_*>0$,
\[
 \kappa_*:=c-v\cosh(2\theta_*)>0,\qquad
 \sup_{0\le\theta\le\theta_*}
 \|e^{\theta\mathcal K}Re^{-\theta\mathcal K}X\|_2
 \le C_R\|(\mathcal K+I)^{1/2}X\|_2
 \quad(X\in\mathscr F).
\]
Then there is a finite $C_*>0$ such that, with
$s_*=2\theta_*/\kappa_*$,
\begin{equation}\label{eq:oscillator-smoothing}
 \|e^{\theta_*\mathcal K}e^{-s_*L}X\|_2
 \le e^{C_*s_*}\|X\|_2,\qquad X\in\mathscr T_2(\mathcal H^{(k)}).
\end{equation}
\end{lemma}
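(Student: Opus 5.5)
The approach is the weighted-energy argument from Step~6 of \cref{thm:multimode-BH}, applied to the time-dependent weight $e^{\theta(t)\mathcal K}$ with $\theta(t)=\kappa_*t/2$. Then $\theta(s_*)=\theta_*$.

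\emph{Step 1: the conjugated free operator.} On $\mathscr F$ the conjugation rules $e^{\theta\mathsf N}a_je^{-\theta\mathsf N}=e^{-\theta}a_j$ and $e^{\theta\mathsf N}a_j^\dagger e^{-\theta\mathsf N}=e^{\theta}a_j^\dagger$ give
\[
 e^{\theta\mathcal K}L_0e^{-\theta\mathcal K}X
 =c\mathcal KX+kr_+X-v\sum_j\bigl(e^{-2\theta}a_jXa_j^\dagger+e^{2\theta}a_j^\dagger Xa_j\bigr).
\]
The gain terms are to be bounded in the Hilbert--Schmidt form on matrix units. Writing $X=\sum x_{\mathbf n\mathbf m}E_{\mathbf n\mathbf m}$, the term $a_jXa_j^\dagger$ pairs $x_{\mathbf n\mathbf m}$ with $x_{\mathbf n-e_j,\mathbf m-e_j}$ with coefficient $\sqrt{n_jm_j}$. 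By Cauchy--Schwarz and $2\sqrt{n_jm_j}\le n_j+m_j$, one obtains
\[
 \bigl|\langle X,a_jXa_j^\dagger\rangle\bigr|,\ \bigl|\langle X,a_j^\dagger Xa_j\rangle\bigr|
 \le\tfrac12\langle X,(N_j\otimes1+1\otimes N_j+2)X\rangle ,
\]
understood in the obvious left/right sense. The same bound also follows from $\langle X,aXa^\dagger\rangle=\Tr(X^\dagger aXa^\dagger)\le\|aX\|_2\|Xa^\dagger\|_2$. The two gain terms together then contribute at most $v\cosh(2\theta)\langle X,(\mathcal K+2k)X\rangle$ to the real part. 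Hence
\[
 \operatorname{Re}\langle X,e^{\theta\mathcal K}L_0e^{-\theta\mathcal K}X\rangle\ge\kappa_*\langle X,\mathcal KX\rangle-C_0\|X\|_2^2,\qquad 0\le\theta\le\theta_*,
\]
using $\cosh(2\theta)\le\cosh(2\theta_*)$. For the remainder, the hypothesis and Cauchy--Schwarz give
\[
 |\langle X,e^{\theta\mathcal K}Re^{-\theta\mathcal K}X\rangle|\le C_R\|X\|_2\|(\mathcal K+I)^{1/2}X\|_2\le\varepsilon\langle X,\mathcal KX\rangle+C_\varepsilon\|X\|_2^2 .
\]
This yields $\operatorname{Re}\langle X,e^{\theta\mathcal K}Le^{-\theta\mathcal K}X\rangle\ge(\kappa_*-\varepsilon)\langle X,\mathcal KX\rangle-C\|X\|_2^2$.

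\emph{Step 2: differential inequality.} Here the $\varepsilon$-loss is a genuine concern, since the exponent must match exactly: $\dot\theta=\kappa_*/2$ produces $+\kappa_*\langle U,\mathcal KU\rangle$, which the lower bound must cancel. I would fix this by running the argument with the slightly larger constant $\kappa_*'$ associated with some $\theta_*'>\theta_*$. The price is that the hypothesis on $R$ is only assumed up to $\theta_*$. So instead I would keep $\kappa_*$ and absorb the $\varepsilon$ using the fact that $\theta(t)<\theta_*$ for $t<s_*$. Concretely, the gain bound at $\theta$ has the spare margin $v(\cosh2\theta_*-\cosh2\theta)$, which does not vanish uniformly. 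The cleanest fallback is the time-dependent rate $\dot\theta=(\kappa(\theta)-\varepsilon)/2$ with $\kappa(\theta)=c-v\cosh 2\theta$; this reaches $\theta_*$ in a time close to $s_*$. I expect this matching of the constant $s_*=2\theta_*/\kappa_*$ to be the main obstacle, and if needed I would accept an $\varepsilon$-perturbed time and push the extra time onto the $L^2$ contraction. With a matching rate, set $U_M(t)=e^{\theta(t)\mathcal K}e^{-tL_M}\Pi_MX$, where $L_M$ is the form compression to finite total occupation $\le M$, which commutes with $\mathcal K$. Then
\[
 \tfrac12\tfrac{d}{dt}\|U_M\|_2^2=\dot\theta\langle U_M,\mathcal KU_M\rangle-\operatorname{Re}\langle U_M,e^{\theta\mathcal K}L_Me^{-\theta\mathcal K}U_M\rangle\le C_*\|U_M\|_2^2 .
\]
Gronwall then gives $\|U_M(s_*)\|_2\le e^{C_*s_*}\|X\|_2$ uniformly in $M$.

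\emph{Step 3: removing the cutoff.} Since $\mathscr F$ is a form core, the Galerkin/Céa argument of Step~6 of \cref{thm:multimode-BH} gives strong resolvent convergence $L_M\to L$. Hence $e^{-s_*L_M}\Pi_MX\to e^{-s_*L}X$ in $\mathscr T_2$. Fatou's lemma in the occupation basis, with weights $e^{2\theta_*(|\mathbf n|+|\mathbf m|)}$, then yields \eqref{eq:oscillator-smoothing} for every $X$.
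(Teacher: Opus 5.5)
Your overall route is the paper's: the same conjugated qOU coercivity, Young absorption of $R$, the same weight $\theta(t)=\kappa_*t/2$, compressions commuting with $\mathcal K$, and the C\'ea, strong-resolvent and Fatou limit. The one real problem is Step~2. It stems from a factor-two slip, not from a genuine obstruction. In your own identity $\tfrac12\tfrac{\dd}{\dd t}\|U_M\|_2^2=\dot\theta\langle U_M,\mathcal KU_M\rangle-\operatorname{Re}\langle U_M,e^{\theta\mathcal K}L_Me^{-\theta\mathcal K}U_M\rangle$, the rate $\dot\theta=\kappa_*/2$ contributes only $\tfrac{\kappa_*}{2}\langle U_M,\mathcal KU_M\rangle$. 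The term $\kappa_*\langle U_M,\mathcal KU_M\rangle$ you quote belongs to $\tfrac{\dd}{\dd t}\|U_M\|_2^2$, and there the damping is correspondingly $2(\kappa_*-\varepsilon)\langle U_M,\mathcal KU_M\rangle$. Hence your Step~1 bound with any $\varepsilon\le\kappa_*/2$ already closes the Gronwall estimate at exactly $s_*=2\theta_*/\kappa_*$, and nothing needs to be matched. This is precisely the paper's choice. It spends half of $\kappa_*$ on $R$, proving $\operatorname{Re}\langle X,e^{\theta\mathcal K}Le^{-\theta\mathcal K}X\rangle\ge\tfrac{\kappa_*}{2}\langle X,\mathcal KX\rangle-C_*\|X\|_2^2$ with $C_*=kv\cosh(2\theta_*)+C_R+C_R^2/(2\kappa_*)$. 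That halving is exactly why $s_*$ carries the factor two.

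Because you read this as an obstacle, the proposal never commits to a rate, and the fallbacks you list are not all sound. With a constant rate $(\kappa_*-\varepsilon)/2$, the weight reaches $\theta_*$ only at time $2\theta_*/(\kappa_*-\varepsilon)>s_*$. Surplus time cannot be pushed onto the $L^2$ contraction: you can prepend $e^{-rL}$ only when the smoothing time is \emph{shorter} than $s_*$. The $\theta$-dependent rate $(\kappa(\theta)-\varepsilon)/2$ does arrive before $s_*$ for small $\varepsilon$, since $\kappa(\theta)>\kappa_*$ on $[0,\theta_*)$. You did not check this, and it is unnecessary anyway: take $\dot\theta=\kappa_*/2$ and $\varepsilon=\kappa_*/2$, and the argument is complete. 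There is also a smaller slip in Step~1. The alternative bound $|\Tr(X^\dagger aXa^\dagger)|\le\|aX\|_2\|Xa^\dagger\|_2$ is false: in one mode, $X=E_{00}+\delta E_{11}$ with $0<\delta<1$ gives $\delta$ on the left and $\delta^2$ on the right. The correct pairing is $\langle X,aXa^\dagger\rangle=\langle a^\dagger X,Xa^\dagger\rangle$, as in the paper. Your matrix-unit estimate is correct.
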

\begin{proof}
Put $R_\theta=e^{\theta\mathcal K}Re^{-\theta\mathcal K}$ on
$\mathscr F$.  The conjugated free operator is
\[
 e^{\theta\mathcal K}L_0e^{-\theta\mathcal K}X
 =c\mathcal KX+kr_+X
 -v\sum_j\bigl(e^{-2\theta}a_jXa_j^\dagger
              +e^{2\theta}a_j^\dagger Xa_j\bigr).
\]
The two gain maps are Hilbert--Schmidt adjoints.  Their common
quadratic-form estimate follows from
\begin{align*}
 2\left|\sum_j\langle X,a_jXa_j^\dagger\rangle\right|
 &=2\left|\sum_j\langle a_j^\dagger X,Xa_j^\dagger\rangle\right|\le\sum_j\bigl(\|a_j^\dagger X\|_2^2+\|Xa_j^\dagger\|_2^2\bigr)
 =\langle X,\mathcal KX\rangle+k\|X\|_2^2.
\end{align*}
Consequently
\begin{align*}
 \operatorname{Re}\langle X,e^{\theta\mathcal K}L_0e^{-\theta\mathcal K}X\rangle
 \ge{}&(c-v\cosh(2\theta))\langle X,\mathcal KX\rangle+k(r_+-v\cosh(2\theta))\|X\|_2^2.
\end{align*}
Hence, uniformly for $0\le\theta\le\theta_*$,
\[
 \operatorname{Re}\langle X,e^{\theta\mathcal K}L_0
 e^{-\theta\mathcal K}X\rangle
 \ge \kappa_*\langle X,\mathcal KX\rangle
 -kv\cosh(2\theta_*)\|X\|_2^2.
\]
For the perturbation, Cauchy--Schwarz and Young's inequality give
\begin{align*}
 |\langle X,R_\theta X\rangle|
 &\le C_R\|X\|_2\|(\mathcal K+I)^{1/2}X\|_2\\
 &\le C_R\|X\|_2\langle X,\mathcal KX\rangle^{1/2}+C_R\|X\|_2^2\\
 &\le\frac{\kappa_*}{2}\langle X,\mathcal KX\rangle
    +\left(C_R+\frac{C_R^2}{2\kappa_*}\right)\|X\|_2^2.
\end{align*}
Thus one may take
$C_*=kv\cosh(2\theta_*)+C_R+C_R^2/(2\kappa_*)$, or any larger
positive constant, to obtain
\begin{equation}\label{eq:weighted-oscillator-coercivity}
 \operatorname{Re}\langle X,e^{\theta\mathcal K}Le^{-\theta\mathcal K}X\rangle
 \ge \frac{\kappa_*}{2}\langle X,\mathcal KX\rangle-C_*\|X\|_2^2.
\end{equation}

We justify the evolution estimate by finite-dimensional approximation.
Let $\mathsf P_J=\mathbf1_{[0,J]}(\mathcal K)$ and let
$L_J=\mathsf P_JL\mathsf P_J$ on its finite-dimensional range.
For $X_J(t)=e^{-tL_J}\mathsf P_JX$ and
$\theta(t)=\kappa_*t/2$, set $U_J(t)=e^{\theta(t)\mathcal K}X_J(t)$.
Since $\mathsf P_J$ commutes with $\mathcal K$,
\cref{eq:weighted-oscillator-coercivity} applies to the compressed
operator with the same constants.  For $0\le t\le s_*$ it yields
\[
 \frac12\frac{\dd}{\dd t}\|U_J(t)\|_2^2
 =\theta'(t)\langle U_J(t),\mathcal KU_J(t)\rangle
 -\operatorname{Re}\langle U_J(t),
 e^{\theta(t)\mathcal K}L_Je^{-\theta(t)\mathcal K}U_J(t)\rangle
 \le C_*\|U_J(t)\|_2^2.
\]
Multiplying by $e^{-2C_*t}$ and integrating gives
\[
 \|U_J(t)\|_2^2\le e^{2C_*t}\|\mathsf P_JX\|_2^2
 \le e^{2C_*t}\|X\|_2^2,\qquad0\le t\le s_*.
\]
To pass to the limit, put $u=(L+I)^{-1}X$ and
$u_J=(L_J+I)^{-1}\mathsf P_JX$.
The two variational equations imply orthogonality of $u-u_J$ to
$\operatorname{Ran}\mathsf P_J$ for the inner product of the form
$\cE+\|\cdot\|_2^2$.  Therefore
\[
 \|u-u_J\|_{\cE}
 \le\inf_{v\in\operatorname{Ran}\mathsf P_J}\|u-v\|_{\cE}
 \longrightarrow0,
\]
because the increasing ranges exhaust the form core $\mathscr F$.
Strong resolvent convergence gives
$e^{-s_*L_J}\mathsf P_JX\to e^{-s_*L}X$ in Hilbert--Schmidt norm
\cite[Theorem~III.4.8]{engel1999one}.
Write the respective matrix coefficients as $y^{(J)}_{\mathbf m\mathbf n}$
and $y_{\mathbf m\mathbf n}$.  Since $\theta(s_*)=\theta_*$, Fatou's lemma yields
\[
 \sum_{\mathbf m,\mathbf n}e^{2\theta_*(|\mathbf m|+|\mathbf n|)}
 |y_{\mathbf m\mathbf n}|^2
 \le\liminf_J\sum_{\mathbf m,\mathbf n}e^{2\theta_*(|\mathbf m|+|\mathbf n|)}
 |y^{(J)}_{\mathbf m\mathbf n}|^2
 \le e^{2C_*s_*}\|X\|_2^2.
\]
The finite left-hand side shows that the output belongs to
$D(e^{\theta_*\mathcal K})$, and its square root proves
\eqref{eq:oscillator-smoothing}.
\end{proof}

\begin{proof}[Proof of \cref{cor:coulomb-hc}]
Write $H=H_{0,n}+W$, $\sigma=\sigma_\beta(H)$, and
$\sigma_0=\sigma_\beta(H_{0,n})$.
Choose $c_0$ so that $\widetilde H=H+c_0I\ge I$. Outside the
finite oscillator block, $\widetilde H$ agrees with $H_{0,n}+c_0I$.
On the finite block all powers of $\widetilde H$ are bounded.
Since a coordinate ladder changes the oscillator number by one,
each bare jump $A$ satisfies
\[
 \|A\widetilde H^{-1/2}\|
 +\|\widetilde H^{1/2}A\widetilde H^{-1}\|<\infty.
\]
The finite Hermite span is a common invariant core, and
$\Tr(e^{-b\widetilde H})<\infty$ for every $b>0$.
These bounds verify the energy and domain assumptions of
\cite[Condition~A and Proposition~2.4]{BeckerRouzeSalzmannToAppearcmp}.
The Gaussian filter satisfies the required KMS and boundedness
conditions. The construction in that paper therefore gives a
conservative KMS-symmetric sampler.
Its standard-form operator $L\ge0$ is obtained by closing the form on
$\mathscr F+\C\sqrt\sigma$.  Here the finite $H$-spectral matrices
span $\mathscr F$, since $W$ acts inside a fixed finite oscillator
block.  We will show that $\mathscr F$ alone is a form core.

\emph{Step 1: finite-energy changes of the coefficients.}
Let $L^\ell$ and $L_0^\ell$ denote the filtered jumps for $H$ and
$H_{0,n}$, respectively, with $\ell$ ranging over the $2k$ bare
ladders.  There is a finite oscillator spectral projection $Q$ such
that
\begin{equation}\label{eq:coulomb-finite-coefficients}
 L^\ell=L_0^\ell+F_\ell,\qquad F_\ell=QF_\ell Q.
\end{equation}
Choose $J_0$ with $\Pi_M\le\mathbf1_{[0,J_0]}(\mathsf N)$ and
take $Q=\mathbf1_{[0,J_0+1]}(\mathsf N)$.
For every bare coordinate ladder $A$,
\[
 (I-Q)A\Pi_M=0,\qquad \Pi_MA(I-Q)=0,
\]
because it changes total occupation by one.  On the complement of
$\Pi_M$, $H$ and $H_{0,n}$ have the same spectral decomposition.
All changed matrix elements of the filtered jump therefore have
both source and target in $Q\cH_n$:
\[
 (I-Q)(L^\ell-L_0^\ell)=0,
 \qquad (L^\ell-L_0^\ell)(I-Q)=0.
\]
This proves \eqref{eq:coulomb-finite-coefficients}.  Diagonalizing
$H$ inside $\Pi_M\cH_n$ does not enlarge these spaces.
We enlarge $Q$ below when further ladder actions occur.

Write the state generator as
$\cL(\rho)=\sum_\ell L^\ell\rho (L^\ell)^\dagger
+M_H\rho+\rho M_H^\dagger$.
We also need the finite-energy property for the drift, including
its correction Hamiltonian.  For the bare ladder $A^\ell$, let
$A^\ell_{\nu,H}$ denote its frequency component relative to $H$.
Its free frequency is $\nu_\ell\in\{-2,2\}$.  The same support argument gives, for each frequency,
\[
 A^\ell_{\nu,H}=\mathbf1_{\{\nu=\nu_\ell\}}A^\ell+G^\ell_\nu,
 \qquad G^\ell_\nu=QG^\ell_\nu Q,
\]
with only finitely many nonzero $G^\ell_\nu$.
Equation \eqref{eq:cross-frequency-generator} at $\sigma_E=\infty$ reads
\[
 M_H=-\sum_{\ell,\nu,\mu}
 \frac{\overline{\widehat f_\eta(\nu)}\widehat f_\eta(\mu)
       e^{\beta(\nu-\mu)/4}}{2\cosh(\beta(\nu-\mu)/4)}
 (A^\ell_{\nu,H})^\dagger A^\ell_{\mu,H}.
\]
Subtracting the free expression leaves the products
\begin{align*}
 (A^\ell_{\nu,H})^\dagger A^\ell_{\mu,H}
 &-\mathbf1_{\{\nu=\nu_\ell\}}\mathbf1_{\{\mu=\nu_\ell\}}
       (A^\ell)^\dagger A^\ell=\mathbf1_{\{\nu=\nu_\ell\}}(A^\ell)^\dagger G^\ell_\mu
   +\mathbf1_{\{\mu=\nu_\ell\}}(G^\ell_\nu)^\dagger A^\ell
   +(G^\ell_\nu)^\dagger G^\ell_\mu.
\end{align*}
Every term has finite-energy support: multiplication by a ladder
adds at most one shell to the support of $G^\ell_\nu$.
After enlarging $Q$ by that shell, we have
$M_H=M_0+F_0$ with $F_0=QF_0Q$.

Put $Z=\Tr(e^{-\beta H})$ and $Z_0=\Tr(e^{-\beta H_{0,n}})$.
Since both Hamiltonians are reduced by $\Pi_M$ and agree outside it,
for every real $a$,
\[
 (I-\Pi_M)\sigma^a\sigma_0^{-a}(I-\Pi_M)
 =(Z_0/Z)^a(I-\Pi_M).
\]
Inside the block the same product is a finite invertible matrix.
Conjugating to standard form still changes only finite-energy
coefficients:
\[
 C_\ell:=\sigma^{-1/4}L^\ell\sigma^{1/4}
 =C_\ell^0+\widetilde F_\ell,\qquad
 D_H:=\sigma^{-1/4}M_H\sigma^{1/4}
 =M_0+\widetilde F_0.
\]
Here every $\widetilde F_\ell$ has finite-energy support; the
normalizing scalar cancels in each conjugation.
KMS symmetry identifies the standard-form operator with
\[
 LX=-\sum_\ell C_\ell XC_\ell^\dagger-D_HX-XD_H^\dagger.
\]
For the free operator the two rates are
$r_\pm=|\widehat f_\eta(\pm2)|^2$, so
$r_+=e^{-2\beta}r_-$ and the formula for $L_0$ above applies.
In particular, subtraction gives the explicit perturbation
\begin{align*}
 RX={}&-\sum_\ell\bigl(C_\ell^0X\widetilde F_\ell^\dagger
 +\widetilde F_\ell X(C_\ell^0)^\dagger
 +\widetilde F_\ell X\widetilde F_\ell^\dagger\bigr)
 -\widetilde F_0X-X\widetilde F_0^\dagger.
\end{align*}
The terms with two $\widetilde F$ factors and the two drift
terms have bounded coefficients.  Each cross
term contains one free ladder and one finite-energy coefficient, so
the only unbounded maps to estimate have the form
\begin{equation}\label{eq:coulomb-mixed-maps}
 X\longmapsto F X a_j^\sharp,\qquad
 X\longmapsto a_j^\sharp X F,
 \qquad F=QFQ,\qquad a_j^\sharp\in\{a_j,a_j^\dagger\}.
\end{equation}

\emph{Step 2: the weighted perturbation estimate.}
Choose $J_Q$ so that $Q\le\mathbf1_{[0,J_Q]}(\mathsf N)$.
For $F=QFQ$ and $\theta\ge0$, spectral calculus gives
\[
 \|e^{\pm\theta \mathsf N}Fe^{\mp\theta \mathsf N}\|_\infty
 \le e^{\theta J_Q}\|F\|_\infty.
\]
These factors are uniformly bounded for
$[0,\beta/4]$.  For the ladder factors one has
\[
 e^{\theta \mathsf N}a_je^{-\theta \mathsf N}=e^{-\theta}a_j,
 \qquad e^{\theta \mathsf N}a_j^\dagger e^{-\theta \mathsf N}
 =e^\theta a_j^\dagger.
\]
For the left multiplications, the ladder relations give
\[
 \|a_jX\|_2^2=\Tr(X^\dagger N_jX),\qquad
 \|a_j^\dagger X\|_2^2=\Tr(X^\dagger(N_j+I)X).
\]
Right multiplication yields the corresponding terms with $XN_j$.
Since $N_j\le \mathsf N$, each norm is at most
$\|(\mathcal K+I)^{1/2}X\|_2$, and hence
\[
 \|a_j^\sharp X\|_2+\|Xa_j^\sharp\|_2
 \le2\|(\mathcal K+I)^{1/2}X\|_2.
\]
For example, with $F_\theta=e^{\theta \mathsf N}Fe^{-\theta \mathsf N}$,
\[
 \|e^{\theta\mathcal K}(F(e^{-\theta\mathcal K}X)a_j^\sharp)\|_2
 \le e^{|\theta|}\|F_\theta\|_\infty
 \|(\mathcal K+I)^{1/2}X\|_2.
\]
Summing the finitely many terms gives
\[
 \sup_{0\le\theta\le\beta/4}
 \|e^{\theta\mathcal K}Re^{-\theta\mathcal K}X\|_2
 \le C_R\|(\mathcal K+I)^{1/2}X\|_2.
\]
At $\theta=0$, the lower and upper free estimates give
\[
 (c-v)\langle X,\mathcal KX\rangle-kv\|X\|_2^2
 \le\langle X,L_0X\rangle
 \le(c+v)\langle X,\mathcal KX\rangle+k(r_++v)\|X\|_2^2.
\]
The bound on $R$ implies, for any $\varepsilon>0$,
\[
 |\langle X,RX\rangle|
 \le\varepsilon\langle X,\mathcal KX\rangle
   +\left(C_R+\frac{C_R^2}{4\varepsilon}\right)\|X\|_2^2.
\]
Choose $\varepsilon=(c-v)/2>0$ and absorb this term into the lower
bound.  After adding a sufficiently large multiple of $\|X\|_2^2$,
there are $c_1,c_2,c_3>0$ such that
\[
 c_1\|(\mathcal K+I)^{1/2}X\|_2^2
 \le\cE(X)+c_2\|X\|_2^2
 \le c_3\|(\mathcal K+I)^{1/2}X\|_2^2.
\]
Hence the closure on $\mathscr F$ has form domain
$D(\mathcal K^{1/2})$.  The Gibbs vector belongs even to
$D(\mathcal K)$: outside the finite block its diagonal coefficients
are $Z^{-1/2}e^{-\beta(k+2|\mathbf n|)/2}$, so
\[
 \|(\mathcal K+I)\sqrt\sigma\|_2^2
 \le C+C\sum_{\mathbf n\in\N_0^k}(1+|\mathbf n|)^2e^{-2\beta|\mathbf n|}
 <\infty.
\]
Every term in the generator therefore maps $\sqrt\sigma$ into
Hilbert--Schmidt space, and invariance
gives $L\sqrt\sigma=0$ in this coefficient formula.  Thus the closed
form on $D(\mathcal K^{1/2})$ also annihilates $\sqrt\sigma$.
For $\mathsf P_J=\mathbf1_{[0,J]}(\mathcal K)$, this gives
\[
 \|(\mathcal K+I)^{1/2}(I-\mathsf P_J)\sqrt\sigma\|_2\longrightarrow0,
\]
so adding $\sqrt\sigma$ does not change the closed form.  This proves that
$\mathscr F$ is a form core for the constructed operator $L$;
the coefficient formula also gives $\mathscr F\subset D(L)$.
The parameters in \cref{lem:oscillator-smoothing} can now be chosen as
\[
 \theta_*=\beta/4,\qquad
 \kappa_*=v\bigl(\cosh\beta-\cosh(\beta/2)\bigr)>0,
 \qquad s_*=\frac{\beta}{2\kappa_*},
\]
because $c/v=\cosh\beta$.  The lemma gives
\eqref{eq:oscillator-smoothing}.

\emph{Step 3: the endpoint and entropy decay.}
The operator
$B_*=\sigma^{-1/8}e^{-\beta \mathsf N/4}$
has a bounded extension.  Since $H_{0,n}=k+2\mathsf N$,
its two blocks are
\[
 B_*=\Pi_MB_*\Pi_M+Z^{1/8}e^{\beta k/8}(I-\Pi_M).
\]
The first block is a finite matrix and the second is scalar.
For $Y\in D(e^{(\beta/4)\mathcal K})$, first on finite matrices and
then by closure,
\begin{align*}
 \|Y\|_{4\leftarrow2,\sigma}
 &=\|\sigma^{-1/8}Y\sigma^{-1/8}\|_4
 =\|B_*(e^{(\beta/4)\mathcal K}Y)B_*^\dagger\|_4\\
 &\le\|B_*\|_\infty^2\|e^{(\beta/4)\mathcal K}Y\|_4
 \le\|B_*\|_\infty^2\|e^{(\beta/4)\mathcal K}Y\|_2.
\end{align*}
Putting $Y=e^{-s_*L}X$ and using the oscillator estimate gives
\[
 \|e^{-s_*L}X\|_{4\leftarrow2,\sigma}
 \le\|B_*\|_\infty^2e^{C_*s_*}\|X\|_2.
\]
It follows that $\|T_{s_*}:L^2(\sigma)\to L^4(\sigma)\|\le C$,
where $C=\max\{1,\|B_*\|_\infty^2e^{C_*s_*}\}<\infty$.
The form comparison also proves compactness.  For
$\mathsf P_J=\mathbf1_{[0,J]}(\mathcal K)$ and integer $J\ge0$,
\[
 \|(I-\mathsf P_J)X\|_2^2
 \le\frac{\|(\mathcal K+I)^{1/2}X\|_2^2}{J+1}
 \le\frac{\cE(X)+c_2\|X\|_2^2}{c_1(J+1)}.
\]
Since $\mathsf P_J$ has finite rank, the form unit ball is relatively
compact in Hilbert--Schmidt space.  This is the compact embedding
criterion for compact resolvent
\cite[Sections~10.3 and 12.1]{Schmdgen2012}.
The nonvanishing-filter kernel argument for finite-rank perturbations
in \cite[Section~3.3]{BeckerRouzeSalzmannCoulomb2026} gives
$\ker L=\C\sqrt\sigma$.  Hence $\lambda=\operatorname{gap}(L)>0$.
By \cref{lem:gap-strictification}, one may take
\[
 s_{n,M}=s_*+\frac{\log C+\frac12\log3}{\lambda}.
\]
Finally \cref{thm:KMS-endpoint} gives the stated entropy estimate,
for example with
$\alpha_{n,M}=(2s_*+(2\log C+1)/\lambda)^{-1}>0$.
\end{proof}

\subsection{A bounded-rate obstruction}
We show that bounded escape rates can prevent hypercontractivity
even when the Gibbs state is strongly confining and the sampler has a
positive spectral gap.

\begin{proposition}
\label{prop:holding-obstruction}
Let $T_t$ be a conservative continuous-time Markov semigroup on a countable
set $\Omega$ with faithful invariant probability $\pi$.  For distinct $x,y\in\Omega$, let $q(x,y)\ge0$ be the transition rate
from $x$ to $y$, and put $q(x)=\sum_{y\ne x}q(x,y)$, the total rate
out of $x$.  Here $L^p(\pi)$ has norm
$\|f\|_{p,\pi}=(\sum_{x\in\Omega}\pi(x)|f(x)|^p)^{1/p}$.  If there are points $x_k$ such
that
\[
 \pi(x_k)\longrightarrow0,
 \qquad \sup_k q(x_k)\le\Lambda<\infty,
\]
then for every $p>2$ and every finite $t>0$,
\[
 \|T_t:L^2(\pi)\to L^p(\pi)\|=\infty.
\]
\end{proposition}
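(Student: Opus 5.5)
We test the semigroup on normalized indicator functions, exactly as in the no-jump obstruction of Step~3 in the proof of \cref{thm:mixed-polynomial}. For each $k$ put $f_k=\pi(x_k)^{-1/2}\mathbf 1_{\{x_k\}}$, so that $\|f_k\|_{2,\pi}=1$. The aim is the lower bound
\[
 \|T_tf_k\|_{p,\pi}\ge e^{-\Lambda t}\,\pi(x_k)^{1/p-1/2},
\]
whose right-hand side tends to infinity as $k\to\infty$ because $1/p-1/2<0$ and $\pi(x_k)\to0$. Since the bound holds for every $k$ at the fixed time $t$, the operator norm is infinite.

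To obtain the bound, first evaluate $T_tf_k$ at the single point $x_k$ and drop all other points:
\[
 \|T_tf_k\|_{p,\pi}^p\ge\pi(x_k)\,|T_tf_k(x_k)|^p,
 \qquad T_tf_k(x_k)=\pi(x_k)^{-1/2}P_t(x_k,x_k).
\]
Here $P_t(x,y)$ is the transition kernel, and $T_tf_k(x_k)$ is the expectation of $f_k$ started from $x_k$. The key probabilistic input is the holding-time estimate $P_t(x_k,x_k)\ge e^{-q(x_k)t}\ge e^{-\Lambda t}$. The chain started at $x_k$ stays there for an exponential time of rate $q(x_k)$, so the probability of no jump before time $t$ is $e^{-q(x_k)t}$, and this event is contained in $\{X_t=x_k\}$. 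Combining gives
\[
 \|T_tf_k\|_{p,\pi}\ge\pi(x_k)^{1/p}\pi(x_k)^{-1/2}e^{-\Lambda t},
\]
which is the claimed lower bound.

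The only point needing care is justifying $P_t(x,x)\ge e^{-q(x)t}$ for a general conservative semigroup on a countable set, possibly with unbounded rates elsewhere and without assuming the process is given by a jump chain. I would argue analytically. The semigroup is positive and conservative, and its generator acts on $\mathbf 1_{\{x\}}$ with diagonal entry $-q(x)$. Hence $u(t)=P_t(x,x)$ satisfies
\[
 u'(t)=\sum_y P_t(x,y)\,Q(y,x)\ge -q(x)\,u(t),
\]
because the off-diagonal rates are nonnegative. This is the backward or forward Kolmogorov equation, valid for the minimal and for any conservative solution; at least the inequality $P_t(x,x)\ge e^{-q(x)t}$ holds for the minimal process, and every solution dominates the minimal one. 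Grönwall then gives $u(t)\ge e^{-q(x)t}$. Equivalently, I would cite the standard first-jump decomposition, in which the term with no jump before $t$ contributes exactly $e^{-q(x)t}$. Everything else is a direct computation.
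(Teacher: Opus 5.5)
Your proposal is correct and follows the paper's proof: you test on the normalized indicators $f_k=\pi(x_k)^{-1/2}\mathbf 1_{\{x_k\}}$, bound $T_tf_k(x_k)$ below by the no-jump probability $e^{-\Lambda t}$, and let $\pi(x_k)\to0$. Your extra justification of $P_t(x,x)\ge e^{-q(x)t}$, via the Kolmogorov inequality for the minimal process and the fact that every $Q$-process dominates it, fills in a step the paper only asserts.
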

\begin{proof}
Let $\mathbf1_{\{x_k\}}$ be the indicator of the singleton $\{x_k\}$
and set $f_k=\pi(x_k)^{-1/2}\mathbf1_{\{x_k\}}$, so
$\|f_k\|_{2,\pi}=1$.  The event that the chain makes no jump before time
$t$ has probability at least $e^{-\Lambda t}$ when it starts from
$x_k$.  Hence
\[
 (T_tf_k)(x_k)\ge e^{-\Lambda t}\pi(x_k)^{-1/2}
\]
and therefore $\|T_tf_k\|_{p,\pi}
 \ge e^{-\Lambda t}\pi(x_k)^{1/p-1/2}\longrightarrow\infty.$
\end{proof}

Consider now the one-mode periodic charging model on $\ell^2(\Z)$,
with charge basis $(e_m)_{m\in\Z}$.  Define the charge operator
$\mathsf n e_m=me_m$ and the bilateral shift $Se_m=e_{m+1}$.  Let
$H=h(\mathsf n)$, where $h$ is a nonconstant even polynomial
($h(-m)=h(m)$) with positive leading coefficient, and use bare jumps
$S,S^\dagger,\mathsf n$.  The charge operator $\mathsf n$ is distinct
from the particle count $n$ of the preceding application.
The charge row alone generates the conservative dephasing semigroup
\[
 (\Phi_t^{\rm charge}(\rho))_{mk}
 =e^{-t|\fhat(0)|^2(m-k)^2/2}\rho_{mk}.
\]
The two filtered shifts are bounded, and their loss operators are
diagonal, so the correction Hamiltonian is zero. Their bounded GKLS
generator perturbs this dephasing generator. The bounded perturbation
product formula gives a conservative completely positive semigroup;
the KMS pairing of the shifts makes the Gibbs state invariant.
For a diagonal observable $g(\mathsf n)$, with $g:\Z\to\C$, the charge
row vanishes.  The invariant diagonal sector is the birth--death chain
with generator $Q$ given by
\[
 (Qg)(m)=r_m^+\bigl(g(m+1)-g(m)\bigr)
       +r_m^-\bigl(g(m-1)-g(m)\bigr),\qquad
 r_m^\pm=|\fhat(h(m\pm1)-h(m))|^2.
\]
The filter \eqref{eq:filter} gives $0\le r_m^\pm\le1$.
Its invariant probability is
\[
 Z_{\rm ch}=\sum_{m\in\Z}e^{-\beta h(m)},\qquad
 \pi_m=Z_{\rm ch}^{-1}e^{-\beta h(m)},\qquad
 \sigma_{\rm ch}=\sum_{m\in\Z}\pi_m|e_m\rangle\langle e_m|.
\]
Since diagonal observables satisfy
$\|g(\mathsf n)\|_{p,\sigma_{\rm ch}}=\|g\|_{p,\pi}$, the classical
norm bound is a lower bound for the quantum norm.  One has $\pi_m\to0$ as
$|m|\to\infty$, while the holding rates satisfy
$r_m^++r_m^-\le2$.  Thus \cref{prop:holding-obstruction} shows that
this invariant diagonal sector, and hence the full quantum semigroup, has
no finite-time $L^2$-to-$L^p$ endpoint for any $p>2$.  This gives a
periodic-charge counterpart of the $m>d_*$ obstruction above.  The sampler nevertheless has a positive gap. We prove this
separately on the diagonal and off-diagonal sectors.
After the unitary map $g\mapsto(\sqrt{\pi_m}g(m))_m$, the negative
diagonal generator is the Jacobi operator with diagonal
$r_m^++r_m^-$ and off-diagonal entries
$-\sqrt{r_m^+r_{m+1}^-}$.
Since $h$ is an even confining polynomial,
\[
 r_m^++r_m^-\longrightarrow1,\qquad
 \sqrt{r_m^+r_{m+1}^-}\longrightarrow0
 \qquad(|m|\to\infty).
\]
Thus the transformed operator is $I+K$, where $K$ is compact:
its three diagonals vanish at infinity and are norm limits of their
finite truncations.  In particular its essential spectrum is $\{1\}$.
The reversible classical form is
\[
 \langle g,-Qg\rangle_{L^2(\pi)}
 =\sum_{m\in\Z}\pi_mr_m^+|g(m+1)-g(m)|^2.
\]
Every coefficient is positive.  Thus zero energy forces
$g(m+1)=g(m)$ for every $m$, so $g$ is constant.  After the unitary
change of variables the kernel is $\C(\sqrt{\pi_m})_m$.
Since zero lies outside the essential spectrum $\{1\}$, it is an
isolated simple eigenvalue, and
\[
 \lambda_{\rm diag}=\inf(\operatorname{spec}(-Q)\setminus\{0\})>0.
\]

The diagonal and off-diagonal Hilbert--Schmidt subspaces reduce the
standard-form generator: each shift gain moves the two charge indices
by the same amount, and the loss operators are diagonal.
The charge row contributes the nonnegative form
\[
 \cE_{\rm charge}(Y)=\frac{|\fhat(0)|^2}{2}
 \sum_{m,k\in\Z}(m-k)^2|Y_{mk}|^2
 \ge\frac{e^{-1/2}}2\|Y_{\rm off}\|_2^2.
\]
The other rows have a nonnegative total form.  Consequently the full
sampler is primitive and has gap at least
$\min\{\lambda_{\rm diag},e^{-1/2}/2\}>0$.
Thus this model has a positive gap but no bounded $L^2$-to-$L^p$
map at any finite time, for any $p>2$.

\bibliographystyle{unsrtnat}
\bibliography{refHC}

\end{document}